\documentclass{article}

\usepackage[utf8]{inputenc}

\usepackage[margin=1in,a4paper]{geometry}
\usepackage{amsmath,amsthm,amssymb}
\usepackage{mathtools}
\usepackage{thmtools}
\usepackage{enumitem}
\usepackage{hyperref}
\usepackage{authblk}
\usepackage{orcidlink}

\setlist[itemize]{noitemsep, nolistsep,topsep=0pt}
\setlist[enumerate]{noitemsep, nolistsep,topsep=0pt}

\usepackage{graphicx}
\usepackage[footnotesize]{caption}
\usepackage{subcaption}
\usepackage{float}
\usepackage{xcolor}

\usepackage{microtype}

\theoremstyle{plain}
\newtheorem{theorem}{Theorem}[section]
\newtheorem{lemma}[theorem]{Lemma}
\newtheorem{proposition}[theorem]{Proposition}
\newtheorem{corollary}[theorem]{Corollary}
\newtheorem{observation}[theorem]{Observation}

\theoremstyle{definition}
\newtheorem{definition}[theorem]{Definition}

\theoremstyle{remark}
\newtheorem{remark}[theorem]{Remark}

\usepackage[capitalise, nameinlink]{cleveref}

\crefname{section}{Section}{Sections}
\crefname{figure}{Figure}{Figures}
\crefname{table}{Table}{Tables}
\crefname{lemma}{Lemma}{Lemmas}
\crefname{theorem}{Theorem}{Theorems}
\crefname{corollary}{Corollary}{Corollaries}
\crefname{observation}{Observation}{Observations}
\crefname{proposition}{Proposition}{Propositions}
\crefname{conjecture}{Conjecture}{Conjectures}
\crefname{example}{Example}{Examples}
\crefname{definition}{Definition}{Definitions}
\crefname{remark}{Remark}{Remarks}
\crefname{algocf}{Algorithm}{Algorithms}

\usepackage{tikz}
\usetikzlibrary{arrows.meta}
\tikzstyle{node}=[circle, draw=black, fill=black, scale=0.25]
\tikzstyle{edge}=[]
\tikzstyle{dashed_edge}=[dashed]
\tikzstyle{arc}=[-{Latex[scale=1.2]}]
\tikzstyle{dashed_arc}=[-{Latex[scale=1.2]}, dashed]
\tikzstyle{none}=[]
\pgfdeclarelayer{nodelayer}
\pgfdeclarelayer{edgelayer}
\pgfsetlayers{main,edgelayer,nodelayer}

\usepackage[linesnumbered, ruled, vlined]{algorithm2e}
\SetKwInput{Input}{Input} 
\SetKwInput{Output}{Output}
\SetKwProg{Init}{initialize}{}{}
\SetKwProg{Proc}{procedure}{}{}
\SetKw{KwInit}{initialize}
\SetKw{KwOr}{or}
\SetKwProg{ElseIf}{else if}{ then}{}
\SetKwProg{ForEach}{for each}{ do}{}
\SetKwProg{Ifif}{if}{ then}{}

\SetCommentSty{mycommfont}

\newcommand{\Oh}{\mathcal{O}}
\newcommand{\N}{N}
\newcommand{\T}{T}

\newcommand{\X}{\mathcal{X}}

\newcommand{\Sub}{\mathcal{S}}
\newcommand{\Y}{\mathcal{Y}}
\newcommand{\Z}{\mathcal{Z}}
\newcommand{\gen}{\mathrm{gen}}

\newcommand{\class}{\mathcal{E}}

\makeatletter
\def\titlefootnote{\xdef\@thefnmark{}\@footnotetext}
\makeatother

\makeatletter
\renewcommand{\maketitle}{
  \bgroup
  \setlength{\parindent}{0pt}
  \begin{flushleft}
    {\huge \textbf{\@title}}  
    \vskip 1em 
    {\large \@author}
  \end{flushleft}
  \egroup
}
\makeatother

\renewenvironment{abstract}
 {\par\noindent\textbf{\abstractname.}\ \ignorespaces}
 {\par\medskip}

\newenvironment{keywords}
 {\par\noindent\textbf{Keywords.}\ \ignorespaces}
 {\par\medskip}

\usepackage[backend=biber,
            giveninits=true,
            maxbibnames=12,
            style=numeric,
            sortcites]{biblatex}
\title{Encoding level-3 semi-directed phylogenetic networks by quarnets and quinnets}

\author[1,*]{Niels Holtgrefe}
\author[2]{Katharina T. Huber}
\author[1]{Leo van Iersel}
\author[2]{Vincent Moulton}
\affil[1]{Delft Institute of Applied Mathematics, Delft University of Technology, 
Delft, The Netherlands}
\affil[2]{School of Computing Sciences, University of East Anglia, 
Norwich, United Kingdom}
\affil[*]{Corresponding author: n.a.l.holtgrefe@tudelft.nl}
\begin{document}

\maketitle

\begin{abstract}
Phylogenetic networks generalize phylogenetic trees as models of evolutionary history, allowing lineages to merge as well as to diverge. For many types of genetic data the root position of such a network cannot be recovered, so that only a \emph{semi-directed} network
can be inferred: a mixed graph in which only the edges entering a reticulation vertex are directed. A common strategy for inferring such a network is to first infer the subnetwork it induces on each set of $k\geq 3$ of its leaves, called a \emph{$k$-net}, and then to assemble these pieces. This can only succeed if the $k$-nets determine the network, in which case that network is said to be \emph{encoded} by its $k$-nets. Semi-directed networks of level-1 and~2, those whose biconnected components contain at most one, respectively two, reticulations, are known to be encoded by their $4$-nets, or \emph{quarnets}, whereas level-3 networks are not. Even so, in this paper we show that level-3 semi-directed networks are encoded by their $5$-nets, or \emph{quinnets}, and we characterize the limitation of quarnets exactly: we show that a single previously reported counterexample captures the only obstruction, every other level-3 network being encoded by its quarnets. Our proofs rest on a collection of encoding results for individual structural features of a network, which we establish for networks of arbitrary level and which are of independent interest.
\end{abstract}

\begin{keywords}
    Phylogenetic network, Semi-directed network, Quarnet, Quinnet, Encoding, Network generator, Level-3 network.
\end{keywords}

\section{Introduction}

Evolutionary histories of collections of species are not always tree-like. Where isolation may cause lineages to diverge, reticulate processes such as hybridization, horizontal gene transfer, and recombination cause lineages to merge. \emph{Phylogenetic networks} have become the standard combinatorial model in evolutionary biology for representing non-tree-like evolutionary histories~\cite{Huson2010,Steel2016,Kong2022}. In their rooted form, these networks are a type of directed acyclic graph in which the direction of every edge records the flow of time and the leaves represent a set of extant taxa that correspond to the species of interest.
Under several  models of DNA evolution, however, the location of the root is unidentifiable: two evolutionary histories differing only in their root position may give rise to the exact same observed data~\cite{Gross2018,Xu2022,SolisLemus2016}. In such situations, all one can hope to reliably infer is a \emph{semi-directed} (phylogenetic) network~\cite{SolisLemus2016}, that is, a phylogenetic network in which all edges but the incoming edges of reticulations are undirected (see~\cref{fig:phylozoo_example,fig:preliminaries}). Such a semi-directed network may then be rooted if necessary using an outgroup. Consequently, a growing number of software tools return semi-directed networks instead of rooted networks~\cite{SolisLemus2026}, and it is 
this type of network that concerns us here. In addition, for the sake of simplicity, we will consider only binary networks, i.e., each internal vertex is either a bifurcation (one lineage splits into two) or a bireticulation (two lineages combine into one).

\begin{figure}[htb]
    \centering
    \includegraphics[width=0.86\textwidth]{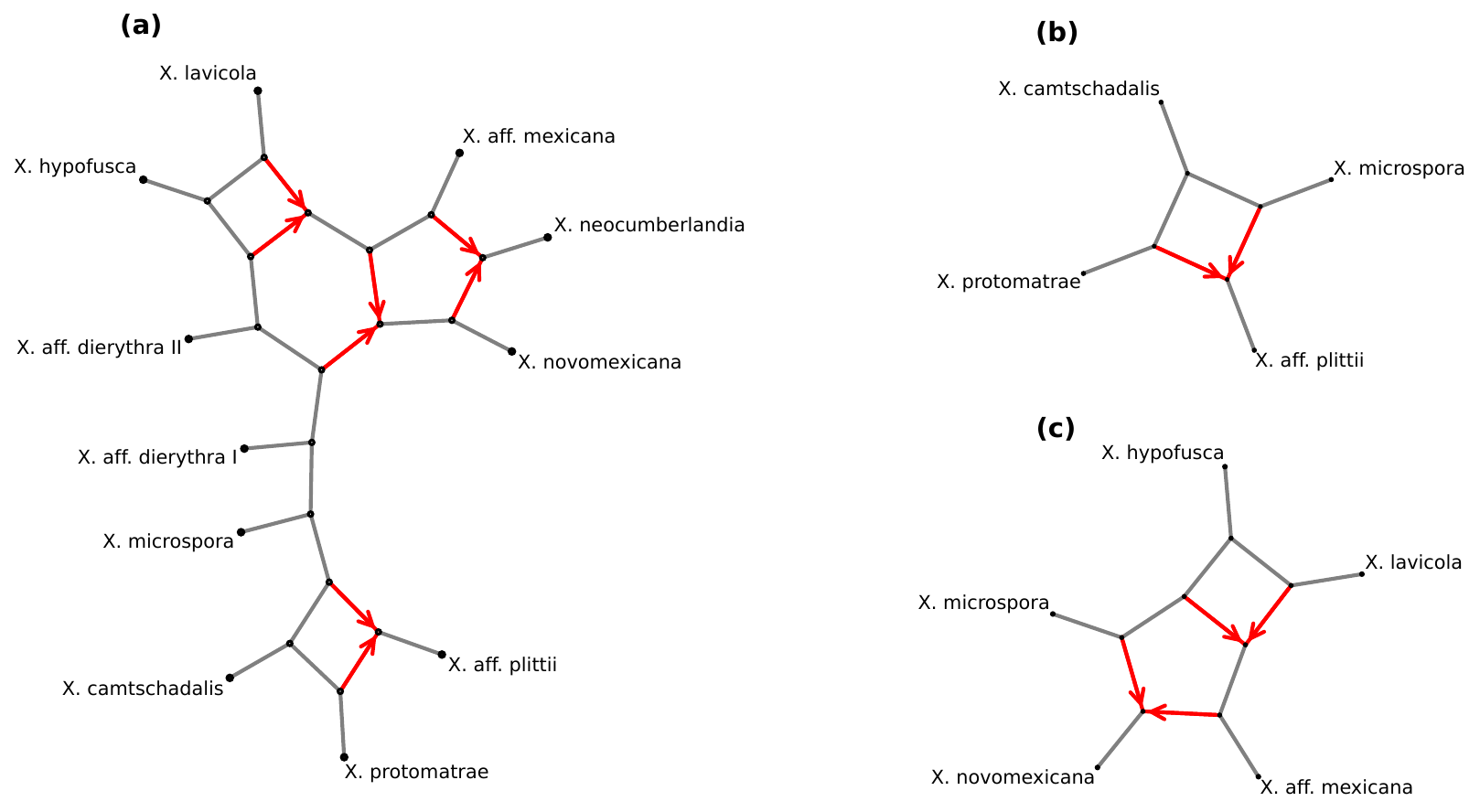}
    \caption{(a): A semi-directed phylogenetic network~$N$ on~11
    leaves, with~4
    reticulations and of level~3. The network was obtained from the rooted network
    for the
    Holarctic clade of the fungal genus \emph{Xanthoparmelia} from Keuler
    et al.~\cite{keuler2022} by suppressing the outgroup and then passing to the
    semi-directed network. Each clade is represented by a single taxon from that clade. (b): One
    quarnet of~$N$, which is of level~1. (c): One quinnet of~$N$, which is of level~2. The directed edges entering reticulations are marked in red; all other edges are undirected. Drawings were produced with \textsf{PhyloZoo}~\cite{holtgrefe2026phylozoo}.}
    \label{fig:phylozoo_example}
\end{figure}

One common approach to inferring
a (rooted or semi-directed) network is to avoid searching the space of networks on all taxa directly, and instead to infer a network (or a set of trees) on each small subset of taxa~\cite{Martin2026,Barton2026} and then to assemble these into a single network on all taxa~\cite{Holtgrefe2025,Allman2025,Willson2026,Oldman2016,Huber2011}. 
This design is inherited from the tree setting, where it rests on a classical fact: an unrooted phylogenetic tree is determined by its set of quartets---the trees it induces on four-element subsets of its leaf set~\cite{Colonius1981}. Whether generalized versions of this statement hold for networks is a combinatorial question that precedes any algorithmic considerations. Indeed, unless its small subnetworks uniquely determine a network, no algorithm can be guaranteed to reconstruct it uniquely from them. 

We call a $k$-leaf subnetwork induced by a semi-directed network~$\N$ on $n$ leaves, for some~$3\leq k\leq n$, a \emph{$k$-net} of~$\N$, with 4-nets and 5-nets called \emph{quarnets} and \emph{quinnets}, respectively  (see \cref{sec:prelims} for formal definitions and \cref{fig:phylozoo_example} for examples).
A network~$N$ is said to be \emph{encoded} by its $k$-nets if no network that is not isomorphic to~$N$ has the exact same set of $k$-nets. The general question 
that is of interest to us in this paper is the following: which semi-directed networks are encoded by their $k$-nets, and for which~$k$? This is a difficult problem in general, and so much 
of the work that has been done so far on finding a solution has focused
on networks that have low complexity in that they are quite close to being 
trees. More formally, a binary network is \emph{level-$k$} if every biconnected component of its underlying undirected graph contains at most $k$~reticulations. 
In particular, the level-0 networks are precisely the phylogenetic 
trees, and level-1 networks are those in which all cycles are vertex disjoint.

For rooted networks, subnetwork encoding results are well understood for 
level at most 3, and also for special classes such as so-called 
tree-child and orchard networks~\cite{Huber2013,vanIersel2014,Semple2021,nipius2020rooted}. However, these results do not transfer directly to semi-directed networks, and only one result is known in that setting: Huber et al.~\cite{huber2025encoding} showed
that binary, level-2 (and hence level-1), semi-directed networks are encoded by their quarnets. For level-3, however, the picture changes: \cref{fig:counterexample} (generalized from Figure~2 in \cite{huber2025encoding}) exhibits two non-isomorphic level-3 semi-directed networks with identical quarnets, showing that quarnets no longer suffice to encode all level-$k$ networks beyond level-$2$. Consequently, even if one were able to infer the complete set of quarnets perfectly from data, no reconstruction algorithm based solely on assembling quarnets 
would be able to uniquely recover the underlying network. This leaves open the problem of whether subnetworks on a larger number of leaves suffice to encode level-3 networks.

In this paper, we resolve this question and show that all binary, level-3, semi-directed networks are encoded by their quinnets (\cref{thm:main-encoding}(b)). Additionally, we show that the structure in \cref{fig:counterexample} is the only obstruction to quarnet encoding: Quarnets encode a level-3 network exactly up to an explicitly described equivalence arising from that example (\cref{thm:main-encoding}(a)).

Our proofs rely on a collection of encoding results for structural features of a network, and on a case-by-case analysis of the semi-directed level-3 \emph{generators}, small multi-graphs from which every biconnected component of a network is obtained by attaching leaves along its edges~\cite{gambette2009structure} (see \cref{fig:generators}). Roughly speaking, these features describe which generator a biconnected component has (\cref{cor:generator_encoding}), along which edge of that generator each leaf is attached (\cref{lem:equivalence_class_encoding}), and in which order the leaves along an edge appear (\cref{lem:ordering}), and we show that each of them is determined by small subnetworks. These encoding results are established for networks of arbitrary level, and may thus be of independent interest. Together they considerably reduce the case analysis that remains for level-3. Indeed, it suffices to consider generators without parallel edges (\cref{lem:parallel_contraction}), and for most of these the encoding of the features already determines the network, leaving only a small number of cases to be treated separately.

The remainder of this paper is organized as follows. \Cref{sec:prelims} contains preliminaries on mixed graphs, phylogenetic networks, and their generators. \Cref{sec:encoding_features} develops the results on the encoding of certain structural features for networks of arbitrary level. In \cref{sec:encoding_level3} these results are combined with a case analysis to prove our main result for level-3 networks~(\cref{thm:main-encoding}). We conclude with a discussion in \cref{sec:discussion}.

\section{Preliminaries}\label{sec:prelims}

\subsection{Mixed graphs}
A \emph{mixed graph} is a graph that may contain both undirected and directed edges. A \emph{mixed multi-graph} may additionally contain parallel edges. In this work, we only consider mixed multi-graphs where all parallel edges appear as pairs of parallel directed edges oriented in the same direction. A \emph{directed multi-graph} is a mixed multi-graph without undirected edges. For ease of notation, we will use $uv$ to denote either a directed edge from~$u$ to~$v$ or an undirected edge between~$u$ and~$v$.

Suppose that $G$ is a mixed multi-graph. The \emph{indegree} of a vertex is the number of incoming directed edges, its \emph{outdegree} is the number of outgoing directed edges, and its \emph{degree} is the total number of incident edges (all accounting for multiplicities). A \emph{leaf} is a degree-1 vertex and a \emph{reticulation vertex} is a vertex with indegree at least~2. The incoming edges of a reticulation vertex are called \emph{reticulation edges}. Let $v$ be a degree-2 vertex whose indegree and outdegree are both not equal to~2 and with incident edges $uv$ and $vw$. By \emph{suppressing}~$v$, we mean removing $v$ and adding the edge~$e = uw$, where~$e$ is directed towards~$w$ if $uv$ or $vw$ is directed, and $e$ is undirected if both $uv$ and $vw$ are undirected. Conversely, by \emph{subdividing} an edge $e = uw$ we mean removing $e$, adding a \emph{subdivision vertex}~$v$, and adding edges $e_1 = uv$ and $e_2 = vw$, where $e_1$ is undirected, $e_2$ is directed towards~$w$ if $e$ was directed towards~$w$, and $e_2$ is undirected otherwise. 

A \emph{path}~$P$ between two vertices $v_1$ and $v_p$ of $G$ is a sequence of pairwise distinct vertices~$(v_1, \ldots, v_p)$, $p \geq 1$, such that, in case $p\geq 2$, we have for all~$i \in \{1, \ldots, p-1\}$, that $v_i$ and $v_{i+1}$ are adjacent (and may thus be connected by a directed or an undirected edge). We say that the edges between $v_i$ and $v_{i+1}$, $i \in \{1, \ldots, p-1\}$, are \emph{contained} in~$P$.
We call $P$ a \emph{semi-directed path} (from~$v_1$ to~$v_p$) if, in addition, 
each edge $v_iv_{i+1}$ contained in~$P$ is undirected or directed towards~$v_{i+1}$. Furthermore, we call $P$ an \emph{up-down path} if there exists some $j \in \{1, \ldots, p\}$ such that~$(v_j, \ldots, v_1)$ and $(v_j, \ldots, v_p)$ are semi-directed paths. Note that semi-directed paths are also up-down paths.

A \emph{cycle}~$C$ of $G$ is a sequence of pairwise distinct vertices~$(v_1, \ldots, v_p, v_{p+1} = v_1)$, $p \geq 3$, such that~$(v_1, \ldots, v_p)$ is a path and $v_1$ and $v_p$ are adjacent. The edges \emph{contained} in~$C$ are the edges contained in the path~$(v_1, \ldots, v_p)$ and the edges $(v_1,v_p)$. We call~$C$ a \emph{semi-directed cycle} if~$(v_1, \ldots, v_p)$ is a semi-directed path from $v_1$ to $v_p$ and the edges $(v_1,v_p)$ are undirected or directed towards~$v_1$.

We call~$G$ \emph{connected} if there exists a path between each pair of vertices. Otherwise we call $G$ \emph{disconnected}. In case $G$ is connected, a \emph{cut-edge} of $G$ is an edge whose removal results in a disconnected graph. A \emph{blob} in $G$ is a maximal connected subgraph of $G$ without any cut-edges and an edge is \emph{incident} to a blob~$B$ if exactly one of its incident vertices is a vertex in $B$.
An \emph{$m$-blob}, $m \ge 1$, is a blob in $G$ that is incident to $m$~edges. Thus, each leaf of $G$ is a 1-blob, whereas a 2-blob of $G$ can be a single degree-2 vertex, or a more complex structure. By \emph{contracting} an $m$-blob, $m\geq 2$, we mean identifying all the vertices in the blob (and removing self-loops). By \emph{suppressing} a 2-blob, we mean contracting the 2-blob and additionally suppressing the resulting degree-2 vertex if it does not have indegree 2 or outdegree 2.

An \emph{isomorphism} from a mixed multi-graph~$G$ with leaf set~$\X$ to another mixed multi-graph~$H$ on the same leaf set is a bijection between their vertex sets which is the identity on $\X$ such that an edge between two vertices in $G$ exists if and only if an edge between the corresponding vertices in $H$ exists and such that, for each edge in~$G$, the corresponding edge in~$H$ has (i) the same type (directed or undirected); (ii) the same orientation in the directed case; and (iii) the same multiplicity. In addition, $G$ is \emph{isomorphic} to $H$, denoted by $G \cong H$, if there exists an isomorphism from~$G$ to $H$. An isomorphism from a mixed multi-graph to itself is an \emph{automorphism}.

\subsection{Phylogenetic networks}

Throughout, we assume~$\X$ to be a set of at least two elements and we let $n = |\X|$. In the context of phylogenetics these elements represent \emph{taxa}: the taxonomic units under consideration.

\begin{definition}[Rooted network]\label{def:rooted_network}A \emph{(binary) rooted (phylogenetic) network} on~$\X$ is a rooted, directed acyclic graph~$\N^+$ such that (i) the root has indegree zero and outdegree two; (ii) the leaves have indegree one, outdegree zero and they are bijectively labeled by $\X$; (iii) all other vertices either have indegree one and outdegree two, or indegree two and outdegree one. 
\end{definition}

Given a rooted network on~$\X$, we will use~$\X$ to also refer to its leaf set, since the leaves of the network and the set of taxa~$\X$ are in bijection.
Throughout this work, we make the common assumption that all rooted networks on~$\X$ are \emph{LSA networks}~\cite{fischer2010new}: rooted networks in which the root is the \emph{lowest stable ancestor} of $\X$, i.e., it is the lowest vertex through which all directed paths from the root to any leaf in $\X$ must pass. See \cref{fig:preliminaries}a for an example rooted network.

\begin{figure}[htb]
    \centering
    \pgfdeclarelayer{nodelayer}
\pgfdeclarelayer{edgelayer}
\pgfsetlayers{main,edgelayer,nodelayer}
\tikzset{font=\footnotesize}

\tikzstyle{none}=[]
\tikzstyle{node}=[circle, draw=black]
\tikzstyle{leaf_node}=[circle, draw=black, fill=black, scale=0.275] 
\tikzstyle{internal_node}=[circle, draw=black, fill=white, scale=0.2]

\tikzstyle{mini_label}=[none,scale=1]
\tikzstyle{small_label}=[scale=1.15]
\tikzstyle{medium_label}=[none,scale=1.0]
\tikzstyle{main_label}=[none,scale=1.1]

\tikzstyle{edge}=[]
\tikzstyle{dashed_edge}=[densely dotted]
\tikzstyle{special_edge}=[densely dotted]
\tikzstyle{special_edge2}=[]
\tikzstyle{ret_arc}=[-{Latex[scale=.9]}]
\tikzstyle{dashed_arc}=[-{Latex[scale=1.2]}, dashed]
\tikzstyle{arc}=[-{Latex[scale=.9]}]

\begin{tikzpicture}[scale=0.4]
	\begin{pgfonlayer}{nodelayer}
		\node [style={internal_node}] (130) at (-3.75, 32.25) {};
		\node [style={internal_node}] (131) at (-1.25, 31.5) {};
		\node [style={internal_node}] (132) at (-3.75, 29.75) {};
		\node [style={internal_node}] (133) at (-1.5, 29) {};
		\node [style={internal_node}] (134) at (-3.25, 28.5) {};
		\node [style={internal_node}] (135) at (-2.25, 27.5) {};
		\node [style={leaf_node}] (136) at (-1, 27.75) {};
		\node [style={leaf_node}] (137) at (-2.25, 26.25) {};
		\node [style={leaf_node}] (139) at (-0.25, 32) {};
		\node [style={internal_node}] (140) at (-4.25, 27.75) {};
		\node [style={internal_node}] (141) at (-4, 26.25) {};
		\node [style={internal_node}] (142) at (-5.25, 27) {};
		\node [style={leaf_node}] (143) at (-4, 25.25) {};
		\node [style={leaf_node}] (144) at (-5.75, 26) {};
		\node [style={leaf_node}] (145) at (-0.25, 29.25) {};
		\node [style={mini_label}] (147) at (-5, 30) {$x_1$};
		\node [style={mini_label}] (148) at (-0.25, 28.75) {$x_3$};
		\node [style={mini_label}] (149) at (-1, 27.25) {$x_4$};
		\node [style={mini_label}] (150) at (-2.25, 25.75) {$x_5$};
		\node [style={mini_label}] (151) at (-4, 24.75) {$x_6$};
		\node [style={mini_label}] (152) at (-5.75, 25.5) {$x_7$};
		\node [style={internal_node}] (155) at (8.75, 31) {};
		\node [style={internal_node}] (156) at (7.25, 30) {};
		\node [style={internal_node}] (157) at (10.25, 30) {};
		\node [style={internal_node}] (158) at (7.25, 28.5) {};
		\node [style={internal_node}] (159) at (8.75, 27.5) {};
		\node [style={leaf_node}] (160) at (11.5, 30.25) {};
		\node [style={leaf_node}] (161) at (8.75, 26.5) {};
		\node [style={internal_node}] (162) at (8, 32.25) {};
		\node [style={leaf_node}] (163) at (5.25, 31.5) {};
		\node [style={internal_node}] (164) at (6.25, 27.75) {};
		\node [style={internal_node}] (165) at (6, 26.25) {};
		\node [style={internal_node}] (166) at (4.75, 27.25) {};
		\node [style={leaf_node}] (167) at (6.25, 25.25) {};
		\node [style={leaf_node}] (168) at (3.75, 26.5) {};
		\node [style={leaf_node}] (169) at (11.5, 28) {};
		\node [style={mini_label}] (171) at (5, 31) {$x_1$};
		\node [style={mini_label}] (172) at (11.5, 29.75) {$x_3$};
		\node [style={mini_label}] (173) at (11.5, 27.5) {$x_4$};
		\node [style={mini_label}] (174) at (8.75, 26) {$x_5$};
		\node [style={mini_label}] (175) at (6.25, 24.75) {$x_6$};
		\node [style={mini_label}] (176) at (3.75, 26) {$x_7$};
		\node [style={internal_node}] (177) at (-0.75, 30.25) {};
		\node [style={internal_node}] (178) at (10.25, 28.5) {};
		\node [style={small_label}] (179) at (10.5, 31.5) {$\N$};
		\node [style={small_label}] (180) at (1, 33.25) {$\N^+$};
		\node [style={internal_node}] (202) at (29.5, 30.25) {};
		\node [style={internal_node}] (203) at (28.25, 31) {};
		\node [style={internal_node}] (204) at (30.25, 30) {};
		\node [style={leaf_node}] (207) at (31.25, 30.5) {};
		\node [style={internal_node}] (209) at (29.5, 31.75) {};
		\node [style={leaf_node}] (210) at (30.5, 32.25) {};
		\node [style={internal_node}] (211) at (26.75, 31) {};
		\node [style={internal_node}] (212) at (25.5, 30.25) {};
		\node [style={internal_node}] (213) at (25.5, 31.75) {};
		\node [style={leaf_node}] (214) at (24.5, 29.75) {};
		\node [style={leaf_node}] (215) at (24.5, 32.25) {};
		\node [style={leaf_node}] (216) at (31.25, 29.5) {};
		\node [style={mini_label}] (217) at (31, 32.25) {$x_1$};
		\node [style={mini_label}] (218) at (31.75, 30.5) {$x_3$};
		\node [style={mini_label}] (219) at (31.75, 29.5) {$x_4$};
		\node [style={mini_label}] (221) at (24, 29.75) {$x_6$};
		\node [style={mini_label}] (222) at (24, 32.25) {$x_7$};
		\node [style={internal_node}] (225) at (28.5, 27.5) {};
		\node [style={internal_node}] (226) at (26.75, 27.5) {};
		\node [style={internal_node}] (227) at (26.75, 25.75) {};
		\node [style={leaf_node}] (228) at (29.25, 28.25) {};
		\node [style={leaf_node}] (229) at (26, 25) {};
		\node [style={leaf_node}] (235) at (26, 28.25) {};
		\node [style={leaf_node}] (237) at (29.25, 25) {};
		\node [style={mini_label}] (239) at (29.75, 28.25) {$x_3$};
		\node [style={mini_label}] (240) at (29.75, 25) {$x_4$};
		\node [style={mini_label}] (241) at (25.5, 25) {$x_5$};
		\node [style={mini_label}] (242) at (25.5, 28.25) {$x_7$};
		\node [style={internal_node}] (244) at (28.5, 25.75) {};
		\node [style={main_label}] (245) at (-2.5, 23.75) {$(a)$};
		\node [style={main_label}] (246) at (8.5, 23.75) {$(b)$};
		\node [style={main_label}] (247) at (27.75, 23.75) {$(d)$};
		\node [style={internal_node}] (248) at (-4, 31) {};
		\node [style={leaf_node}] (249) at (-4.75, 30.5) {};
		\node [style={leaf_node}] (251) at (10.25, 33) {};
		\node [style={mini_label}] (252) at (-0.25, 31.5) {$x_2$};
		\node [style={mini_label}] (253) at (10.85, 33) {$x_2$};
		\node [style={internal_node}] (254) at (-2.25, 34) {};
		\node [style={internal_node}] (255) at (-1.5, 32.75) {};
		\node [style={internal_node}] (256) at (9.5, 32.25) {};
		\node [style={internal_node}] (257) at (6.5, 31.25) {};
		\node [style={leaf_node}] (263) at (19, 31) {};
		\node [style={leaf_node}] (264) at (18.75, 28.5) {};
		\node [style={leaf_node}] (266) at (16.5, 30.25) {};
		\node [style={internal_node}] (269) at (16.75, 28) {};
		\node [style={leaf_node}] (270) at (16.75, 26.75) {};
		\node [style={leaf_node}] (271) at (15.5, 27.5) {};
		\node [style={leaf_node}] (272) at (19.75, 29.75) {};
		\node [style={mini_label}] (273) at (16.25, 29.75) {$x_1$};
		\node [style={mini_label}] (274) at (19.5, 31.25) {$x_3$};
		\node [style={mini_label}] (275) at (19.75, 29.25) {$x_4$};
		\node [style={mini_label}] (276) at (18.75, 28) {$x_5$};
		\node [style={mini_label}] (277) at (16.75, 26.25) {$x_6$};
		\node [style={mini_label}] (278) at (15.5, 27) {$x_7$};
		\node [style={small_label}] (280) at (20.5, 32.5) {$\T$};
		\node [style={leaf_node}] (281) at (17.5, 31.5) {};
		\node [style={mini_label}] (282) at (18, 31.75) {$x_2$};
		\node [style={internal_node}] (283) at (18, 29.75) {};
		\node [style={main_label}] (284) at (17.5, 23.75) {$(c)$};
	\end{pgfonlayer}
	\begin{pgfonlayer}{edgelayer}
		\draw [style=arc] (130) to (131);
		\draw [style=arc] (132) to (134);
		\draw [style=arc] (134) to (135);
		\draw [style=arc] (133) to (135);
		\draw [style=arc] (135) to (137);
		\draw [style=arc] (133) to (136);
		\draw [style=arc] (134) to (140);
		\draw [style=arc] (140) to (141);
		\draw [style=arc] (140) to (142);
		\draw [style=arc] (142) to (141);
		\draw [style=arc] (142) to (144);
		\draw [style=arc] (141) to (143);
		\draw [style=edge] (157) to (155);
		\draw [style=edge] (157) to (160);
		\draw [style=edge] (156) to (158);
		\draw [style=edge] (159) to (161);
		\draw [style=edge] (158) to (164);
		\draw [style=edge] (164) to (166);
		\draw [style=edge] (166) to (168);
		\draw [style=edge] (165) to (167);
		\draw [style={ret_arc}] (158) to (159);
		\draw [style={ret_arc}] (164) to (165);
		\draw [style={ret_arc}] (166) to (165);
		\draw [style=arc] (177) to (145);
		\draw [style=arc] (177) to (133);
		\draw [style=arc] (131) to (177);
		\draw [style={ret_arc}] (178) to (159);
		\draw [style=edge] (169) to (178);
		\draw [style=edge] (178) to (157);
		\draw [style=edge] (202) to (209);
		\draw [style=edge] (204) to (202);
		\draw [style=edge] (209) to (210);
		\draw [style=edge] (204) to (207);
		\draw [style=edge] (211) to (213);
		\draw [style=edge] (213) to (215);
		\draw [style=edge] (212) to (214);
		\draw [style={ret_arc}] (211) to (212);
		\draw [style={ret_arc}] (213) to (212);
		\draw [style={ret_arc}] (209) to (203);
		\draw [style={ret_arc}] (202) to (203);
		\draw [style=edge] (204) to (216);
		\draw [style=edge] (203) to (211);
		\draw [style=edge] (225) to (228);
		\draw [style=edge] (227) to (229);
		\draw [style={ret_arc}] (226) to (227);
		\draw [style={ret_arc}] (244) to (227);
		\draw [style=edge] (237) to (244);
		\draw [style=edge] (244) to (225);
		\draw [style=edge] (226) to (235);
		\draw [style=edge] (225) to (226);
		\draw [style=arc] (130) to (248);
		\draw [style=arc] (248) to (132);
		\draw [style=arc] (248) to (249);
		\draw [style=arc] (254) to (255);
		\draw [style=arc] (255) to (131);
		\draw [style=arc] (254) to (130);
		\draw [style=arc] (255) to (139);
		\draw [style=arc] (131) to (132);
		\draw [style={ret_arc}] (256) to (155);
		\draw [style={ret_arc}] (162) to (155);
		\draw [style=edge] (162) to (256);
		\draw [style=edge] (256) to (251);
		\draw [style={ret_arc}] (155) to (156);
		\draw [style={ret_arc}] (257) to (156);
		\draw [style=edge] (162) to (257);
		\draw [style=edge] (257) to (163);
		\draw [style=edge] (269) to (271);
		\draw [style=edge] (270) to (269);
		\draw [style=edge] (269) to (283);
		\draw [style=edge] (283) to (264);
		\draw [style=edge] (272) to (283);
		\draw [style=edge] (283) to (263);
		\draw [style=edge] (283) to (281);
		\draw [style=edge] (283) to (266);
	\end{pgfonlayer}
\end{tikzpicture}
    \caption{(a): A rooted network~$\N^{+}$ on $\X = \{x_1, \ldots, x_7\}$ of level 3. (b): The level-3, semi-directed network~$\N$ on $\X$ obtained from $\N^{+}$. (c):~The tree-of-blobs~$\T$ of~$\N$. (d): The subnetworks of $\N$ induced by $\{x_1, x_3, x_4, x_6, x_7\}$ (top) and $\{x_3, x_4, x_5, x_7\}$ (bottom).}
    \label{fig:preliminaries}
\end{figure}

\begin{definition}[Semi-directed network]\label{def:semi_directed}
A \emph{(binary) semi-directed (phylogenetic) network} on~$\X$ is a mixed graph~$\N$ that can be obtained from a rooted LSA network~$\N^+$ on~$\X$ by undirecting all non-reticulation edges and, exhaustively, identifying parallel edges, suppressing degree-2 vertices, and suppressing 2-blobs until this is no longer possible.
\end{definition}

We again use~$\X$ to refer to the leaf set of a semi-directed network on $\X$. We note that the notion of a semi-directed network is well-defined, since the order of the operations in \cref{def:semi_directed} does not change the resulting network, as shown in~\cite{huber2025encoding}. Also note that in some definitions of semi-directed networks, parallel edges and 2-blobs are preserved (see \cref{sec:discussion} for a discussion on how our results translate to that setting). See \cref{fig:preliminaries}b for an example of a semi-directed network. 

A (semi-directed or rooted) network is called (i) \emph{simple}, if it contains at most one $m$-blob, $m \geq 3$; (ii) \emph{level-$k$}, $k\geq 0$, if there are at most~$k$ reticulation vertices in each blob of the network; and (iii) \emph{strictly level-$k$}, $k \ge 1$, if it is level-$k$ and not level-$(k-1)$.
Observe that both the rooted network~$\N^+$ in \cref{fig:preliminaries}a and the semi-directed network~$\N$ in \cref{fig:preliminaries}b are strictly level-3. 
Neither of the networks is simple.
A semi-directed network $\N$ on $\X$ without reticulation vertices is an \emph{(unrooted) phylogenetic tree} (so that, in particular, all
of the edges in the tree are undirected). The \emph{tree-of-blobs} of $N$ is the (possibly non-binary) phylogenetic tree on $\X$ obtained by contracting every $m$-blob, $m\geq 3$, to a single vertex (see \cref{fig:preliminaries}c).

In \cref{def:restriction} we formalize the notion of a \emph{subnetwork}, which is again well-defined (see~\cite{huber2025encoding}).
\begin{definition}[Subnetwork]\label{def:restriction}
Given a semi-directed network $\N$ on $\X$ and some $\Y \subseteq \X$ with $| \Y | \geq 2$, the \emph{subnetwork} of $\N$ induced by $\Y$ is the semi-directed network $\N|_\Y$ on $\Y$ obtained from $\N$ by taking the union of all up-down paths between leaves in $\Y$, followed by exhaustively identifying parallel edges, suppressing degree-2 vertices, and suppressing 2-blobs until this is no longer possible.
\end{definition}
A 4-leaf subnetwork induced by an $n$-leaf semi-directed network $\N$ is called a \emph{quarnet} (of~$\N$) and a 5-leaf subnetwork a \emph{quinnet} (see \cref{fig:preliminaries}d);
in general we call a $k$-leaf subnetwork a \emph{$k$-net} (of~$\N$).
We denote the set of pairwise non-isomorphic $k$-nets of~$\N$ by $\Sub_k (\N)$, where $3 \leq k \leq n$. By abuse of notation, we also put $\Sub_k (\N) = \Sub_n (\N)$ for $k \geq n$ and $\Sub_k (\N) = \Sub_{ \min \{3, n\}} (\N)$ for $k \in \{0, 1, 2\}$.

For two sets of semi-directed networks $\mathcal{S}, \mathcal{S}'$, we write $\mathcal{S} \simeq \mathcal{S}'$ if there exists a bijection~$\phi: \mathcal{S} \to \mathcal{S}'$ such that $\N \cong \phi (\N)$ for all $\N \in \mathcal{S}$.
We say that a semi-directed network~$N$ is \emph{encoded} by $k$-nets, $k\geq 2$, if for every binary semi-directed network~$\N'$ with $\Sub_k (\N) \simeq \Sub_k (\N')$ it holds that~$\N\cong \N'$. We say that a class~$\mathcal{C}$ of networks is \emph{encoded} by $k$-nets if each network in~$\mathcal{C}$ is encoded by $k$-nets.

\subsection{Generators}\label{subsec:prelim_generator}
Given a simple rooted network~$\N^+$, the \emph{rooted generator}~$\gen (\N^+ )$ is the directed multi-graph that can be obtained from~$\N^+$ by removing all leaves and their incident edges and suppressing all resulting vertices having both indegree~1 and outdegree~1. Similarly, the \emph{semi-directed generator}~$\gen (\N )$ of a simple semi-directed network~$\N$ is the mixed multi-graph obtained from $\N$ by removing all leaves and their incident edges and suppressing all resulting vertices of degree-2 except those with indegree-2. See \cref{fig:level2_generators} for examples. For technical reasons, we define the generator of a simple, strictly level-1, semi-directed network to be a single vertex with a bidirected self-loop. Note that a rooted generator must necessarily have a unique root.

\begin{definition}[Generator]\label{def:generator}
    A mixed (resp. directed) multi-graph~$G$ is a \emph{level-$k$, semi-directed} (resp. \emph{rooted}) \emph{generator} for some integer $k\geq 1$ if $G$ is isomorphic to the generator of a simple, strictly level-$k$, semi-directed (resp. rooted) network.
\end{definition}
We often omit `rooted' or `semi-directed' when referring to a generator if the meaning is clear from context. Observe that the generators in \cref{fig:level2_generators} are level-2 generators. Moreover, \cite{huber2025encoding} showed that the two generators in \cref{fig:level2_generators}d are \emph{all} of the possible semi-directed level-2 generators.

\begin{figure}[htb]
    \centering
    \pgfdeclarelayer{nodelayer}
\pgfdeclarelayer{edgelayer}
\pgfsetlayers{main,edgelayer,nodelayer}
\tikzset{font=\footnotesize}

\tikzstyle{none}=[]
\tikzstyle{node}=[circle, draw=black]
\tikzstyle{leaf_node}=[circle, draw=black, fill=black, scale=0.275] 
\tikzstyle{internal_node}=[circle, draw=black, fill=white, scale=0.2]

\tikzstyle{mini_label}=[none,scale=1.15]
\tikzstyle{small_label}=[scale=0.85]
\tikzstyle{medium_label}=[none,scale=1.1]
\tikzstyle{main_label}=[none,scale=1.3]

\tikzstyle{edge}=[]
\tikzstyle{dashed_edge}=[densely dotted]
\tikzstyle{special_edge}=[densely dotted]
\tikzstyle{special_edge2}=[]
\tikzstyle{ret_arc}=[-{Latex[scale=.9]}]
\tikzstyle{dashed_arc}=[-{Latex[scale=1.2]}, dashed]
\tikzstyle{arc}=[-{Latex[scale=.9]}]

\begin{tikzpicture}[scale=0.4]
	\begin{pgfonlayer}{nodelayer}
		\node [style={internal_node}] (12) at (32, 29) {};
		\node [style={internal_node}] (13) at (33.75, 27) {};
		\node [style={internal_node}] (14) at (35.5, 29) {};
		\node [style={internal_node}] (20) at (34, 22.75) {};
		\node [style={internal_node}] (21) at (32, 21) {};
		\node [style={internal_node}] (22) at (34, 19.25) {};
		\node [style={internal_node}] (28) at (36, 21) {};
		\node [style={internal_node}] (29) at (13, 32) {};
		\node [style={internal_node}] (30) at (11.75, 30.5) {};
		\node [style={internal_node}] (31) at (14.25, 30.5) {};
		\node [style={leaf_node}] (32) at (15.25, 29.5) {};
		\node [style={internal_node}] (33) at (14.25, 29) {};
		\node [style={internal_node}] (34) at (11.75, 29) {};
		\node [style={internal_node}] (35) at (12.75, 27.25) {};
		\node [style={leaf_node}] (36) at (11, 28) {};
		\node [style={leaf_node}] (37) at (12.75, 26) {};
		\node [style={leaf_node}] (38) at (14, 26.75) {};
		\node [style={internal_node}] (39) at (13.75, 28) {};
		\node [style={internal_node}] (40) at (13.5, 23) {};
		\node [style={internal_node}] (41) at (13.5, 21.5) {};
		\node [style={internal_node}] (42) at (12.5, 20) {};
		\node [style={internal_node}] (43) at (14.75, 20) {};
		\node [style={internal_node}] (44) at (12.5, 24.5) {};
		\node [style={leaf_node}] (45) at (12.5, 18.75) {};
		\node [style={leaf_node}] (46) at (14.75, 18.75) {};
		\node [style={leaf_node}] (47) at (10.75, 22) {};
		\node [style={internal_node}] (48) at (11.5, 23) {};
		\node [style={leaf_node}] (50) at (10.75, 20.5) {};
		\node [style={internal_node}] (51) at (11.5, 21.5) {};
		\node [style={internal_node}] (52) at (20, 31) {};
		\node [style={internal_node}] (53) at (18.75, 29.25) {};
		\node [style={internal_node}] (56) at (21.25, 28) {};
		\node [style={internal_node}] (58) at (20, 26.5) {};
		\node [style={internal_node}] (59) at (20.25, 22.25) {};
		\node [style={internal_node}] (60) at (20.25, 20.75) {};
		\node [style={internal_node}] (61) at (19.25, 19.25) {};
		\node [style={internal_node}] (62) at (21.5, 19.25) {};
		\node [style={internal_node}] (63) at (19.25, 23.5) {};
		\node [style={internal_node}] (76) at (26, 29.5) {};
		\node [style={internal_node}] (77) at (27, 30.75) {};
		\node [style={leaf_node}] (78) at (27, 31.75) {};
		\node [style={internal_node}] (79) at (28, 29.5) {};
		\node [style={internal_node}] (80) at (26, 28.25) {};
		\node [style={internal_node}] (81) at (27, 27) {};
		\node [style={leaf_node}] (82) at (25.25, 27.5) {};
		\node [style={leaf_node}] (83) at (27, 26) {};
		\node [style={leaf_node}] (84) at (28.75, 27.5) {};
		\node [style={internal_node}] (85) at (28, 28.25) {};
		\node [style={small_label}] (86) at (11, 27.5) {$x_1$};
		\node [style={small_label}] (87) at (12.25, 26) {$x_2$};
		\node [style={small_label}] (88) at (14.5, 26.75) {$x_3$};
		\node [style={small_label}] (89) at (15.25, 29) {$x_4$};
		\node [style={small_label}] (90) at (27.5, 31.75) {$x_4$};
		\node [style={small_label}] (91) at (25.25, 27) {$x_1$};
		\node [style={small_label}] (92) at (27.5, 26) {$x_2$};
		\node [style={small_label}] (93) at (28.75, 27) {$x_3$};
		\node [style={small_label}] (94) at (10.75, 21.5) {$y_1$};
		\node [style={small_label}] (95) at (10.75, 20) {$y_2$};
		\node [style={small_label}] (96) at (12, 18.75) {$y_3$};
		\node [style={small_label}] (97) at (14.25, 18.75) {$y_4$};
		\node [style={medium_label}] (102) at (13, 16.75) {$(a)$};
		\node [style={medium_label}] (103) at (20, 16.75) {$(b)$};
		\node [style={medium_label}] (104) at (27, 16.75) {$(c)$};
		\node [style={medium_label}] (105) at (34, 16.75) {$(d)$};
		\node [style={mini_label}] (106) at (15, 32) {$\N^{+}_1$};
		\node [style={mini_label}] (107) at (14.75, 24.25) {$\N^{+}_2$};
		\node [style={mini_label}] (108) at (22.25, 30.5) {$\gen ( \N^{+}_1 )$};
		\node [style={mini_label}] (109) at (21.5, 23.5) {$\gen ( \N^{+}_2 )$};
		\node [style={mini_label}] (110) at (29.25, 30.75) {$\N_1$};
		\node [style={mini_label}] (112) at (35.75, 31) {$\gen (\N_1)$};
		\node [style={mini_label}] (113) at (35.5, 23.5) {$ \gen (\N_2 )$};
		\node [style={internal_node}] (114) at (27.25, 22.5) {};
		\node [style={internal_node}] (115) at (27.75, 21) {};
		\node [style={internal_node}] (116) at (26.75, 20.25) {};
		\node [style={internal_node}] (117) at (29, 22.25) {};
		\node [style={leaf_node}] (118) at (26.75, 19.25) {};
		\node [style={leaf_node}] (119) at (30, 22.5) {};
		\node [style={leaf_node}] (120) at (25.5, 23) {};
		\node [style={internal_node}] (121) at (26, 22.25) {};
		\node [style={leaf_node}] (122) at (24.5, 20.5) {};
		\node [style={internal_node}] (123) at (25.5, 21) {};
		\node [style={small_label}] (124) at (27.25, 19.25) {$y_3$};
		\node [style={small_label}] (125) at (24.5, 20) {$y_2$};
		\node [style={small_label}] (126) at (25, 23) {$y_1$};
		\node [style={small_label}] (127) at (30, 22) {$y_4$};
		\node [style={mini_label}] (128) at (28.75, 23.75) {$\N_2$};
	\end{pgfonlayer}
	\begin{pgfonlayer}{edgelayer}
		\draw [style={ret_arc}] (12) to (14);
		\draw [style={ret_arc}] (14) to (13);
		\draw [style={ret_arc}] (12) to (13);
		\draw [style={ret_arc}, bend left=75, looseness=1.25] (12) to (14);
		\draw [style={ret_arc}] (22) to (21);
		\draw [style={ret_arc}] (20) to (21);
		\draw [style={ret_arc}] (20) to (28);
		\draw [style={ret_arc}] (22) to (28);
		\draw [style=edge] (20) to (22);
		\draw [style=arc] (29) to (30);
		\draw [style=arc] (29) to (31);
		\draw [style=arc] (30) to (33);
		\draw [style=arc] (31) to (33);
		\draw [style=arc] (31) to (32);
		\draw [style=arc] (30) to (34);
		\draw [style=arc] (34) to (35);
		\draw [style=arc] (35) to (37);
		\draw [style=arc] (34) to (36);
		\draw [style=arc] (33) to (39);
		\draw [style=arc] (39) to (38);
		\draw [style=arc] (39) to (35);
		\draw [style=arc] (41) to (42);
		\draw [style=arc] (40) to (41);
		\draw [style=arc, bend left=15] (40) to (43);
		\draw [style=arc] (44) to (40);
		\draw [style=arc] (44) to (48);
		\draw [style=arc] (48) to (47);
		\draw [style=arc] (42) to (45);
		\draw [style=arc] (43) to (46);
		\draw [style=arc] (41) to (43);
		\draw [style=arc] (48) to (51);
		\draw [style=arc] (51) to (50);
		\draw [style=arc] (51) to (42);
		\draw [style=arc] (52) to (53);
		\draw [style=arc] (53) to (56);
		\draw [style=arc] (53) to (58);
		\draw [style=arc] (56) to (58);
		\draw [style=arc] (52) to (56);
		\draw [style=arc] (60) to (61);
		\draw [style=arc] (59) to (60);
		\draw [style=arc, bend left=15] (59) to (62);
		\draw [style=arc] (63) to (59);
		\draw [style=arc] (60) to (62);
		\draw [style=arc] (63) to (61);
		\draw [style={ret_arc}] (80) to (81);
		\draw [style={ret_arc}] (85) to (81);
		\draw [style={ret_arc}] (77) to (79);
		\draw [style={ret_arc}] (76) to (79);
		\draw [style=edge] (76) to (77);
		\draw [style=edge] (77) to (78);
		\draw [style=edge] (79) to (85);
		\draw [style=edge] (85) to (84);
		\draw [style=edge] (83) to (81);
		\draw [style=edge] (80) to (82);
		\draw [style=edge] (80) to (76);
		\draw [style={ret_arc}] (123) to (116);
		\draw [style={ret_arc}] (115) to (116);
		\draw [style={ret_arc}] (115) to (117);
		\draw [style={ret_arc}] (114) to (117);
		\draw [style=edge] (123) to (121);
		\draw [style=edge] (121) to (114);
		\draw [style=edge] (114) to (115);
		\draw [style=edge] (116) to (118);
		\draw [style=edge] (117) to (119);
		\draw [style=edge] (123) to (122);
		\draw [style=edge] (120) to (121);
	\end{pgfonlayer}
\end{tikzpicture}
    \caption{(a): Two simple, level-2, rooted networks $\N_1^{+}$ and $\N_2^{+}$. (b): The rooted generators of $\N_1^{+}$ and $\N_2^{+}$. (c): The simple, level-2, semi-directed networks $\N_1$ and $\N_2$ obtained from $\N_1^{+}$ and $\N_2^{+}$. (d): The semi-directed generators of $\N_1$ and~$\N_2$.}
    \label{fig:level2_generators}
\end{figure}

A \emph{side} of a (rooted or semi-directed) generator~$G$ is an edge of~$G$ (called an \emph{edge side}) or a degree-2 vertex of~$G$ with indegree-2 (called a \emph{reticulation side}). If an edge side is part of a pair of parallel edges, we call it a \emph{parallel edge side} and refer to the pair as a \emph{pair of parallel edge sides}. 
By definition, one can construct a simple (rooted or semi-directed) network from its generator by attaching leaves to its sides. More specifically, a leaf can be attached to a reticulation side by adding a new edge incident to the reticulation side and the leaf. Additionally, leaves can be attached to edge sides by subdividing the edge $\ell \geq 0$ times and then attaching one leaf to each of the resulting subdivision vertices (by adding a new edge incident to the subdivision vertex and the leaf).

In \cite{gambette2009structure} it was shown how to construct all rooted level-$(k+1)$ generators from the set of level-$k$, rooted generators. In addition, in \cite{nipius2022encoding}, level-$k$, semi-directed generators are defined as mixed graphs that can be obtained from level-$k$, rooted generators by \emph{semi-directing} them; more specifically, by undirecting all edges not oriented towards a reticulation, and then suppressing the unique degree-2 root. In the following proposition we prove that the two definitions 
agree. That is, we prove that one can 
obtain all level-$k$, semi-directed generators from the level-$k$, rooted generators by semi-directing them.

\begin{proposition}\label{prop:semidirecting_rooted_generators}
    A mixed multi-graph~$G$ is a level-$k$, semi-directed generator for some integer $k\geq 2$ if and only if it can be obtained by semi-directing some level-$k$, rooted generator $G^+$.
\end{proposition}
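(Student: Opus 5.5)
We prove the two directions separately. Both rest on one observation: the operations "take the generator" and "pass to the semi-directed network" commute for simple networks. Concretely, let $\N^+$ be a simple, strictly level-$k$ rooted LSA network and $\N$ the semi-directed network obtained from it. We claim that $\gen(\N)$ is obtained by semi-directing $\gen(\N^+)$.

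Since $k\geq 2$, $\N^+$ has a single blob containing $k$ reticulations. The LSA condition forces the root to lie in this blob with both outgoing edges in the blob. A nontrivial 2-blob would have to sit inside the unique $m$-blob, which is impossible. Parallel edges cannot arise in $\N^+$ because it is a binary simple graph. Hence the only suppression in \cref{def:semi_directed} is that of the root, after undirecting all non-reticulation edges.

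Now compare the two routes on the edges of $\N^+$.
\begin{itemize}
\item Removing leaves and suppressing indegree-1/outdegree-1 vertices, then undirecting non-reticulation edges and suppressing the degree-2 root, gives the semi-directed $\gen(\N^+)$.
\item Undirecting, suppressing the root, then removing leaves and suppressing degree-2 vertices other than indegree-2 ones, gives $\gen(\N)$.
\end{itemize}
Both routes consist of local operations on maximal paths of degree-2 vertices in the blob with leaves deleted. Such a path becomes a single edge, and that edge is directed exactly when the last edge of the path enters a reticulation. The same holds for the two parallel edges into a reticulation. The remaining care is bookkeeping around the root. If the root's two children merge into parallel arcs, or the root becomes a degree-2 vertex adjacent to a parallel pair, we must check that both routes suppress it identically. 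This is legitimate because the root has indegree $0$ in $\gen(\N^+)$, hence is not an indegree-2 vertex. With this claim established, the two directions follow.

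($\Leftarrow$) Let $G^+=\gen(\N^+)$ for a simple, strictly level-$k$ rooted network $\N^+$. We may assume $\N^+$ is LSA: any rooted network realizing $G^+$ can be taken by attaching a leaf to every side. After semi-directing, the resulting semi-directed network $\N$ is simple and strictly level-$k$, since reticulations are preserved. By the claim, semi-directing $G^+$ yields $\gen(\N)$, so it is a semi-directed generator.

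($\Rightarrow$) Let $G\cong\gen(\N)$, where $\N$ is obtained from some rooted LSA network $\N^+$. Restricting $\N^+$ to its unique nontrivial blob, with pendant subtrees collapsed to leaves, gives a simple, strictly level-$k$ rooted network $\N'^+$. This network yields the same semi-directed network $\N$ up to collapsing pendant trees, which does not affect the generator. The LSA assumption guarantees the root of $\N^+$ lies in the blob, so no part above the blob is lost. By the claim, $G$ is the semi-directing of $\gen(\N'^+)$.

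The main obstacle is the commutation claim near the root, where it is the delicate point. There, suppressing the root may create parallel edges, or a vertex whose suppression is blocked only in one route. We would handle this by an explicit case check on the neighbourhood of the root in $\gen(\N^+)$. The cases are: the root's two out-edges go to distinct vertices, or they form a parallel pair into a reticulation. Since $k\geq 2$, the root is never the only non-reticulation vertex. This case check is exactly why the statement excludes $k=1$.
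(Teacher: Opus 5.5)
There is a genuine gap: your commutation claim is false for general simple, strictly level-$k$ LSA networks, and your $(\Rightarrow)$ direction applies it to an arbitrary realizing network.

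\textbf{Why the claim fails.} \cref{def:semi_directed} \emph{identifies} parallel edges, whereas semi-directing a generator \emph{keeps} them. Suppose the root $\rho$ has children $a,b$ with an arc $a\to b$ into a reticulation $b$, and none of the arcs $\rho a$, $\rho b$, $ab$ carries a leaf. Then:
\begin{itemize}
\item in $\N^+$, suppressing $\rho$ creates parallel arcs $a\Rightarrow b$, which are then identified, so $b$ ceases to be a reticulation and $a,b$ are suppressed;
\item in $\gen(\N^+)$, semi-directing retains a pair of parallel edge sides into $b$.
\end{itemize}
A concrete instance: delete the leaf $x_4$ from $\N_1^+$ in \cref{fig:level2_generators} and suppress its former parent. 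The result is a simple, strictly level-2, LSA network whose semi-directed network is a level-1 triangle. Yet semi-directing its generator yields $\gen(\N_1)$. So root suppression is not the only operation, reticulations need not be preserved, and no case check at the root can reconcile the two routes.

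Your assertion that LSA forces the root into the main blob is also false. A root with a leaf child is still an LSA, and so is a root lying in a 2-blob that is later suppressed.

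\textbf{Consequences for the two directions.} Your $(\Leftarrow)$ direction survives only because you attach a leaf to every side. That is exactly the paper's choice. What it buys is that no parallel edges arise when $\rho$ is suppressed, not LSA-ness.

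Your $(\Rightarrow)$ direction fails, because an arbitrary $\N^+$ realizing $\N$ can have strictly larger level. For example:
\begin{itemize}
\item take a rooted realization of a simple, strictly level-2 network $\N$, with root $\rho_0$ and children $u,v$;
\item subdivide $\rho_0 v$ by a new vertex $w$;
\item add a new root $\rho$ with arcs to $\rho_0$ and $w$.
\end{itemize}
This gives a simple, strictly level-3 LSA network whose semi-directed network is still $\N$. Your $\N'^+$ is then level-3, and semi-directing its generator gives three indegree-2 vertices rather than $G$.

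\textbf{The missing idea.} You need to \emph{choose} the rooted network. The paper roots $\N$ itself by subdividing an edge of its source component and orienting all edges away from the new root, which is legitimate by \cite{holtgrefe2026characterizing}. Suppressing that root merely restores the subdivided edge. Hence $\gen(\N)$ and $\gen(\N^+)$ have the same vertices and reticulation sides apart from $\rho$, and the commutation you want then does hold.
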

\begin{proof}
Let~$G$ be a level-$k$, semi-directed generator and let $\N$ be a simple, level-$k$, semi-directed network such that $G \cong \gen(\N)$. By definition, there also exists a rooted network~$\N^+$ from which $\N$ can be obtained by performing the operations in \cref{def:semi_directed}. In particular, we can obtain one such network $\N^+$ by subdividing a single edge in the `source component'\footnote{Informally speaking, the source component of a semi-directed network is the set of vertices and edges that can be selected as the root of the network. Since this concept of a source component is technical and not needed elsewhere in the paper, we refer to \cite{holtgrefe2026characterizing} for a formal definition.} of $\N$ to create a root~$\rho$ and then orienting all edges away from it~\cite{holtgrefe2026characterizing}. Consequently, $G$ and $G^+ = \gen (\N^+)$ have the exact same vertex set, except for the addition of an outdegree-2 vertex (which corresponds to~$\rho$) in $G^+$. Moreover, each reticulation side of $G$ corresponds to exactly one reticulation side in $G^+$. Thus, semi-directing $G^+$ yields~$G$.

Conversely, suppose that $G$ is a mixed multi-graph that can be obtained by semi-directing some level-$k$, rooted generator $G^+$. Let $\N^+$ be a rooted network with root~$\rho$ such that $\gen (\N^+) \cong G^+$. Note that we may choose $\N^+$ such that it has at least one leaf attached to every side. Consequently, when performing the operations of \cref{def:semi_directed} to obtain the corresponding semi-directed network~$\N$, no parallel edges will appear after suppressing the root~$\rho$. Hence, $\N$ only differs from $\N^+$ by the suppression of~$\rho$ and the non-reticulation edges being undirected. An argument similar to that in the previous paragraph can then be used to show  that by construction $\gen (\N) \cong G$.
\end{proof}

In Proposition 2.3 of \cite{gambette2009structure} it was shown that level-$k$, rooted generators, for $k \geq 1$, have at most $3k-1$ vertices and $4k-2$ edges. Suppressing the root removes one vertex and one edge. Consequently, \cref{prop:semidirecting_rooted_generators} immediately yields the following bounds.

\begin{corollary}\label{cor:generator_size_bound}
A level-$k$, semi-directed generator, $k\geq 1$, has at most $3k-2$ vertices and $4k-3$ edges.
\end{corollary}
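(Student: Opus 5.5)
The plan is to split into the cases $k\geq 2$ and $k=1$, and to handle $k\geq 2$ by transferring the known rooted bound through \cref{prop:semidirecting_rooted_generators}.

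\emph{Case $k\geq 2$.} Let $G$ be a level-$k$, semi-directed generator. By \cref{prop:semidirecting_rooted_generators}, $G$ arises by semi-directing some level-$k$, rooted generator $G^+$. By Proposition~2.3 of \cite{gambette2009structure}, $G^+$ has at most $3k-1$ vertices and $4k-2$ edges. So it remains to count exactly how semi-directing changes these numbers.

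Undirecting the edges not oriented towards a reticulation changes neither the vertex count nor the edge count. The root $\rho$ of $G^+$ has indegree~0 and outdegree~2, so after undirecting it is a degree-2 vertex with indegree~0. Suppressing it deletes $\rho$ and its two incident edges and adds one new edge. The net effect is one fewer vertex and one fewer edge, which gives the bounds $3k-2$ and $4k-3$.

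The one point needing care is whether suppressing $\rho$ could create a self-loop or merge edges, which would change the count further. The two out-neighbours of $\rho$ might coincide, namely when $\rho$ has a pair of parallel edges to a reticulation. If they coincide, suppression would produce a loop on that vertex. In that case $G^+$ would consist of just $\rho$ and a reticulation, which is a level-1 structure, so it cannot occur for $k\geq 2$. If the new edge is parallel to an existing edge, it is still counted with multiplicity, so the edge count drops by exactly one. Either way the count is exact, and since we only need upper bounds the argument is complete.

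\emph{Case $k=1$.} By the convention in \cref{subsec:prelim_generator}, the generator is a single vertex with a bidirected self-loop. This has $1=3\cdot 1-2$ vertex and $1=4\cdot1-3$ edge, so the bound holds here as well.

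The main obstacle is the degenerate-root check in the case $k\geq 2$. It is minor, and the rest is bookkeeping.
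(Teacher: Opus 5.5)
Your proposal is correct and follows the paper's own argument. Like the paper, you combine \cref{prop:semidirecting_rooted_generators} with the rooted bound of $3k-1$ vertices and $4k-2$ edges from \cite{gambette2009structure}, and observe that suppressing the root removes exactly one vertex and one edge. Your separate treatment of $k=1$ via the single-vertex, bidirected-self-loop convention is a useful addition, since \cref{prop:semidirecting_rooted_generators} is stated only for $k\geq 2$ and the paper leaves that case implicit.
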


Following terminology for rooted networks~\cite{vanIersel2014,nipius2020rooted}, we call a set of sides of a level-$k$ (rooted or semi-directed) generator~$G$ a \emph{set of crucial sides} if it contains all reticulation sides and at least one parallel edge side of each pair of parallel edge sides. The \emph{cruciality} of~$G$ is the size of its smallest set of crucial sides.
A \emph{crucial leaf set} of a simple (rooted or semi-directed) network~$\N$ on $\X$ is a set of leaves~$\Y \subseteq \X$ such that the sides they are attached to form a set of crucial sides. The \emph{cruciality} of $\N$ is the cruciality of its generator, or equivalently, the size of its smallest crucial leaf set. We call $\N$ \emph{$k$-crucial}, if its cruciality is at most $k$.

As an example, $\{y_3, y_4\}$ forms a smallest set of crucial leaves of the strictly level-2 network~$\N_2$ in \cref{fig:level2_generators}c (bottom). Hence, $\N_2$ is $2$-crucial. On the other hand, $\{x_2\}$ is the unique smallest set of crucial leaves of the strictly level-2 network~$\N_1$ in \cref{fig:level2_generators}c (top). So, $\N_1$ is $1$-crucial.

By noting that each pair of parallel edge sides and each reticulation side of a level-$k$ generator correspond to a unique indegree-2 vertex of the generator, which in turn corresponds uniquely to one of the $k$~reticulation vertices of a strictly level-$k$ network, we observe the following.

\begin{observation}\label{obs:level_cruciality}
Let $\N$ be a simple, level-$k$, semi-directed network. Then, the cruciality of $\N$ is at most~$k$.
\end{observation}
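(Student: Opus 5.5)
We exhibit a crucial set of size at most~$k$ by choosing one side for each reticulation vertex. Let $\N$ be a simple, level-$k$, semi-directed network. If $\N$ has no blob of size at least three, or is strictly level-1, the statement is immediate. For strictly level-1 networks the generator is a single vertex with a bidirected self-loop, and the single reticulation side (or parallel pair) needs only one leaf. So we may assume $\N$ is strictly level-$j$ for some $2\le j\le k$ and let $G=\gen(\N)$. It suffices to show that $G$ has a set of crucial sides of size at most~$j$.

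The key step is to construct a map from the objects that a crucial set must hit into the set of reticulation vertices of~$G$, and to check that it is injective.
\begin{itemize}
\item Each reticulation side is, by definition, a degree-2 vertex of $G$ with indegree~2. We map it to itself.
\item Each pair of parallel edge sides consists of two directed edges with the same orientation, by the standing convention on mixed multi-graphs. Both edges therefore enter a common head vertex, which has indegree at least~2. We map the pair to this head.
\end{itemize}

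Injectivity splits into three checks.
\begin{itemize}
\item Distinct reticulation sides are distinct vertices, so they have distinct images.
\item A reticulation side has degree exactly~2, both incident edges incoming. Its only other neighbour direction is the attached leaf, which is not an edge of~$G$. It therefore cannot also be the head of a parallel pair and have anything else at the same time; and if it were the head of a parallel pair, that pair would be its only incident edges. In that case we simply count the pair and the side as the same object, and choosing the side as its crucial representative costs one element.
\item Two distinct parallel pairs cannot share a head. The graph is binary, so a head has indegree exactly~2, which means both its incoming edges form one pair.
\end{itemize}

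Hence a set of crucial sides can be formed by taking, for each reticulation vertex of $G$ in the image, one side: the reticulation side itself, or one edge of the parallel pair. Its size is at most the number of indegree-2 vertices of~$G$.

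It remains to identify the indegree-2 vertices of $G$ with the reticulation vertices of~$\N$. Forming the generator only removes leaves and suppresses degree-2 vertices that do not have indegree~2. Such suppressions never create or destroy indegree-2 vertices, and no reticulation of $\N$ is a leaf. So the indegree-2 vertices of $G$ are exactly the $j\le k$ reticulations of~$\N$, and the cruciality is at most $j\le k$.

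The only delicate point is the degenerate overlap in the second injectivity check, where a reticulation side is itself the head of a parallel pair. The bound survives because a single side then serves both roles, so no second element is needed.
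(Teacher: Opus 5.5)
Your route is the paper's. The paper justifies the observation in one sentence: each reticulation side and each pair of parallel edge sides corresponds to a unique indegree-2 vertex of the generator, and these vertices are exactly the reticulations of $\N$. Your map, your check that two parallel pairs cannot share a head, and your identification of the indegree-2 vertices of $G$ with the reticulations of $\N$ spell this out correctly.

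\textbf{The flaw.} It lies in the one place you flag as delicate. Suppose a reticulation side $r$ were also the head of a pair of parallel edge sides. Then it does not help to ``count the pair and the side as the same object''. A set of crucial sides must contain every reticulation side \emph{and} at least one parallel edge side of every pair, and $r$ is a vertex-side, not one of the two edges. So $\{r\}$ does not meet the requirement for that pair, this reticulation would cost two sides, and your bound would not follow.

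\textbf{The fix.} Show that this configuration cannot occur once $j\ge 2$.
\begin{enumerate}
\item Let $u$ be the common tail of the two parallel edges entering $r$. Since $r$ has degree $2$ in $G$, the set $\{u,r\}$ can be joined to the rest of $G$ only through edges at $u$.
\item For $j\ge 2$ there is a further indegree-2 vertex, and $G$ is connected, so such an edge exists. By binarity it is the only remaining edge at $u$, hence a cut-edge of $G$.
\item But $G$ arises from the unique blob of the simple network $\N$, which has no cut-edges, by deleting leaf edges and suppressing degree-2 vertices. These operations create no cut-edges, a contradiction.
\end{enumerate}
With the overlap ruled out, your map is genuinely injective and your argument is complete.
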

Note that the semi-directed network~$\N_1$ from \cref{fig:level2_generators}c shows that the cruciality of a simple, strictly level-$k$, semi-directed network can be strictly smaller than $k$. The cruciality of a simple, strictly level-1, semi-directed network is always equal to~1. The following further useful cruciality bound follows by remarking that the construction from~\cite{gambette2009structure} for building rooted generators ensures that each level-$k$, rooted generator (and thus each level-$k$, semi-directed generator), $k\geq 1$, has at least one reticulation side.

\begin{observation}\label{obs:cruciality_parallel_edges}
    Let $\N$ be a simple, semi-directed network whose generator has a pair of parallel edge sides. Then, the cruciality of $\N$ is at least~2.
\end{observation}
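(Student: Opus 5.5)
The plan is to show that a pair of parallel edge sides forces at least two sides into every set of crucial sides. Every set of crucial sides must contain all reticulation sides, together with at least one side from each pair of parallel edge sides. So it suffices to exhibit, in the generator $G = \gen(\N)$, a reticulation side that is different from both edges of the given parallel pair. Since a reticulation side is a vertex and an edge side is an edge, these sides are automatically distinct. Hence any crucial set then contains at least one reticulation side and at least one parallel edge side, so its size is at least~$2$.

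The key step is therefore to show that $G$ has at least one reticulation side. I would appeal to the remark just before the statement: the construction of \cite{gambette2009structure} guarantees that every level-$k$ rooted generator, $k \geq 1$, has a reticulation side. I would then transfer this to semi-directed generators, in two cases.

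\begin{itemize}
\item If $\N$ is strictly level-$1$, its generator is by convention a single vertex with a bidirected self-loop. This has no pair of parallel edge sides in the sense used here, so the case does not arise.
\item If $\N$ is strictly level-$k$ with $k \geq 2$, then by \cref{prop:semidirecting_rooted_generators} the generator $G$ is obtained by semi-directing some rooted generator $G^+$. Semi-directing only undirects the non-reticulation edges and suppresses the degree-$2$ root, which has outdegree~$2$.
\end{itemize}

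In the second case, a reticulation side of $G^+$ is an indegree-$2$, outdegree-$0$ vertex. Neither operation of semi-directing touches it, since its incoming edges stay directed and it is not the root. It therefore remains a degree-$2$ vertex of indegree~$2$ in $G$, that is, a reticulation side of $G$.

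The only delicate point, and the main obstacle, is making sure that the reticulation side survives semi-directing and really is counted as a side of $G$. In particular, the suppression of the root must not merge anything with it. This holds because the root is adjacent to the reticulation side only through directed edges that remain directed, and suppressing the root only replaces its two outgoing edges by a single edge between its two neighbours.

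Combining the two ingredients, every crucial set has size at least $2$. Hence the cruciality of $G$, and thus of $\N$, is at least~$2$.
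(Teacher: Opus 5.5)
Your proof is correct and is essentially the paper's argument: every generator has a reticulation side (inherited from the rooted generators of \cite{gambette2009structure} via \cref{prop:semidirecting_rooted_generators}). Since a reticulation side is a vertex and hence distinct from any parallel edge side, every set of crucial sides has at least two elements; your explicit check that the reticulation side survives semi-directing simply spells out the paper's parenthetical ``and thus each level-$k$, semi-directed generator''.
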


\section{Encoding features of level-$k$ networks by subnetworks}\label{sec:encoding_features}

In this section, we show that certain structural features of $k$-crucial (and in particular all level-$k$), semi-directed networks are encoded by subnetworks of bounded size. Our encoding results require subnetworks with at least three leaves. Recall that for $m \in \{0,1,2\}$ and a semi-directed network~$N$ on $n$ leaves, we adopted the convention $\Sub_m(\N)=\Sub_{\min\{3,n\}}(\N)$ (see~\cref{sec:prelims}), implicitly allowing additional arbitrary leaves to be added when the induced subnetwork has fewer than three leaves.

We first restate a result from \cite{huber2025encoding}, showing that the tree-of-blobs of a semi-directed network is encoded by its quarnets. Note that a similar result was shown in \cite{allman2023tree} in a slightly different setting.

\begin{proposition}[Cor.\,5.6\, \cite{huber2025encoding}]\label{prop:tree_of_blobs_encoding}
    Let $\N_1$ and $\N_2$ be two semi-directed networks on~$\X$ and let $\T_1$ and $\T_2$ be their respective trees-of-blobs. If $\Sub_4 (\N_1 ) \simeq \Sub_4 (\N_2)$, then $\T_1 \cong \T_2$.
\end{proposition}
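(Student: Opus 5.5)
The plan is to reduce the statement to the classical fact that an unrooted phylogenetic tree is determined by its quartets \cite{Colonius1981}. The reduction uses the observation that the tree-of-blobs of a quarnet is the quartet-level restriction of the tree-of-blobs of the whole network. Concretely, for $\N$ on $\X$ with tree-of-blobs $\T$ and any $\Y\subseteq\X$ with $|\Y|=4$, I will show that the tree-of-blobs of $\N|_\Y$ equals $\T|_\Y$. Here $\T|_\Y$ denotes the usual restriction: take the minimal subtree spanning $\Y$ and suppress degree-2 vertices. It is either the star on four leaves or one of the three resolved quartets $ab|cd$.

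\textbf{Step 1: quartet splits from 2-edge-cuts.} A cut-edge $e$ of $\N$ induces a split $A|B$ of $\X$. I will argue that $\N|_\Y$ has a cut-edge inducing $\{a,b\}|\{c,d\}$ if and only if $\N$ has a cut-edge separating $\{a,b\}$ from $\{c,d\}$, i.e.\ if and only if $\T|_\Y = ab|cd$.

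For the forward direction, note that every up-down path between leaves of $\Y$ crossing from the $\{a,b\}$ side to the $\{c,d\}$ side must traverse $e$. Hence the union of these paths, restricted to the relevant part, has $e$ (or a path through $e$) as a bridge. The subsequent operations (identifying parallel edges, suppressing degree-2 vertices and 2-blobs) cannot destroy this bridge. Suppressing a 2-blob only contracts material, and the two sides each contain leaves of $\Y$, so the split survives as a cut-edge of $\N|_\Y$.

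For the converse, suppose no cut-edge of $\N$ separates $\{a,b\}$ from $\{c,d\}$. Then in $\T$ the four leaves meet at a common blob vertex, or in a way giving a different split. I must show $\N|_\Y$ has no cut-edge realising $ab|cd$. Consider the blob $B$ of $\N$ corresponding to the relevant internal vertex of $\T$. Because $\N$ arises from an LSA rooted network, $B$ is biconnected and contains up-down paths connecting the attachment points of the subtrees containing $a,b,c,d$. These paths exit $B$ through distinct cut-edges.

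\textbf{Step 2: the main obstacle.} I expect the key lemma to be that the union of up-down paths from the four exits retains a 2-edge-connected structure on the part of $B$ linking $\{a,b\}$ to $\{c,d\}$. Only then can the restriction not reintroduce a cut-edge with split $ab|cd$. The difficulty is that up-down paths are restricted: paths cannot pass ``down then up'' through a reticulation. So it is not automatic that the union of up-down paths inside a blob stays bridgeless on the relevant part.

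My first attempt would be to root $\N$ (possible by Definition~\ref{def:semi_directed}) and take the lowest stable ancestor structure of $B$. The idea is that, for any two exits, there are two edge-disjoint up-down routes through $B$, obtained by going up from each exit to the source region of $B$ along two different parent lineages, which exist since $B$ has no cut-edge. Showing that two such routes exist for the pairs $(a,c)$ and $(b,d)$ simultaneously, without merging into a single bridge, is the delicate part. I would handle it by a Menger-type argument in the rooted blob.

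\textbf{Step 3: conclude.} Given the Step 1 characterisation, $\Sub_4(\N_1)\simeq\Sub_4(\N_2)$ implies $\T_1|_\Y=\T_2|_\Y$ for every 4-set $\Y$, since isomorphic quarnets have the same tree-of-blobs. By Colonius--Schulze, a possibly non-binary unrooted tree is determined by its quartet restrictions, with stars included. Hence $\T_1\cong\T_2$. The small cases $n\le 3$ are trivial, because the tree-of-blobs is then a star or a single edge determined by the convention on $\Sub_k$.
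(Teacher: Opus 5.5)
Your proposal has a genuine gap. The converse half of Step~1 is false; equivalently, so is the key lemma you plan to prove in Step~2 by a Menger-type argument. Consequently the identity ``tree-of-blobs of $N|_Y$ $=$ $T|_Y$'', on which Step~3 rests, does not hold.

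Consider the simple level-1 network whose cycle is $r,u_1,u_2,u_3,u_4,w$, with $u_1r$ and $wr$ directed into $r$ and all other edges undirected. Attach leaves $a,b,c,d,e,f$ to $u_1,u_2,u_3,u_4,w,r$, respectively; rooting on the edge $u_4w$ shows this is a valid semi-directed network. Its tree-of-blobs is a star, so $T|_{\{a,b,c,d\}}$ is a star.

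However, a path between two of $a,b,c,d$ through $r$ must enter $r$ along one reticulation edge and leave along the other, i.e.\ go down and then up. So no up-down path between these leaves uses $r$. The union of up-down paths is just the path $u_1u_2u_3u_4$ with its pendant leaves, and $N|_{\{a,b,c,d\}}$ is the quartet tree $ab|cd$. The blob does supply two edge-disjoint routes, but they are not up-down paths between the chosen leaves, so no Menger-type argument can establish your claim.

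Only your forward direction is true: every quartet of $T$ is displayed by the corresponding quarnet, but quarnets can be strictly more resolved. In this example, the quartets read off the quarnets are exactly those of the caterpillar $(a,b,c,d,e)$, while every quarnet containing $f$ has a 4-blob. That is not the quartet set of any tree on $\{a,\dots,f\}$. So Colonius--Schulze cannot be applied to it, and equality of quarnets does not give $T_1|_Y=T_2|_Y$ along your route.

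What is missing is a way to recover the splits of $T$ that tolerates this over-resolution. A natural rule is the following: for $|A|,|B|\geq 2$, $A|B$ is a split of $T$ if and only if, for all $a,a'\in A$ and $b,b'\in B$, the quarnet $N|_{\{a,a',b,b'\}}$ has a cut-edge separating $\{a,a'\}$ from $\{b,b'\}$. This condition depends only on $\Sub_4(N)$, after which the splits-equivalence theorem gives $T_1\cong T_2$.

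\begin{itemize}
\item Your forward direction gives ``only if''.
\item For ``if'', a bipartition incompatible with some split of $T$ is excluded by a conflicting quartet of $T$, since a four-leaf tree-of-blobs has at most one nontrivial split.
\item The real work is a bipartition compatible with $T$ that splits the incident cut-edges of some blob into two groups of size at least two. There you must exhibit a crossing quarnet in which that blob survives.
\end{itemize}

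As the example shows, arbitrary representatives do not suffice; the leaves must be chosen using reticulations. One option is the leaf below the sink of an excellent cycle (\cref{lem:excellent_original}), whose up-down paths to any leaf on that cycle sweep out the entire cycle. This is the same phenomenon that forces the extra reticulation leaf $z$ in \cref{lem:excellent_cycle}.

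The present paper does not reprove this statement; it imports it from \cite{huber2025encoding}. Some argument of this reticulation-based kind is the ingredient your proposal lacks.
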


We now prove an important fact concerning crucial leaves (\cref{prop:crucial_leaves}). See \cite[Lemma~1]{vanIersel2014} for a similar result but for rooted networks and restricted to the case that $|\Y| = 3$. As a consequence we show in \cref{cor:generator_encoding} that if two simple, $k$-crucial, semi-directed networks have the same sets of $k$-nets then they have the same generators.

\begin{lemma}\label{prop:crucial_leaves}
Let $\N$ be a simple, semi-directed network on $\X$ and let $\Y \subseteq \X$ with $|\Y| \geq 3$. Then, $\Y \subseteq \X$ is a set of crucial leaves of $\N$ if and only if $\N|_{\Y}$ is simple and $\gen (\N) \cong \gen (\N|_{\Y})$.
\end{lemma}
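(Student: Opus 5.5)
We prove both directions by analysing directly which parts of $\N$ survive when we take the union of all up-down paths between leaves of $\Y$ and then suppress. Write $G=\gen(\N)$. We view $\N$ as $G$ with leaves hung on its sides, and every vertex of $\N$ other than a leaf or a generator vertex as a subdivision vertex on an edge side.

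\emph{($\Rightarrow$) Crucial leaves give the same generator.} Assume $\Y$ is a crucial leaf set. The first step is to show that every edge of $\N$ outside the pendant edges of $\X\setminus\Y$ lies on some up-down path between two leaves of $\Y$. It is convenient to work with a rooted version $\N^+$, obtained by rooting $\N$ in its source component. There an up-down path is a pair of directed paths from a common vertex. Every reticulation vertex of the generator either carries a leaf of $\Y$ (reticulation side) or lies directly below a parallel pair one of whose edges carries a leaf of $\Y$. So each reticulation of $\N$ has a $\Y$-leaf below it that is reachable via \emph{both} of its incoming edges. Using this, we show by induction along a topological order that every vertex of the blob has, below each of its outgoing edges, a $\Y$-leaf reachable through that edge. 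From a tree vertex we then get two descending paths to $\Y$-leaves that are internally disjoint, or at least join into an up-down path. We also need that any edge of the blob lies on such a path. This uses the fact that $|\Y|\ge 3$ and that the generator is biconnected, so the two branches can be chosen to reach distinct leaves. This is the step I expect to be the main obstacle: for each edge one must produce two leaves of $\Y$ and an up-down path through it. The difficulty is near parallel edges and reticulations, where the naive descending paths may meet; the crucial condition is exactly what rules that out.

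Once all blob edges are retained, only pendant leaves not in $\Y$ are deleted. The subsequent suppressions only remove subdivision vertices on edge sides. No parallel edges are created, because each parallel pair retains a $\Y$-leaf on one side, and each reticulation side keeps its leaf, so no 2-blob suppression occurs. Hence $\N|_\Y$ is simple, with the same blob structure, and $\gen(\N|_\Y)\cong G$.

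\emph{($\Leftarrow$) Same generator forces crucial leaves.} Assume $\Y$ is not crucial. Then either (a) some reticulation side $r$ carries no leaf of $\Y$, or (b) some parallel pair has no $\Y$-leaf on either edge. In case (a), after taking the up-down paths the pendant edge at $r$ disappears and $r$ has degree~2 and indegree~2. This causes one of two things. The reticulation may be absent from all up-down paths altogether, because an up-down path cannot pass in and out through two incoming edges. Or it survives only as part of a structure that is then suppressed. In case (b), the two parallel edges become parallel edges in the restriction and are identified. Either way, the resulting simple blob (if $\N|_\Y$ is still simple) has strictly fewer reticulations than $\N$. We then count: by the correspondence used in \cref{obs:level_cruciality}, reticulation vertices of a generator match its reticulation sides plus its parallel pairs. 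If $\gen(\N|_\Y)\cong G$, the restriction would need the same number of indegree-2 vertices. That is impossible, since restriction never creates new reticulations and we have destroyed at least one. If instead $\N|_\Y$ is not simple, the right-hand side fails immediately. This direction is routine bookkeeping. The only care needed is to check that deleted reticulations cannot be compensated by anything else, which follows because restriction and suppression never create reticulations.
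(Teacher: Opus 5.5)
Your converse direction is fine and is essentially the paper's argument. The forward direction has a genuine gap, at exactly the step you flag.

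\textbf{The gap.} Everything rests on the claim that every edge of the blob lies on an up-down path between two leaves of $\Y$, and you do not prove it. Your topological-order induction only shows that below each outgoing edge there is \emph{some} $\Y$-leaf. It gives no control over disjointness. For an edge $(a,b)$ of $\N^+$, the descending path from $b$ and a descending path from another child of an ancestor of $a$ can merge, typically at a reticulation. Finding two branches that do not merge is the entire content of the step, and none of your three appeals supplies it:
\begin{itemize}
\item Biconnectivity gives two disjoint undirected paths, not ones that respect the up-down orientation constraints.
\item Cruciality does not rule out merging. It only guarantees that every reticulation side and every parallel pair carries a $\Y$-leaf.
\item The hypothesis $|\Y|\ge 3$ is not what makes branches disjoint. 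In the lemma it only prevents $\N|_\Y$ from being collapsed to a single edge by 2-blob suppression.
\end{itemize}

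\textbf{A false premise.} Not every reticulation of a generator is a reticulation side or the head of a parallel pair. The level-3 generators of types~1 and~2 have three reticulations but cruciality~1. Your intermediate conclusion still holds, for a different reason: following outgoing edges in the rooted generator always ends at a reticulation side, which carries a $\Y$-leaf. In the converse, likewise, you only need that $\gen(\N)$ has as many indegree-2 vertices as $\N$ has reticulations, not the bijection you state.

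\textbf{How the paper avoids the problem.} It never constructs paths. It removes the leaves of $\X\setminus\Y$ one at a time. Such a leaf $x$ is neither on a reticulation side nor the only leaf on a parallel pair, so $\N|_{\X\setminus\{x\}}$ arises just by deleting $x$ and suppressing its attachment vertex. The result is still simple with the same generator, and $\Y$ is still a crucial leaf set of it, so one iterates down to $\N|_\Y$. Each step is a local structural observation, which the paper treats as immediate.

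\textbf{Keeping your all-at-once formulation.} You can use the same idea in one go:
\begin{enumerate}
\item Delete all leaves of $\X\setminus\Y$ and suppress their attachment vertices. Doing the same in a rooting $\N^+$ shows this gives a semi-directed network $\N''$ on $\Y$ with generator $\gen(\N)$. Cruciality prevents parallel edges and leafless reticulations, and $|\Y|\ge 3$ excludes 2-blobs.
\item Up-down paths between $\Y$-leaves in $\N$ correspond exactly to those in $\N''$, so $\N|_\Y\cong\N''|_\Y$.
\item It then remains only to use that a semi-directed network coincides with its restriction to its full leaf set. The paper's conventions already presuppose this, for instance in its remark that a network with few leaves is encoded by its subnetwork on all its leaves.
\end{enumerate}
This replaces a hand-built family of disjoint descending paths.
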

\begin{proof}
    Let $\Y$ be a set of crucial leaves of $\N$ and set $\Z = \X \setminus \Y$. We show that $\N|_{\Y}$ is simple and $\gen(\N) \cong \gen(\N|_{\Y})$ by repeatedly removing leaves in $\Z$. This is trivial if $\Z = \emptyset$, so take any $x \in \Z$. Since $x \notin \Y$, it is neither attached to a reticulation side nor the unique leaf attached to a pair of parallel edge sides. Thus, constructing $\N|_{\X \setminus \{x\}}$ from $\N$ amounts solely to deleting the leaf~$x$ and suppressing a degree-2 vertex, keeping the network simple and leaving the generator unchanged. Moreover, the leaves in $\Y$ form a set of crucial leaves of $\N|_{\X \setminus \{x\}}$. Repeating this procedure for every leaf in $\Z$ yields $\N|_{\Y}$ while keeping the network simple and preserving the generator throughout. Hence, $\N|_{\Y}$ is simple and $\gen(\N) \cong \gen(\N|_{\Y})$.

    Conversely, suppose that~$\Y$ is not a crucial leaf set of $\N$. Then, there is either a reticulation side or a pair of parallel edge sides with no attached leaf in~$\Y$. In the first case, it is clear that $\N|_{\Y}$ has at least one reticulation vertex fewer than~$\N$. Since we identify parallel edges in the construction of $\N|_{\Y}$ (\cref{def:restriction}), a similar argument applies in the second case. It follows that $\N|_{\Y}$ has a strictly lower level than $\N$. Thus, either $\N|_{\Y}$ is not simple, or, if it is, $\gen (\N) \not \cong \gen (\N|_{\Y})$.
\end{proof}

\begin{proposition}\label{cor:generator_encoding}
    Let $\N_1$ and $\N_2$ be two simple, $k$-crucial, semi-directed networks on~$\X$. 
    If $\Sub_k (\N_1 ) \simeq\Sub_k (\N_2)$, then $\gen (\N_1) \cong \gen (\N_2)$.
\end{proposition}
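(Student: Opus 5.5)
The plan is to reduce everything to \cref{prop:crucial_leaves}. The idea is to pick a smallest crucial leaf set of $\N_1$ and enlarge it to a set $\Y$ of exactly $\min\{\max\{k,3\},n\}$ leaves. Any superset of a crucial leaf set is again crucial, since the defining property only asks that certain sides receive at least one leaf. The $k$-net $\N_1|_{\Y}$ then carries the generator of $\N_1$. Matching this $k$-net with an isomorphic one in $\Sub_k(\N_2)$ and reading off its generator should give the claim, provided the matched $k$-net is induced on a crucial set of $\N_2$.

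In order, the steps are as follows.

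First, I dispose of small cases. If $n<3$, or if one of the networks is a tree or trivially simple, the generator is either absent or determined by the number of leaves. Strictly level-1 networks are covered by the conventional generator. So assume $n\geq 3$ and set $m=\min\{\max\{k,3\},n\}$; by the conventions for $\Sub_k$, the set $\Sub_k(\N_i)$ consists of $m$-leaf subnetworks.

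Second, by \cref{prop:crucial_leaves}, for every crucial $\Y\subseteq\X$ with $|\Y|=m$ the subnetwork $\N_1|_\Y$ is simple and $\gen(\N_1|_\Y)\cong\gen(\N_1)$. The same holds for $\N_2$ with its own crucial sets of size $m$; these exist since the cruciality is at most $k\leq m$ when $k\le n$, and when $k>n$ the set $\X$ itself is crucial.

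Third, I compare levels. Let $r_i$ be the number of reticulations of $\N_i$. Every subnetwork of $\N_i$ has at most $r_i$ reticulations, with equality when it is induced on a crucial set. By the converse direction of \cref{prop:crucial_leaves}, equality fails when the set is not crucial, because reticulations are then lost. Hence $r_i$ is the maximum number of reticulations among members of $\Sub_k(\N_i)$. Since $\Sub_k(\N_1)\simeq\Sub_k(\N_2)$, this gives $r_1=r_2=:r$.

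Fourth, I choose a crucial $\Y$ for $\N_1$ with $|\Y|=m$. By the bijection, $\N_1|_\Y\cong \N'$ for some $\N'\in\Sub_k(\N_2)$. Since isomorphisms are the identity on leaves, $\N'$ is induced on the same leaf set $\Y$, so $\N'=\N_2|_\Y$. This network has $r$ reticulations, the maximum, so $\Y$ is crucial for $\N_2$ by the argument in the third step. Therefore
\[
\gen(\N_2)\cong\gen(\N_2|_\Y)\cong\gen(\N_1|_\Y)\cong\gen(\N_1).
\]

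The main obstacle is the third step: justifying that a non-crucial leaf set strictly loses reticulations. \cref{prop:crucial_leaves} as stated only gives ``not simple or different generator.'' I would extract the stronger claim from its proof. A reticulation side without leaves in $\Y$ makes that reticulation vertex disappear. An unlabeled pair of parallel edges gets identified and its reticulation suppressed. Either way, $\N_2|_\Y$ has fewer than $r$ reticulations in total. Counting reticulations rather than the level avoids any issue with non-simple subnetworks.
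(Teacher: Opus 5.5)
Your proof is correct and follows essentially the same route as the paper: take a crucial leaf set $\Y$ of $\N_1$, transfer $\N_1|_\Y\cong\N_2|_\Y$ through the $k$-nets, and count reticulations to conclude that $\Y$ is also crucial for $\N_2$, using the strict loss of reticulations for non-crucial sets extracted from the proof of \cref{prop:crucial_leaves}, exactly as the paper does. The differences are minor: you pad $\Y$ to exactly $\min\{\max\{k,3\},n\}$ leaves, which handles the size convention for $\Sub_k$ and the hypothesis $|\Y|\geq 3$ of \cref{prop:crucial_leaves} that the paper glosses over, and you obtain $r_1=r_2$ from the maximum reticulation count over the $k$-nets rather than assuming $r_1\geq r_2$ without loss of generality.
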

\begin{proof}
    Let $r_1$ (resp. $r_2$) be the number of reticulations of $\N_1$ (resp. $\N_2$). Assume without loss of generality that $r_1 \geq r_2$. Let $\Y \subseteq \X$ be a smallest set of crucial leaves of $\N_1$ and note that $|\Y | \leq k$. 
    
    By \cref{prop:crucial_leaves}, $\gen (\N_1) \cong \gen (\N_1|_{\Y} )$. Since $\Sub_k (\N_1 ) \simeq\Sub_k (\N_2)$, we have $\N_1|_\Y \cong \N_2|_\Y$, and so $\gen (\N_1|_\Y) \cong \gen (\N_2|_\Y )$. Thus, $\gen (\N_1) \cong \gen (\N_2|_\Y )$ and both have $r_1$ reticulations. Observe that $\N_2|_{\Y}$ can have at most $r_2$ reticulations, since constructing a subnetwork never creates reticulations. By assumption, we also have $r_1 \geq r_2$, so $\N_2|_\Y$ and $\N_2$ have the same number of reticulations. It follows that $\Y$ is a set of crucial leaves of $\N_2$ (see the proof of \cref{prop:crucial_leaves}), and thus $\gen (\N_2|_\Y ) \cong \gen (\N_2 )$ by \cref{prop:crucial_leaves}, proving the result.
\end{proof}

We call an automorphism of a generator \emph{reticulation-preserving} if (i) it maps each reticulation side to itself; and (ii) it maps each pair of parallel edge sides to itself. We define the \emph{side-equivalence relation} on the sides of a generator~$G$ as follows: two sides are \emph{side-equivalent} if there exists a reticulation-preserving automorphism of~$G$ mapping the two sides to each other. Note that the reflexive, symmetric, and transitive properties of an equivalence relation indeed hold.
We denote by $\class ( G )$ the set of side-equivalence classes of~$G$.
See \cref{fig:equivalent_sides}a for an example: the edge sides labeled $B_1$ in generator~$G_1$ all lie in the same side-equivalence class, as $G_1$ admits several symmetries. In particular, a reticulation-preserving automorphism exists that maps the two vertices (in black) of the edge side labeled $A_1$ to each other, and similarly a reticulation-preserving automorphism exists that maps the gray vertices to one another. For generator~$G_2$, however, the edge sides labeled $B_2$ and $C_2$, as well as $D_2$ and $E_2$, are not in the same class, since any reticulation-preserving automorphism must map each pair of parallel edge sides to itself.

Given a simple, semi-directed network~$\N$ on $\X$ and a generator $G \cong \gen (N)$, the \emph{leaf-to-side function} $\sigma_\N: \X \to \class ( G )$ maps each leaf in~$\X$ to the side-equivalence class of~$G$ in which the side that it is attached to is contained. An isomorphism $\psi: \gen (\N_1) \to \gen (\N_2)$ between generators of two networks~$\N_1, \N_2$ maps side-equivalence classes to side-equivalence classes, and thus induces a bijection $\psi_*: \class(\gen(\N_1)) \to \class(\gen(\N_2))$. We call two simple, semi-directed networks $\N_1$ and $\N_2$ on~$\X$ \emph{leaf-side consistent} if there is such an isomorphism~$\psi$ with $\psi_* (\sigma_{\N_1} (x)) = \sigma_{\N_2} (x)$ for all $x \in \X$. See \cref{fig:equivalent_sides}b and~c.

Without loss of generality, we assume that, when two simple networks have isomorphic generators, they have the same generator~$G$. Moreover, when~$\N_1,\N_2$ are leaf-side consistent, we assume that $\sigma_{\N_1} (x) = \sigma_{\N_2} (x)$ for all $x \in \X$, i.e., each leaf is attached to sides from the same side-equivalence class of~$G$ in $\N_1$ and in $\N_2$.

\begin{figure}[htb]
    \centering
    \input{fig4}
    \caption{(a): Two semi-directed level-3 generators~$G_1$ and $G_2$. Sides with the same labels are side-equivalent (see text for further explanation). (b): Two non-isomorphic semi-directed networks on $\{x_1, \ldots, x_6\}$ that both have $G_1$ as their generator. The two networks are leaf-side consistent, since the sides to which each leaf is attached in the two networks are from the same side-equivalence classes of~$G_1$. (c): Two non-isomorphic semi-directed networks on $\{y_1, \ldots, y_6\}$ that both have $G_2$ as their generator. The two networks are not leaf-side consistent, since the sides to which $y_5$ is attached in the two networks are from different side-equivalence classes of~$G_2$.}
    \label{fig:equivalent_sides}
\end{figure}

By observing that the side-equivalence class containing the side to which a leaf~$x$ is 
attached can be determined from a subnetwork with the same generator as the original network, we obtain the following.
\begin{proposition}\label{lem:equivalence_class_encoding}
    Let $\N_1$ and $\N_2$ be two simple, $k$-crucial, semi-directed networks on~$\X$. 
    If $\Sub_{k+1} (\N_1 ) \simeq \Sub_{k+1} (\N_2)$, then $\N_1$ and $\N_2$ are leaf-side consistent.
\end{proposition}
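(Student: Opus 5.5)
The plan is to fix one ``anchor'' crucial leaf set $\Y$, compare the two networks on $\Y\cup\{x\}$ for each leaf $x$, and then show that the resulting generator isomorphisms all induce the same map on side-equivalence classes.

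\textbf{Step 1: reduce to $k$-nets.} Every $k$-net of $\N_i$ is a subnetwork of some $(k+1)$-net of $\N_i$. Subnetwork construction is transitive: $(\N|_{\Y'})|_{\Y} = \N|_{\Y}$ for $\Y \subseteq \Y'$, which follows from well-definedness in \cite{huber2025encoding}. Hence $\Sub_{k+1}(\N_1)\simeq\Sub_{k+1}(\N_2)$ implies $\Sub_k(\N_1)\simeq\Sub_k(\N_2)$. By \cref{cor:generator_encoding}, $\gen(\N_1)\cong\gen(\N_2)$.

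\textbf{Step 2: choose a common crucial leaf set.} Assume, as in that proof, that $\N_1$ has at least as many reticulations as $\N_2$. Let $\Y$ be a smallest crucial leaf set of $\N_1$, so $|\Y|\le k$. The argument in the proof of \cref{cor:generator_encoding} shows that $\Y$ is also a crucial leaf set of $\N_2$.

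\textbf{Step 3: local isomorphisms.} For every $x\in\X$, the set $\Y_x=\Y\cup\{x\}$ has at most $k+1$ elements. If $|\Y_x|<3$, we pad it with arbitrary leaves, as per the convention. Supersets of crucial leaf sets are crucial, so by \cref{prop:crucial_leaves} both $\N_1|_{\Y_x}$ and $\N_2|_{\Y_x}$ are simple with generator $G$. Moreover, the proof of \cref{prop:crucial_leaves} shows that passing to $\N_i|_{\Y_x}$ only deletes leaves and suppresses degree-2 vertices. This gives a canonical identification of the sides of $\gen(\N_i|_{\Y_x})$ with those of $\gen(\N_i)$, under which $x$ is attached to the same side in both. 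By hypothesis there is an isomorphism $\varphi_x:\N_1|_{\Y_x}\to\N_2|_{\Y_x}$. It induces a generator isomorphism $\psi_x:\gen(\N_1)\to\gen(\N_2)$ that sends the side of $x$ in $\N_1$ to the side of $x$ in $\N_2$. Hence $(\psi_x)_*(\sigma_{\N_1}(x))=\sigma_{\N_2}(x)$.

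\textbf{Step 4: uniformity (main obstacle).} A single $\psi$ must work for all $x$ simultaneously, so we show that all $(\psi_x)_*$ coincide. Restricting $\varphi_x$ and $\varphi_{x'}$ to the leaves $\Y$ gives two isomorphisms $\N_1|_\Y\to\N_2|_\Y$. Therefore $\alpha=\psi_{x'}^{-1}\circ\psi_x$ is the generator automorphism induced by an automorphism of $\N_1|_\Y$ fixing every leaf, so $\alpha$ maps the side of each $y\in\Y$ to itself. Since $\Y$ is crucial, $\alpha$ fixes every reticulation side, and it fixes at least one edge of each pair of parallel edge sides. Because parallel edges share both endpoints, $\alpha$ must then map each such pair to itself. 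Thus $\alpha$ is reticulation-preserving and acts trivially on $\class(G)$, so $(\psi_x)_*=(\psi_{x'})_*$.

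Taking $\psi=\psi_{x_0}$ for any fixed $x_0$ then gives $\psi_*(\sigma_{\N_1}(x))=\sigma_{\N_2}(x)$ for all $x\in\X$, which is leaf-side consistency. The delicate points are Step 4 and checking that the side identifications in Step 3 are compatible with the restriction to $\Y$.
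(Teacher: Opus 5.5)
Your proof is correct and follows the same route as the paper: a crucial leaf set $\Y$ plus the leaf $x$, with \cref{prop:crucial_leaves} letting you read off the side of $x$ in the $(k+1)$-net on $\Y\cup\{x\}$. Your Step~4 usefully makes explicit why one generator isomorphism works for all leaves simultaneously, which the paper's contradiction argument leaves implicit in its fixed identification of the two generators. One small repair: when $|\Y|\le 2$ (cruciality at most~2), $\N_1|_\Y$ is a single edge or undefined and has no generator $G$, so rather than restricting to $\Y$, note directly that $\varphi_x$ and $\varphi_{x'}$ fix every $y\in\Y$, whence $\psi_x$ and $\psi_{x'}$ both send the side of $y$ in $\N_1$ to the side of $y$ in $\N_2$, exactly as in Step~3.
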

\begin{proof}
    By \cref{cor:generator_encoding}, we may assume $\gen (\N_1) \cong \gen (\N_2)$. Towards a contradiction, assume that $\N_1$ and $\N_2$ are not leaf-side consistent. Let $x \in \X$ be such that $\sigma_{\N_1} (x) \neq \sigma_{\N_2} (x)$. Let $\Y \subseteq \X$ be a set of $k$ crucial leaves of $\N_1$ and let $\Z =\Y \cup \{ x\}$. Since $\Sub_{k+1} (\N_1 ) \simeq \Sub_{k+1} (\N_2)$, we have $\N_1|_\Z \cong \N_2|_\Z$, so $\gen (\N_1|_{\Z}) \cong \gen (\N_2|_{\Z})$. By \cref{prop:crucial_leaves}, $\gen (\N_1) \cong \gen (\N_1|_{\Z})$. Therefore, $\N_1$, $\N_2$, $\N_1|_\Z$ and $\N_2|_\Z$ all have the same generator.
    But since $\sigma_{\N_1} (x) \neq \sigma_{\N_2} (x)$, we get $\sigma_{\N_1|_{\Z}} (x) \neq \sigma_{\N_2|_{\Z}} (x)$. So, $\N_1|_{\Z} \not \cong \N_2|_{\Z}$: a contradiction.
\end{proof}

We say that a set of leaves of a simple network \emph{share a side}, if they are attached to the same side of the generator of the network. For example, in \cref{fig:equivalent_sides}b, $x_1$ and $x_2$ share a side in either of the two networks. Since pairs of parallel edge sides are always in the same side-equivalence class, two leaves in a pair of leaf-side consistent networks are either both attached to a parallel edge side, or both are not. Note that this does not rule out that two leaves are both attached to a pair of parallel edge sides, but in one network they share a parallel edge side and in the other they are attached to different parallel edge sides of the pair.
The following again uses \cref{cor:generator_encoding}, with Part~(b) of the proposition relying on the fact that a leaf attached to a parallel edge side must be part of some smallest crucial leaf set.
\begin{proposition}\label{lem:share_sides}
    Let $\N_1$ and $\N_2$ be two simple, $k$-crucial, semi-directed networks on $\X$. Let $x, y \in \X$. Suppose that one of the following holds.
    \begin{enumerate}[label=(\alph*)]
    \looseness = -1
        \item $\Sub_{k+2} (\N_1 ) \simeq \Sub_{k+2} (\N_2)$; or
        \item $\Sub_{k+1} (\N_1 ) \simeq \Sub_{k+1} (\N_2)$ and $x$ is attached to a parallel edge side in $\N_1$.
    \end{enumerate}
    Then $x$ and $y$ share a side in $\N_1$ if and only if they share a side in $\N_2$.
\end{proposition}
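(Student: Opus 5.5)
The plan is to mimic the proof of \cref{lem:equivalence_class_encoding}: restrict to a crucial leaf set augmented by the pair $\{x,y\}$, so that the restriction keeps the generator, and then read off from the isomorphic restrictions whether $x$ and $y$ share a side. By \cref{cor:generator_encoding} (applicable in both cases since $k \le k+1 \le k+2$) we may assume $\gen(\N_1) \cong \gen(\N_2) = G$. By symmetry between $\N_1$ and $\N_2$ it suffices to show that if $x$ and $y$ share a side in $\N_1$ then they share a side in $\N_2$. Suppose instead that they share a side $s$ in $\N_1$ but not in $\N_2$.

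\textbf{Case (a).} Let $\Y$ be a smallest crucial leaf set of $\N_1$, so $|\Y| \le k$, and put $\Z = \Y \cup \{x,y\}$, so $|\Z| \le k+2$. Then $\N_1|_\Z \cong \N_2|_\Z$. Since supersets of crucial leaf sets are crucial, \cref{prop:crucial_leaves} gives $\gen(\N_1|_\Z) \cong G$, and hence $\gen(\N_2|_\Z) \cong G$ as well. Because $\N_2|_\Z$ has the same number of reticulations as $G$, the argument in the proof of \cref{cor:generator_encoding} shows that $\Z$ is also a crucial leaf set of $\N_2$.

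The key observation is that when $\Z$ is crucial, $\N_i|_\Z$ arises from $\N_i$ merely by deleting the leaves outside $\Z$ and suppressing degree-2 vertices. This operation keeps every remaining leaf on the same side of $G$, so sharing a side is preserved under restriction. Hence $x,y$ share a side in $\N_1|_\Z$ but not in $\N_2|_\Z$.

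It remains to see that ``sharing a side'' is an isomorphism invariant of simple networks. An isomorphism $\N_1|_\Z \to \N_2|_\Z$ fixes leaves. After deleting leaves and suppressing, it induces an isomorphism of generators that maps the side carrying $x$ to the side carrying $x$, and likewise for $y$. In particular, two leaves on a common side (equivalently, two leaves whose neighbours lie on a common path of suppressed vertices, or on a common reticulation side) are mapped to two leaves on a common side. This yields the contradiction. The main care needed is this invariance claim, in particular for reticulation sides and for pairs of parallel edge sides. For the latter I would argue directly in the network: two leaves share a parallel edge side exactly when their attachment vertices lie on one of the two internally disjoint paths between the same pair of generator vertices, and this is preserved by isomorphism.

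\textbf{Case (b).} Here we save one leaf. Since $x$ lies on a parallel edge side, there is a smallest crucial leaf set $\Y$ of $\N_1$ with $x \in \Y$: take any smallest crucial set and replace its leaf representing that parallel pair by $x$. Then $\Z = \Y \cup \{y\}$ has size at most $k+1$, and the argument of case (a) goes through verbatim. I expect the main subtlety to be checking this exchange step. If the smallest crucial set uses no leaf on that pair, we cannot simply replace one, but every crucial set must contain a leaf on one of the two parallel sides, so such a leaf exists and can be swapped for $x$. One should also check that case (b) needs $x$ on a parallel side in $\N_2$ too. This follows from leaf-side consistency (\cref{lem:equivalence_class_encoding}) and is needed for the converse direction, where the roles of $\N_1$ and $\N_2$ are swapped.
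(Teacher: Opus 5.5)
Your proposal is correct and follows the paper's proof. In both, you restrict to a smallest crucial leaf set of $\N_1$ enlarged by $x,y$ (or, in case (b), a crucial set chosen to contain $x$, enlarged by $y$), note that this restriction preserves the generator and which side each leaf lies on, and transfer side-sharing through $\N_1|_\Z \cong \N_2|_\Z$. Your explicit checks that $\Z$ is also crucial for $\N_2$ and that side-sharing is invariant under isomorphism fill in details the paper leaves implicit; since the first check already gives both directions at once, the symmetry reduction and the appeal to leaf-side consistency in case (b) are correct but unnecessary.
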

\begin{proof}
    First suppose that $\Sub_{k+2} (\N_1 ) \simeq \Sub_{k+2} (\N_2)$. By \cref{cor:generator_encoding} we may assume that $\N_1$ and $\N_2$ have the same generator.
    Let $\Y \subseteq \X$ be a set of $k$ crucial leaves of $\N_1$ and let $\Z = \Y \cup \{x, y\}$. Since $\Sub_{k+2} (\N_1 ) \simeq \Sub_{k+2} (\N_2)$, we have that $\N_1|_\Z \cong \N_2|_\Z$, so $\gen (\N_1|_\Z) \cong \gen (\N_2|_\Z)$. 
    By \cref{prop:crucial_leaves}, we obtain $\gen (\N_1) \cong \gen (\N_1|_\Z)$. Therefore, $\N_1$, $\N_2$, $\N_1|_\Z$ and $\N_2|_\Z$ all have isomorphic generators. It follows that $x$ and $y$ share a side in $\N_1$ if and only if they share a side in $\N_1|_\Z \cong \N_2|_\Z$ if and only if they share a side in $\N_2$. This proves the proposition in case (a) holds. If (b) holds, observe that we can choose $\Y$ such that $x \in \Y$.
\end{proof}

Before we proceed to the next encoding result (\cref{lem:ordering}), we first require a technical lemma (\cref{lem:excellent_cycle}) whose proof is based on the following results in~\cite{huber2025encoding} and \cite{holtgrefe2026characterizing}. Call a cycle~$C$ in a semi-directed network \emph{good} if there exists exactly one reticulation 
vertex~$r$ (called the \emph{sink}) in $\N$ such that $C$ contains both reticulation edges that enter~$r$. 
A good cycle is \emph{excellent} if its sink  is adjacent to a leaf. Lemma~5.4 of \cite{huber2025encoding} 
states the following, which we explicitly record here since it will be used again later.

\begin{lemma}[Lem.\,5.4~\cite{huber2025encoding}]\label{lem:excellent_original}
Let $\N$ be a simple, semi-directed network on $\X$ that is not a phylogenetic tree. Let~$e$ be an edge of~$\N$ between two non-leaf vertices. Then~$e$ is contained in at least one excellent cycle.
\end{lemma}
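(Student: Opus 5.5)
The plan is to pass to a rooting and then build the cycle explicitly as the union of two internally disjoint directed paths ending at a reticulation whose child is a leaf.

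\textbf{Setup.} Fix an LSA rooted network $\N^+$ from which $\N$ is obtained, for instance by rooting on an edge of the source component as in the proof of \cref{prop:semidirecting_rooted_generators}. Since $\N$ is simple and all 2-blobs have been suppressed, every cut-edge of $\N$ is a leaf edge. The edge $e$ joins two non-leaf vertices, so it lies in the unique non-trivial blob $B$ and is not a cut-edge. Orient $e$ as $u\to v$ according to $\N^+$.

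\textbf{Key observation.} Suppose $C$ is the union of two directed paths from a vertex $s$ to a reticulation $r$ that meet only at $s$ and $r$. Then $r$ is the only vertex of $C$ both of whose incoming edges lie in $C$. Every other vertex of $C$ has at most one incoming edge on $C$. Hence $C$ is good with sink $r$. So it suffices to find such a pair of paths in which:
\begin{itemize}
\item $e$ lies on one of the two paths; and
\item the child of $r$ is a leaf, which makes $C$ excellent.
\end{itemize}

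\textbf{Step 1: a lowest reticulation.} The vertex $v$ has a reticulation among its descendants (possibly $v$ itself). Otherwise everything below $v$ would be a tree hanging off $e$, making $e$ a cut-edge. Choose such a reticulation $r$ that is lowest, i.e.\ no reticulation lies strictly below it. The part of $\N^+$ below $r$ is then a tree attached by the single edge out of $r$. If the child of $r$ were not a leaf, this edge would be a non-leaf cut-edge, contradicting simplicity. So $r$ is adjacent to a leaf.

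\textbf{Step 2: the two paths.} Let $P$ be a directed path $u\to v\to\cdots\to p_1\to r$, and let $p_2$ be the other parent of $r$. Take any directed path $Q$ from the root to $p_2$. Let $s$ be the last vertex of $Q$ that lies on $P$, or the root if $Q$ misses $P$. If $s$ is $u$ or an ancestor of $u$ on the extended path, then the segment of $P$ from $s$ together with the tail of $Q$ from $s$ gives the required cycle. Note that $P$ can be extended upward from $u$ to the root, so I would take $P$ to be a root-to-$r$ path through $e$.

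\textbf{Main obstacle.} The difficulty is the case where $Q$ last meets $P$ strictly below $v$. The resulting cycle is then good and excellent, but it avoids $e$. To handle this I would use that $e$ is not a cut-edge. By 2-edge-connectivity of $B$ there is an undirected path from the part of $P$ above $e$ to the part below $e$ that avoids $e$. I would then re-choose $r$ and $P$ via an extremal argument: among all pairs $(r,P)$ with $r$ a lowest reticulation below $v$, pick one minimizing the depth of the last meeting point with an ancestor path of the other parent. The aim is to show that this bypass path forces a directed ancestor path of $p_2$ that reaches $P$ above $e$. If that fails, the fallback is to exploit the freedom to move the root within the source component, so that $e$ becomes the first edge of $P$ below the new meeting vertex. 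I expect this rerouting step to carry the real content of the proof.
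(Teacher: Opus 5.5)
Your reduction is sound as far as it goes. In a rooting, good cycles are exactly unions of two internally disjoint directed paths from a common source $s$ to a reticulation $r$. Also, a reticulation below $v$ with no reticulation below it does have a leaf child. But the proposal stops exactly where the content lies, and your setup makes that step unworkable as stated. Choosing $r$ in Step~1 as an arbitrary lowest reticulation below $v$ can make Step~2 impossible, not just awkward. Suppose $v\to a$, $a\to a_1$, $a\to a_2$, $a_1\to r$ and $a_2\to r$. Here $a_1,a_2$ are tree vertices, $a_1$'s other child is a leaf, and $a_2$'s other child is a reticulation $r_2$ whose other parent is not a descendant of $v$; this occurs in a simple level-2 network. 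Then every good cycle with sink $r$ has source $a$, so no choice of $P$ and $Q$ puts $e$ on it, although $r_2$ does give an excellent cycle through $e$. So everything rests on the re-choice, and you give no argument for it. The bypass supplied by 2-edge-connectivity is an undirected path, so it does not by itself produce a directed ancestor path of $p_2$ reaching $P$ above $e$. The minimization over pairs $(r,P)$ merely restates the lemma until you show its minimizer meets $P$ at or above $u$. The root-moving fallback is not specified.

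The missing idea is to select the sink by how the descendants of $v$ attach to the rest, not by lowness, and to obtain the leaf child by iteration. Write $X[a,b]$ for the subpath of a path $X$ from $a$ to $b$, and let $D_v$ be the set of descendants of $v$ (including $v$). No edge leaves $D_v$, so, as $e$ is not a cut-edge, some other edge $pw$ enters $D_v$. Then $w$ is a reticulation and no ancestor of $p$ lies in $D_v$. Hence a root-to-$p$ path last meets a root-to-$w$ path through $e$ at a vertex $s$ at or above $u$, and the two segments from $s$ (the second followed by $pw$) form a good cycle through $e$. Now push the sink down. Take a good cycle $C=P_1\cup P_2$ with source $s$, sink $w$ and $e\in P_1$, and fix a root-to-$s$ path $P_0$. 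If the child $c$ of $w$ is not a leaf, then $wc$ is not a cut-edge, and the same argument gives a reticulation $w'$ among the descendants of $c$ with a parent $p'\neq w$ that is not a descendant of $c$. A root-to-$p'$ path $R'$ avoids $w$ and all descendants of $c$; let $t\neq w$ be its last vertex on $P_0\cup C$. If $t\in P_0$, or $t\in P_1$ with $e\in P_1[t,w]$, pair the path from $t$ along $P_0$ and $P_1$ to $w$, then $w\to c\to\cdots\to w'$, with $R'[t,p']$ followed by $p'w'$. If $t\in P_1$ with $e\in P_1[s,t]$, pair $P_1[s,t]$, $R'[t,p']$, $p'w'$ with $P_2$ followed by $w\to c\to\cdots\to w'$. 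Otherwise $t$ is interior to $P_2$; pair $P_1$ followed by $w\to c\to\cdots\to w'$ with $P_2[s,t]$, $R'[t,p']$, $p'w'$. Each case gives a good cycle through $e$ whose sink is strictly below $w$, so by acyclicity the process stops at a sink whose child is a leaf. (If your chosen root subdivides $e$, run this on the half of $e$ ending at $v$.)
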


Furthermore, the following proof will use Theorem~2 of~\cite{holtgrefe2026characterizing}, which states that a semi-directed network cannot contain any semi-directed cycles and that a semi-directed network cannot contain paths between two distinct reticulations containing only undirected edges.

\begin{lemma}\label{lem:excellent_cycle}
    Let $\N$ be a simple, semi-directed network on $\X$ that is not a phylogenetic tree. Let $\Y \subsetneq \X$ be a set of leaves that share an edge side. Then, there exists a leaf $z \in \X \setminus \Y$ that is attached to a reticulation side such that $\N|_{\Y \cup \{z\}}$ is simple.
\end{lemma}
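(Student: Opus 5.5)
Let $e$ be the edge side of $\gen(\N)$ shared by the leaves in $\Y$. Subdividing $e$ in the generator gives an edge of $\N$ between two non-leaf vertices (for instance the edge between two consecutive attachment points, or between an endpoint of the side and the first attachment point). By \cref{lem:excellent_original}, this edge lies in an excellent cycle $C$ of $\N$. Let $r$ be its sink; $r$ is adjacent to a leaf, so $r$ corresponds to a reticulation side of the generator. Take $z$ to be the leaf attached to $r$. Since the leaves of $\Y$ sit on an edge side, $z\notin\Y$. I then aim to show that $\N|_{\Y\cup\{z\}}$ is simple.

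The idea is that $C$, together with pendant paths to the leaves in $\Y$ and to $z$, survives in the restriction as one blob carrying all these leaves. First I check that every vertex and edge of $C$ lies on an up-down path between two leaves of $\Y\cup\{z\}$. Take $y\in\Y$ attached at a vertex $v$ on $C$. Because $C$ is good with unique sink $r$, the two arcs of $C$ from $v$ to $r$ should be semi-directed towards $r$, ending with the two reticulation edges into $r$. Here I would use Theorem~2 of~\cite{holtgrefe2026characterizing}: there are no semi-directed cycles, so $C$ cannot be oriented consistently. Also, $r$ is the only reticulation whose two incoming edges both lie in $C$, so any other reticulation on $C$ enters through one $C$-edge and leaves through an undirected $C$-edge. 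The delicate point is whether such a reticulation blocks the path from $y$ to $r$, and this needs care. The goal is to conclude that $(y,v,\dots,r,z)$ is an up-down path along each arc of $C$. Consequently both arcs, and hence all of $C$, survive in the union of up-down paths, and the leaf edges of $y$ and $z$ are attached to $C$.

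Next, every other leaf $y'\in\Y$ is attached to the same side $e$, which lies on $C$ between the same two generator vertices. So $y'$ also hangs off $C$, possibly through a subpath of $e$ that is contained in $C$. Thus $C$ plus the pendant edges forms a subgraph in which all of $\Y\cup\{z\}$ are attached to the single cycle $C$. Suppressing degree-2 vertices and 2-blobs, and identifying parallel edges, cannot separate leaves that are attached directly to a common cycle into different blobs. The extra structure picked up from other up-down paths only adds edges inside the union of paths, and these can merge blobs but never create a cut-edge separating two attachment points on $C$. Therefore the result has at most one blob of size at least~3, so it is simple. One edge case remains: if $|\Y\cup\{z\}|=2$, so the restriction is a 2-leaf network, then simplicity is trivial.

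The main obstacle is the first step: showing that both arcs of the excellent cycle from the attachment point $v$ to the sink $r$ are usable in up-down paths, that is, that $C$ is essentially "down" from a top vertex towards $r$ along both sides and that intermediate reticulations do not break this. I would argue this by contradiction. If an arc turned upward, then the orientations of $C$ would force either a second reticulation with both in-edges in $C$, contradicting goodness, or a semi-directed cycle, contradicting~\cite{holtgrefe2026characterizing}. If this turns out to be too strong, the fallback is weaker: only the edge containing $e$ and a path to $r$ need to survive, together with a second disjoint up-down route. That would still yield a cycle through $e$ and $r$ in the restriction, which is enough for simplicity.
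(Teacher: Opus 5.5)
\textbf{There is a genuine gap, in the step you flag yourself.} Your plan matches the paper's:
\begin{itemize}
\item pick an edge of the side and apply \cref{lem:excellent_original} to get an excellent cycle $C$ through it;
\item take $z$ to be the leaf at its sink $r$;
\item show that all of $C$ lies in the union of up-down paths between leaves of $\Y\cup\{z\}$;
\item conclude that one blob carries all $|\Y|+1$ leaves.
\end{itemize}
The last part is fine: a cycle through one edge of a subdivided side must traverse the whole side, so all of $\Y$ hangs off $C$, and the suppressions then preserve a single blob with pendant leaves. The problem is the third step, and the argument you propose for it does not close it.

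First, your claim that both arcs of $C$ from $v$ to $r$ are semi-directed towards $r$ is false in general. Only one of them is; the other must first climb to a ``top'' vertex of $C$.

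Second, and more seriously, the dichotomy ``second sink on $C$ or a semi-directed cycle'' does not exclude the real obstruction. A failure of the up-down property comes from two directed edges $u_1\to r_1$ and $u_2\to r_2$ of $C$ pointing towards each other. They appear along $C$ in the order $u_1,r_1,\dots,r_2,u_2$, and all edges of $C$ between $r_1$ and $r_2$ are undirected.
\begin{itemize}
\item If $r_1=r_2$, this reticulation is a second sink, which goodness excludes.
\item If $r_1\neq r_2$, each has only one in-edge on $C$, so goodness is not violated. Also $C$ has directed edges in both directions, so it is not a semi-directed cycle either. Neither of your two tools sees this configuration.
\end{itemize}

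What excludes it is the other half of Theorem~2 of \cite{holtgrefe2026characterizing}, which the paper invokes as its item~(ii): no path consisting only of undirected edges joins two distinct reticulations. Alternatively, you could pass to a rooting of $N$: there a cycle has as many local maxima as local minima, and goodness makes $r$ the unique minimum.

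With this fact the step closes. There is a vertex $s$ of $C$ such that both arcs of $C$ between $s$ and $r$ are semi-directed paths from $s$ to $r$. Let $y\in\Y$ be attached at $v$.
\begin{itemize}
\item The part of the $s$--$r$ arc through $v$ from $v$ down to $r$ gives a semi-directed path from $y$ to $z$.
\item The route $y,v,\dots,s,\dots,r,z$ around the other way is an up-down path with top $s$.
\end{itemize}
Together these cover every edge of $C$. Your fallback (``a second disjoint up-down route'') needs exactly this fact, so it does not avoid the gap.
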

\begin{proof}
    Let $m = |\Y| + 1$. 
    Let $e$ be one of the edges of $\N$ that corresponds to the edge side to which the leaves in~$\Y$ are attached. Observe that $e$ is not a cut-edge. By \cref{lem:excellent_original}, it 
    follows that $e$ is contained in at least one excellent cycle~$C$. Let~$z$ be the leaf adjacent to the sink of this cycle.
    By Theorem~2 of \cite{holtgrefe2026characterizing} outlined above, (i) $C$ is not a semi-directed cycle; and (ii) no set of undirected edges contained in~$C$ forms a path in $\N$ between two reticulations. Hence, all edges in $C$ are on an up-down path between two leaves in $\Y \cup \{z\}$. Letting $V(C)$ denote the vertex set of $C$
    and observing that all leaves in $\Y \cup \{z\}$ are adjacent to distinct vertices of~$V(C)$, it follows that $\N|_{\Y \cup \{z\}}$ contains an $m$-blob. Since $\N|_{\Y \cup \{z\}}$ also has $m$ leaves, it follows that $\N|_{\Y \cup \{z\}}$ is simple.
\end{proof}

Let $S=uv$ be an edge side of a generator~$G$ of a simple, semi-directed network~$\N$ on $\X$, and let $\Y \subseteq \X$ be a set of leaves that is attached to $S$. 
The \emph{$uv$-oriented ordering} of $\Y$ induced by~$S$ is the linear ordering of the leaves in $\Y$ obtained by traversing the side from $u$ to $v$ and listing the leaves in the order in which they are attached in~$\N$ along $S$.
The \emph{unoriented ordering} of $\Y$ induced by~$S$ is this ordering considered up to reversal; equivalently, it is the $vu$-oriented ordering considered up to reversal.
We call the edge side~$S$ \emph{unoriented} if there exists a reticulation-preserving automorphism of~$G$ that maps $u$ to $v$ and $v$ to $u$. An edge side is \emph{oriented} if it is not unoriented, that is, every reticulation-preserving automorphism of~$G$ maps $u$ to $u$ and $v$ to $v$. Note that all edge sides that are directed edges are oriented, but edge sides that are undirected edges may also be oriented. 

To illustrate these definitions, observe that the side~$A_1$ of the generator $G_1$ in \cref{fig:equivalent_sides}a is unoriented, since there exists a reticulation-preserving automorphism that maps the two endpoints (in black) of that side to each other. On the other hand, the side~$A_2$ of the generator $G_2$ in \cref{fig:equivalent_sides}a, as well as all sides of the generator~$G$ from \cref{fig:ordering_sides}a, are oriented. Observe that $\Y =  \{x_1, x_2, x_3\}$ is the set of leaves of the semi-directed network~$\N$ from \cref{fig:ordering_sides}b that is attached to the edge side~$S = uv$ of the generator~$G$ from \cref{fig:ordering_sides}a. The $uv$-oriented ordering of $\Y$ induced by $S$ is $(x_1, x_2, x_3)$, with the unoriented ordering being this ordering up to reversal.

\begin{figure}[htb]
    \centering
    \input{fig5}
    \caption{(a): A semi-directed level-4 generator~$G$ with a labeled edge side $S = uv$. (b): A semi-directed network~$\N$ on $\{x_1, \ldots, x_{11} \}$ with $G$ as its generator. (c): The backbone~$\N^b$ of $\N$. (d): The parallel contraction~$\N^c$ of $\N$.}
    \label{fig:ordering_sides}
\end{figure}

\begin{proposition}\label{lem:ordering}
    Let $\N_1$ and $\N_2$ be two simple, $k$-crucial, semi-directed networks on~$\X$ with the same generator~$G$. Let $\Y \subseteq \X$ be a set of leaves that all share the same edge side $S = uv$ of $G$ in both $\N_1$ and $\N_2$.
    \begin{enumerate}[label=(\alph*)]
        \item If $\Sub_{4} (\N_1) \simeq \Sub_4 (\N_2)$, then the unoriented ordering of~$\Y$ induced by~$S$ is the same in $N_1$ and $N_2$.
        \item If $\Sub_{k+2} (N_1) \simeq \Sub_{k+2} (N_2)$ and $S$ is oriented, then the $uv$-oriented ordering of~$\Y$ induced by~$S$ is the same in $N_1$ and $N_2$.
        \item If $\Sub_{k+1} (N_1) \simeq \Sub_{k+1} (N_2)$ and $S$ is a parallel edge side, then the $uv$-oriented ordering of~$\Y$ induced by~$S$ is the same in $N_1$ and $N_2$.
    \end{enumerate}
\end{proposition}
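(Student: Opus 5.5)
Since a linear ordering of $\Y$ is determined by the relative order of every triple (up to reversal for unoriented orderings, and for oriented orderings by the order of every pair), the plan is to reduce each part to a statement about at most three leaves of $\Y$, plus a few auxiliary leaves, and read the order off an induced subnetwork of the allowed size. We may assume $|\Y|\ge 2$; for part (a) we may also assume $|\Y|\ge 3$ for the unoriented claim to be nontrivial.

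\textbf{Part (a).} For an unoriented ordering it suffices to show that for every three leaves $a,b,c\in\Y$ the middle one is the same in $\N_1$ and $\N_2$. Fix such $a,b,c$. By \cref{lem:excellent_cycle}, applied to $\{a,b,c\}\subsetneq\X$ (if $\Y=\X$ the network would be a tree-like caterpillar, which is excluded since $\N$ is not a tree), there is a leaf $z$ on a reticulation side with $\N_1|_{\{a,b,c,z\}}$ simple. In this quarnet, $a,b,c$ hang in a path of consecutive degree-3 vertices on the blob. I expect that the middle leaf, say $b$, is exactly the one for which $\{a,c\}$ are separated from it differently. More robustly, I would argue via \cref{prop:tree_of_blobs_encoding}-type reasoning, or directly from the quarnet, that the middle leaf is determined by the quarnet's structure: in the quarnet, the leaf attached to the vertex between the other two is identifiable. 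Then $\N_2|_{\{a,b,c,z\}}\cong\N_1|_{\{a,b,c,z\}}$ forces the same middle leaf. The subtle point is that $z$ is chosen using $\N_1$. If $\N_2|_{\{a,b,c,z\}}$ were not simple, it could not be isomorphic to the simple $\N_1$-quarnet, so this causes no trouble. The main obstacle is verifying that in the simple quarnet the three leaves along the cycle path are ordered unambiguously. This requires the quarnet to have no automorphism swapping the ends of the path while fixing labels, which holds automatically because isomorphisms fix leaf labels.

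\textbf{Part (b).} Let $\Y'$ be a set of $k$ crucial leaves of $\N_1$, and for $a,b\in\Y$ set $\Z=\Y'\cup\{a,b\}$, so $|\Z|\le k+2$. As in \cref{lem:share_sides}, by \cref{prop:crucial_leaves}, $\N_1|_\Z$ and $\N_2|_\Z$ are isomorphic, simple, and have generator $G$, with $a,b$ on the side $S$. An isomorphism $\N_1|_\Z\to\N_2|_\Z$ induces a reticulation-preserving automorphism of $G$, because crucial leaves are fixed and pin the reticulation sides and parallel pairs. It sends the side carrying $a,b$ to itself. Since $S$ is oriented, this automorphism fixes $u$ and $v$, so $a$ precedes $b$ in the $uv$-direction in $\N_1|_\Z$ if and only if it does so in $\N_2|_\Z$. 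The same then holds in $\N_1$ and $\N_2$, since restriction preserves order along a side. Pairwise order determines the linear order. The delicate step is justifying that the induced map on $G$ is reticulation-preserving and maps $S$ to $S$. For the latter, $S$ is identified in both networks as the side carrying $a$, using the same-generator identification.

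\textbf{Part (c).} If $S$ is a parallel edge side, choose the crucial set $\Y'$ to contain one leaf $a\in\Y$ of $S$. Then pairs $\{a,b\}$ cost only $k+1$ leaves. For general $b,c\in\Y$, compare each with the fixed crucial leaf $a$: knowing, for every $b$, whether $b$ is before or after $a$ alone is insufficient. So I would instead choose $\Y'$ containing $b$ itself and add $c$, giving $|\Z|\le k+1$. Since every leaf on a parallel edge side lies in some smallest crucial leaf set, this works for every pair, and the argument of (b) applies. A parallel edge side is oriented, because both edges are directed.
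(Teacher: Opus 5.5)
Your three parts follow the paper's proof. For (a) you use triples together with a reticulation-side leaf from \cref{lem:excellent_cycle}. For (b) you use a smallest crucial set plus a pair. For (c) you choose a crucial set per pair so that it contains one leaf of that pair.

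There is one unsupported step in (b), exactly where you flagged the delicacy. The hypothesis fixes only the side of the leaves of $\Y$ in both networks. It says nothing about where the remaining leaves of $\Z$ sit in $\N_2$ relative to the fixed identification $\gen(\N_2)=G$. Your derivation that $\N_2|_\Z$ is simple with generator $G$, via the isomorphism and the converse direction of \cref{prop:crucial_leaves}, is fine; the problem is only compatibility with that identification.

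The automorphism $\alpha$ of $G$ induced by $\N_1|_\Z\cong\N_2|_\Z$ sends the reticulation side carrying a crucial leaf $x$ in $\N_1$ to the reticulation side carrying $x$ in $\N_2$. So ``crucial leaves pin the reticulation sides'' presupposes that $x$ lies on the same reticulation side in both networks, and likewise for the parallel pairs. If that failed, $\alpha$ could permute reticulation sides and, for an undirected but oriented $S$, swap $u$ and $v$. That is precisely what you must rule out.

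The missing ingredient is \cref{lem:equivalence_class_encoding}. Since $\Sub_{k+2}(\N_1)\simeq\Sub_{k+2}(\N_2)$ implies $\Sub_{k+1}(\N_1)\simeq\Sub_{k+1}(\N_2)$, the networks are leaf-side consistent. Under the standing convention $\sigma_{\N_1}=\sigma_{\N_2}$, with reticulation sides and pairs of parallel edge sides each forming their own class, every crucial leaf then occupies the same reticulation side or parallel pair in both networks. This is how the paper opens its proof of (b).

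With that in place, your argument is correct: $\alpha$ is reticulation-preserving, $\alpha(S)=S$, and $S$ is oriented, so $\alpha$ fixes $u$ and $v$. It actually makes explicit what the paper's ``thus these $uv$-oriented orderings coincide'' leaves implicit. In (c) the issue does not arise at all, since $S$ is directed and any automorphism mapping $S$ to itself fixes its tail and head.

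Two minor points in (a). First, the reason $\{a,b,c\}\subsetneq\X$ is simply that $G$ has a reticulation side, whose attached leaf is not on $S$; the caterpillar remark is beside the point. Second, state explicitly, for both $i\in\{1,2\}$, that $a,b,c$ still share a side of $\N_i|_\Z$ in the same unoriented order as in $\N_i$. You need this for $\N_2$ in order to transfer the middle leaf of $\N_2|_\Z$ back to $\N_2$. The label-preserving isomorphism then identifies the middle leaf directly: it is the leaf whose neighbour is adjacent to the other two neighbours. No discussion of automorphisms is needed there.
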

\begin{proof}
    (a): Suppose $\Sub_{4} (\N_1) \simeq \Sub_4 (\N_2)$. Observe that there is only one possible unoriented ordering of~$\Y$ if $|\Y| \leq 2$, so we may assume $|\Y | \geq 3$. Let $\Y' \subseteq \Y$ be an arbitrary subset of $\Y$ with $|\Y'| = 3$. By \cref{lem:excellent_cycle}, there is a leaf $z \in \X \setminus \Y$ that is attached to a reticulation side of~$G$ such that $\N_1|_\Z$, with $\Z =\Y' \cup \{z\}$, is simple. Since $\Sub_{4} (\N_1) \simeq \Sub_4 (\N_2)$, $\N_2|_\Z$ is also simple. Let $i \in \{1, 2\}$. Clearly, the leaves in~$\Y'$ share a side in $\N_i|_\Z$ because they shared a side in $\N_i$. Moreover, the induced unoriented ordering of $\Y'$ in $\N_i|_\Z$ is the same as the induced unoriented ordering of $\Y'$ in $\N_i$. Because $\Sub_{4} (\N_1) \simeq \Sub_4 (\N_2)$, these orderings are also the same across $\N_1$ and $\N_2$. The result follows from the fact that the unoriented ordering of~$\Y$ is fully determined by the unoriented orderings of all its size-3 subsets of leaves.

    (b): Suppose $\Sub_{k+2} (N_1) \simeq \Sub_{k+2} (N_2)$ and that $S$ is oriented. The result is trivial if $|\Y| =1$, so we assume $|\Y| \geq 2$. Let $\Y' \subseteq \Y$ be an arbitrary subset of $\Y$ with $|\Y'|=2$ and let $C \subseteq \X$ be some smallest set of crucial leaves of $\N_1$. By \cref{lem:equivalence_class_encoding}, $\N_1$ and $\N_2$ are leaf-side consistent, so $C$ is also a set of crucial leaves of $\N_2$. Let $\Z = C \cup \Y'$. Then, since $\N_i$ and $\N_i|_\Z$ have the same generator, the $uv$-oriented ordering of $\Y'$ in $\N_i|_\Z$ equals the $uv$-oriented ordering of $\Y'$ in $\N_i$, for $i \in \{1, 2\}$. Observe that $|\Z| \leq k+ 2$. So, by our assumption, $\N_1|_\Z \cong \N_2|_\Z$ and thus these $uv$-oriented orderings of $\Y'$ also coincide across $\N_1$ and $\N_2$. The result follows by noting that the $uv$-oriented ordering of~$\Y$ is fully determined by the $uv$-oriented orderings of all its size-2 subsets of leaves.

    (c): First note that parallel edge sides are directed and therefore oriented. Observe that in the proof of Part~(b), we can pick a set of crucial leaves $C \subseteq \X$ that contains a leaf from $\Y'$ if $S$ is a parallel edge side. Then, $|\Z| \leq k + 1$. The remainder of the proof is similar to the proof of Part~(b).
\end{proof}

Let $\N$ be a simple, semi-directed network on $\X$ with cruciality~$k\geq 2$. The \emph{backbone}~$\N^b$ of~$\N$ is the subnetwork $\N|_{\Y \cup \Z}$, where $\Y \subseteq \X$ is the set of leaves that are attached to parallel edge sides and $\Z \subseteq \X$ the set of leaves attached to reticulation sides (see \cref{fig:ordering_sides}c). Note that $\Y \cup \Z$ forms a set of crucial leaves of $\N$ and that $|\Y \cup \Z| \geq k \geq 2$, ensuring that the backbone is indeed a network on at least 2 leaves.

\begin{lemma}\label{lem:backbone}
    Let $\N_1$ and $\N_2$ be two simple, $k$-crucial, semi-directed networks on $\X$ with the same generator~$G$ having parallel edge sides. If $\Sub_{k+1} (\N_1 ) \simeq \Sub_{k+1} (\N_2)$, then $\N_1$ and $\N_2$ have isomorphic backbones.
\end{lemma}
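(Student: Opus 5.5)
The plan is to show that the backbone is reconstructed from three pieces of data, each encoded by $(k+1)$-nets. These are: which leaf sits on which reticulation side, how the leaves on each pair of parallel edge sides split between the two parallel edges, and their oriented order along each edge.

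First, by \cref{cor:generator_encoding} and \cref{lem:equivalence_class_encoding}, $\N_1$ and $\N_2$ share the generator~$G$ and are leaf-side consistent. As usual, we fix the identification of the generators so that $\sigma_{\N_1}(x)=\sigma_{\N_2}(x)$ for all $x\in\X$. A reticulation-preserving automorphism fixes every reticulation side, so each reticulation side forms a singleton side-equivalence class. Likewise, every side-equivalence class containing a parallel edge side lies inside that side's pair. Two consequences follow.
\begin{itemize}
\item The leaf attached to each reticulation side is the same leaf in $\N_1$ and $\N_2$; in the binary setting exactly one leaf is attached there.
\item For each pair $P$ of parallel edge sides, the set $\Y_P$ of leaves attached to $P$ is the same in both networks.
\end{itemize}
In particular, the sets $\Y$ and $\Z$ in the definition of the backbone coincide for $\N_1$ and $\N_2$. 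Since $\Y\cup\Z$ is a crucial leaf set, \cref{prop:crucial_leaves} shows that each backbone $\N_i^b=\N_i|_{\Y\cup\Z}$ is simple with generator $G$. Moreover, as in the proof of \cref{prop:crucial_leaves}, $\N_i^b$ is obtained from $\N_i$ by deleting the leaves outside $\Y\cup\Z$ and suppressing degree-2 vertices. Hence $\N_i^b$ is exactly $G$ with one leaf attached to each reticulation side and the leaves of each $\Y_P$ attached along the two edges of $P$ exactly as in $\N_i$.

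Next, fix a pair $P=\{S,S'\}$ of parallel edges from $u$ to $v$. By \cref{lem:share_sides}(b), applicable because every $x\in\Y_P$ lies on a parallel edge side, two leaves of $\Y_P$ share a side in $\N_1$ if and only if they do in $\N_2$. So $\Y_P$ is split into the same (unordered) partition into at most two blocks in both networks. The leaves in one block share a single parallel edge side in both networks, though possibly $S$ in one network and $S'$ in the other. Since the two edges of $P$ have the same endpoints and orientation, swapping them is an automorphism of $G$ fixing all other edges and vertices. Composing such swaps over all pairs, we may assume each block lies on the same parallel edge in $\N_1$ and $\N_2$. We then apply \cref{lem:ordering}(c) to each block. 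This is the point where the swap matters, since the lemma needs the leaves to share the same edge side in both networks. It gives the same $uv$-oriented ordering of each block in $\N_1$ and $\N_2$.

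Finally, we assemble the isomorphism $\N_1^b\to\N_2^b$. It is induced by the chosen automorphism of $G$ (identity up to parallel swaps), extended along each subdivided parallel edge by matching subdivision vertices according to the common oriented ordering, and it is the identity on leaves. I expect the main obstacle to be bookkeeping rather than depth. One must check that the identification of $G$ fixed via leaf-side consistency is compatible with the parallel-edge swaps, and that these swaps do not disturb the reticulation-side assignments; this holds because swaps fix all vertices of $G$. One must also handle degenerate cases, such as a pair with all of $\Y_P$ on one edge, where one parallel edge of $P$ carries no leaf of $\Y_P$ and the corresponding block is empty.
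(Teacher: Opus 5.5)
Your proposal is correct and follows essentially the same route as the paper. Leaf-side consistency (\cref{lem:equivalence_class_encoding}) fixes the leaves on each reticulation side and on each pair of parallel edge sides, \cref{lem:share_sides}(b) fixes how those leaves split over the two parallel edges, and \cref{lem:ordering}(c) fixes their oriented order. Your explicit swap automorphism of a parallel pair, which ensures \cref{lem:ordering}(c) applies with the leaves on literally the same side in both networks, is a detail the paper leaves implicit, and you handle it correctly.
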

\begin{proof}
Let $\N^b_i$ be the backbone of $\N_i$, $ i \in \{1, 2\}$. Note that this backbone exists, since a network whose generator contains parallel edge sides has a cruciality of at least~2 (\cref{obs:cruciality_parallel_edges}). Because the leaf set of $\N^b_i$ is a crucial leaf set of $\N_i$ by construction, it follows from \cref{prop:crucial_leaves} that $\N^b_i$ also has $G$ as its generator.
Next, note that a pair of parallel edge sides always forms its own side-equivalence class. Similarly, a reticulation side forms its own side-equivalence class.
Since~$\N_1$ and $\N_2$ are leaf-side consistent (\cref{lem:equivalence_class_encoding}), the two networks then share the same set of leaves on each pair of parallel edge sides and on each reticulation side. It follows that $\N^b_1$ and $\N^b_2$ have the same leaf sets and that they also share the same set of leaves on each pair of parallel edge sides and on each reticulation side. Moreover, for each pair of parallel edge sides, the allocation of leaves over the two sides is the same in $\N_1$ and $\N_2$ (and consequently, in $\N^b_1$ and $\N^b_2$) by \cref{lem:share_sides}b. Lastly, by \cref{lem:ordering}c, the induced $uv$-oriented ordering of the leaves attached to a parallel edge side must be the same in $\N_1$ and $\N_2$ (and consequently, in $\N^b_1$ and $\N^b_2$).  It follows that $\N^b_1 \cong \N^b_2$.
\end{proof}

Let $\N$ be a simple, semi-directed network on $\X$ with its generator containing a pair of parallel edge sides~$(S_1, S_2)$. Let $\Y \subseteq \X$ be the set of leaves attached to $S_1$ and $S_2$. To make the reduction of parallel sides canonical, we fix a total ordering of the leaves in~$\X$ and take the \emph{representative leaf} of $\Y$ to be the smallest leaf in~$\Y$ according to this ordering. In what follows, such an ordering is implicitly assumed whenever working with representative leaves, and in examples we take it to be lexicographic. By \emph{reducing} the pair of parallel edge sides~$(S_1, S_2)$ (or \emph{reducing} $\Y$) we mean the operation of contracting all edges in $\N$ that correspond to $S_1$ and $S_2$ in its generator and then removing all leaves in $\Y$, except for the representative leaf. Thus, the whole cycle is replaced by a single vertex with the representative leaf of~$\Y$ adjacent to it. The \emph{parallel contraction}~$\N^c$ of $\N$ is the semi-directed network obtained by reducing every pair of parallel edge sides. Note that the parallel contraction of $\N$ is also a semi-directed network and that its generator has no parallel edge sides. See \cref{fig:ordering_sides}d for an example.

\begin{proposition}\label{lem:parallel_contraction}
    Let $\N_1$ and $\N_2$ be two simple, $k$-crucial, semi-directed networks on $\X$ with the same generator~$G$ having parallel edge sides. If $\N_1$ and $\N_2$ are leaf-side consistent and they have isomorphic backbones, then $\Sub_{k} (\N_1 ) \simeq \Sub_{k} (\N_2)$ if and only if $\Sub_{k} (\N^c_1 ) \simeq \Sub_{k} (\N^c_2)$, where $\N^c_i$ is the parallel contraction of $\N_i$, $i \in \{1, 2\}$.
\end{proposition}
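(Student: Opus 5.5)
The plan is to compare, for each $k$-subset $\Y\subseteq\X$, the subnetwork $\N_i|_\Y$ with a corresponding subnetwork of $\N_i^c$, and to show that this correspondence is the same for $i=1,2$. Let $\X^c\subseteq\X$ be the leaf set of $\N^c_i$. It is the same for $i=1,2$: by leaf-side consistency both networks have the same set $\Y_P$ of leaves on each pair $P$ of parallel edge sides, since such a pair forms its own side-equivalence class. The representative leaf $r_P$ is then also the same. I would define a map $\rho$ on leaf subsets that replaces every leaf of $\Y_P$ in $\Y$ by $r_P$. The first step is a structural lemma: for every $\Y\subseteq\X$, the subnetwork $\N^c_i|_{\rho(\Y)}$ is obtained from $\N_i|_\Y$ by reducing, in the blob of $\N_i|_\Y$ coming from the generator, every surviving pair of parallel edge sides. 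This holds because contracting a parallel cycle commutes with taking up-down paths and suppressing 2-blobs, and the reduced cycle is replaced by a single vertex through which the same up-down paths pass. Conversely, if $\Y\cap\Y_P=\emptyset$, the cycle of $P$ is either suppressed in $\N_i|_\Y$ or becomes parallel edges that get identified. Its contribution then reduces to a path, exactly as the contracted vertex does in $\N^c_i$.

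For the direction ``$\Rightarrow$'', take a $k$-subset $\Y'\subseteq\X^c$. It is also a $k$-subset of $\X$, and its leaves in $\Y_P$ consist at most of $r_P$. By the lemma, $\N^c_i|_{\Y'}$ is a deterministic function of $\N_i|_{\Y'}$: it is obtained by reducing the parallel pairs containing $r_P$ in a canonical way. I would argue that this function commutes with isomorphism. Since $\N_1|_{\Y'}\cong\N_2|_{\Y'}$, it follows that $\N^c_1|_{\Y'}\cong\N^c_2|_{\Y'}$. The bijection on $k$-nets is then induced by the leaf sets.

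For ``$\Leftarrow$'', take a $k$-subset $\Y\subseteq\X$. The network $\N_i|_\Y$ should be reconstructed from two ingredients. The first is $\N^c_i|_{\rho(\Y)}$, which has at most $k$ leaves; if it has fewer than $k$, I pad it with arbitrary extra leaves of $\X^c$ and then restrict, so the hypothesis still applies. The second is the data of which leaves of $\Y\cap\Y_P$ lie on which side of $P$, and in what order. This second ingredient is identical in $\N_1$ and $\N_2$: the isomorphic backbones $\N_1^b\cong\N_2^b$ fix both the allocation of $\Y_P$ over $S_1,S_2$ and the oriented order along each side. Concretely, I would re-expand the vertex of $\N^c_i|_{\rho(\Y)}$ adjacent to $r_P$ into the parallel cycle of $P$ with the leaves of $\Y\cap\Y_P$ in their backbone positions. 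For this I need the lemma that the two edges entering and the edge leaving the contracted vertex correspond to the tail and head of $P$.

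The main obstacle is exactly this re-expansion step: showing that the local structure around the contracted vertex in $\N^c_i|_{\rho(\Y)}$ determines how the parallel cycle reattaches. In particular I must determine which incident edge was the outgoing reticulation edge and which edges enter the tail. A further issue is that, in the restriction, the vertex may have been suppressed or may lie in a 2-blob. I would handle this by using the backbone to identify $P$'s tail and head neighbours inside the generator $G$, which is common to both networks, together with leaf-side consistency.
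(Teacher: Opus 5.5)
Your strategy is the paper's. Leaf-side consistency fixes each $Y_P$ (hence $r_P$), and the isomorphic backbones fix the allocation and $uv$-oriented order of the leaves on every parallel edge side. One then concludes that the $k$-nets of $N_i$ and of $N_i^c$ determine each other. The paper's proof stops essentially there, asserting that any difference in $k$-nets ``must be witnessed'' after contraction. Your more explicit version correctly isolates the step this hides, but it does not close it.

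\textbf{The gap in $\Leftarrow$.} The hypothesis gives you only \emph{some} leaf-preserving isomorphism $\varphi\colon N^c_1|_{\rho(Y)}\to N^c_2|_{\rho(Y)}$.
\begin{itemize}
\item Re-expansion lifts $\varphi$ to $N_1|_Y\cong N_2|_Y$ only if $\varphi$ sends the tail-side edge at the contracted vertex $c$ to the tail-side edge. Besides the pendant edge of $r_P$, the vertex $c$ has exactly one tail-side and one head-side edge, and nothing in $N_i^c$ marks which is which.
\item Knowing $G$ and the backbone tells you which edge is the tail side inside each $N_i$ separately. It says nothing about $\varphi$: the restrictions are in general not simple, and their isomorphisms need not be compatible with the identification of both networks with $G$.
\item The danger is real, because contracting $P$ can create symmetries that invert the edge carrying $r_P$. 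Let $G$ arise from the level-2 generator without parallel edges (undirected edge $pq$, reticulation sides $r_1,r_2$) by inserting $P$ on $pq$, and let $z_j$ be the leaf on $r_j$. Then $N^c_i|_{\{r_P,z_1,z_2\}}$ has a leaf-fixing automorphism swapping $p$ and $q$, which exchanges tail and head at $c$. By contrast, $N_i|_{\{r_P,z_1,z_2\}}$ has no automorphism reversing $P$.
\end{itemize}
So you must prove that a tail/head-compatible isomorphism always exists, or that the two possible re-expansions are isomorphic whenever none does.

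A natural ingredient is the following per-leaf datum: for each $x\in\rho(Y)$, whether up-down paths from $r_P$ to $x$ in $N_i^c$ leave $c$ through the tail-side edge, the head-side edge, or both. This is invariant under reticulation-preserving automorphisms of $G$, since they fix $P$ and its two ends, so it is the same for $i=1,2$. It should force $\varphi$ to behave correctly whenever some leaf separates the two edges. The case where every leaf is reached through both edges, as in the example above, needs a genuinely separate argument.

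\textbf{A smaller hole in $\Rightarrow$.} ``Reduce the parallel pair carrying $r_P$'' uses extrinsic knowledge of where $P$ sits. Transporting it along $N_1|_{Y'}\cong N_2|_{Y'}$ therefore requires that the $P$-triangle survives in $N_1|_{Y'}$ exactly when it survives in $N_2|_{Y'}$. When it does survive, it is the unique triangle through the neighbour of $r_P$, so the isomorphism carries one to the other. Agreement of survival does hold. Survival is equivalent to both exits of $P$ being reached from $r_P$ by up-down paths to leaves of $Y'$. That is again a per-leaf condition invariant under reticulation-preserving automorphisms, so leaf-side consistency makes it agree for $i=1,2$. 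But this must be stated and checked, not folded into the claim that the reduction ``commutes with isomorphism''.
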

\begin{proof}
Let $\Y \subseteq \X$ be the set of leaves attached to parallel edge sides of~$\N_1$, and $\Z \subseteq \X$ the set of leaves attached to reticulation sides of~$\N_1$. Observe that then $\Y$ must also be the set of leaves attached to parallel edge sides of~$\N_2$ and $\Z$ must be the set of leaves attached to reticulation sides of~$\N_2$, since $\N_1$ and $\N_2$ are leaf-side consistent.
By \cref{prop:crucial_leaves} and since $\Y \cup \Z$ is a set of crucial leaves of both $\N_1$ and $\N_2$, the backbones $N^b_1$ and $N^b_2$ of $N_1$ and $N_2$, respectively, have~$G$ as their generator. By assumption, we also have $N^b_1 \cong N^b_2$.
It follows that $\N_1$ and $\N_2$ have the exact same allocation of leaves in $\Z$ over their reticulation sides and the same allocation of leaves in $\Y$ over their parallel edge sides. 
Moreover, since each parallel edge side~$S = uv$ of~$G$ is directed and therefore oriented, it follows from $N^b_1 \cong N^b_2$ that the $uv$-oriented ordering induced by~$S$ of the set of leaves attached to $S$ must be the same in both $\N_1$ and $\N_2$. 
This implies that any difference in $k$-nets between $\N_1$ and $\N_2$ cannot be due to differences in the $uv$-oriented ordering of the attached leaves induced by a parallel edge side.
Hence, such a difference in $k$-nets between $\N_1$ and $\N_2$ must be witnessed by differences in the $k$-nets of $\N^c_1$ and $\N^c_2$.
The equivalence stated in the proposition then readily follows.
\end{proof}

Finally, we establish a result relating the level of a network to the levels of its subnetworks, the proof of which generalizes the proof of \cite[Lemma~6.1]{huber2025encoding}. We will apply this result in the next section to distinguish level-$3$ networks from higher-level networks based on quarnets (and hence quinnets).

\begin{proposition}\label{lem:level_quarnet}
    Let~$N$ be a strictly level-$k$ semi-directed network and~$3\leq k'\leq k$. Then there exists a $k''$-net of~$N$, with~$k''\leq k'$, that is not level-$(k'-1)$.
\end{proposition}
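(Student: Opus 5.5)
Since $N$ is strictly level-$k$, it has a blob $B$ with exactly $k$ reticulations. I will first reduce to a simple network. For each edge incident to $B$, pick one leaf in the subtree hanging off that edge. This gives a leaf set $\X_B$, and $N|_{\X_B}$ is a simple, strictly level-$k$ network whose unique non-trivial blob is (up to suppression) $B$. Up-down paths through $B$ between leaves in different pendant subtrees already cover $B$, and the pendant parts collapse to single leaves. Since subnetworks of subnetworks are subnetworks, so $(N|_{\X_B})|_{\Y}=N|_{\Y}$, it suffices to treat a simple, strictly level-$k$ network $N$.

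The main idea is to take a smallest crucial leaf set and then enlarge or shrink it so that the induced subnetwork keeps exactly $k'-1$ or more reticulations. Concretely, I would take a set $\Z$ of leaves attached to the reticulation sides and to one side of each pair of parallel edge sides, choosing $\Z$ so that the number of reticulations ``witnessed'' by $\Z$ is at least $k'$ while $|\Z|\le k'$.

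\textbf{Case 1: $N$ has cruciality $\ge k'$.} Each reticulation side or parallel pair corresponds to a distinct reticulation (as noted before \cref{obs:level_cruciality}). Choose $k'$ of these features and one leaf on each, giving $\Z$ with $|\Z|=k'$. The key claim is that in $N|_\Z$ the reticulations corresponding to these features survive and lie in a common blob. Survival holds because the leaf below a reticulation side uses both incoming reticulation edges via up-down paths from the other leaves. For parallel pairs, the leaf on one parallel edge together with an up-down path through the other edge keeps the reticulation. Lying in a common blob is where I would use \cref{lem:excellent_original}, as in the proof of \cref{lem:excellent_cycle}: the relevant edges lie on excellent cycles whose edges are covered by up-down paths between the chosen leaves, using Theorem~2 of \cite{holtgrefe2026characterizing}. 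I expect this to be the main obstacle. Two surviving reticulations might end up in different blobs of $N|_\Z$, so the level of $N|_\Z$ could be lower than $k'$. To handle this, I would choose the features greedily. Add one at a time, and at each step select a reticulation feature whose excellent cycle meets the current blob, so the blob grows with each addition. If $|\Z|<3$, I would pad it with arbitrary leaves, which is allowed since $k'\ge3$.

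\textbf{Case 2: cruciality $<k'$.} Take a full crucial set $\Z$ with $|\Z|<k'$. By \cref{prop:crucial_leaves}, $N|_\Z$ is simple with $\gen(N|_\Z)\cong\gen(N)$, so it is strictly level-$k\ge k'$ and in particular not level-$(k'-1)$. If $|\Z|<3$, I add leaves to reach size at most $k'$; the generator is preserved because supersets of crucial sets are crucial. In both cases we obtain a $k''$-net with $k''\le k'$ containing a blob with at least $k'$ reticulations, which proves the proposition.
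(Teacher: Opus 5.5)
Your reduction to the simple case and your Case~2 are fine. Case~2 is in fact a neat shortcut via \cref{prop:crucial_leaves} that the paper does not use. The gap is in Case~1, which is where the real work lies.

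Write $R$ for the number of reticulation sides and $P$ for the number of pairs of parallel edge sides, so the cruciality is $R+P$. Your greedy step can only ever produce reticulation sides. The sink of an excellent cycle is adjacent to a leaf, whereas the reticulation at the head of a pair of parallel edge sides never is, so it is never the sink of an excellent cycle. But cruciality at least $k'$ does not give $R\ge k'$. Already for $k=k'=3$, the generators of types 9--13 have cruciality~$3$ and $P\ge 1$, hence $R\le 2$. In this mixed case $R<k'\le R+P$, three things break:
\begin{enumerate}
\item your Case~2 does not apply;
\item your excellent-cycle procedure runs out of features before reaching $k'$;
\item your one-line remark about parallel pairs does not show that the head of a pair survives the suppressions in \cref{def:restriction}, nor that it ends up in the same blob as the other chosen reticulations.
\end{enumerate}

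Even when $R\ge k'$, the step ``select a feature whose excellent cycle meets the current blob'' is not justified. \cref{lem:excellent_original} gives an excellent cycle through a prescribed edge, but its sink may be one you already have. The paper deals with both issues as follows.
\begin{enumerate}
\item It grows the union $U$ of excellent cycles of $N_B$, each new cycle chosen through a non-cut edge leaving $U$. It collects the sink's leaf, which adds nothing when the sink was already collected.
\item It stops once $k'$ distinct leaves are collected or $U=V(B)$.
\item Any cycle through a reticulation side uses both of its reticulation edges, so every reticulation side in $U$ is the sink of a chosen cycle. Hence the procedure ends with $k'$ leaves whenever $R\ge k'$.
\item In the remaining case ($U=V(B)$ with fewer than $k'$ leaves), it adds leaves on good cycles having only two vertices of degree greater than two. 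These are the parallel pairs, which are contracted in $N_B|_{X'}$. Adding them raises the number of reticulations to at least $k'$ while using at most $k'$ leaves.
\end{enumerate}
So the relevant dichotomy is $R\ge k'$ versus $R<k'$, not the cruciality. Your Case~2 is a clean special instance of the paper's second stage, but you still need that stage, or an equivalent argument, for $R<k'\le R+P$.
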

\begin{proof}
    Let~$B$ be a strictly level-$k$ blob of~$N$. Let~$N_B$ be the simple strictly level-$k$ semi-directed network obtained from~$N$ by deleting all vertices that are not in~$B$ and not adjacent to a vertex of~$B$. By \cref{lem:excellent_original}, $N_B$ has an excellent cycle~$C_1$. Let~$x_1$ be the leaf adjacent to the sink of~$C_1$ and let~$j=1$. Repeat the following while~$|\{x_1,\ldots ,x_{j}\}|<k'$ and~$U=\cup_{i=1}^{j} C_i$ is not equal to the vertex set~$V(B)$ of~$B$. Let~$e$ be an edge incident to~$U$ (i.e., which has one endpoint in~$U$ and one endpoint not in~$U$) that is not a cut-edge. By \cref{lem:excellent_original}, $N_B$ has an excellent cycle~$C_{j+1}$ containing~$e$. Let~$x_{j+1}$ be the leaf adjacent to the sink of~$C_{j+1}$. Increase~$j$ by~$1$ and repeat. This procedure terminates when $U=\cup_{i=1}^{j} C_i=V(B)$ or~$|\{x_1,\ldots ,x_{j}\}|=k'$ or both. Let $X'=\{x_1,\ldots ,x_{j}\}$.
    
    First suppose $|X'|=k'$. Then~$U$ has~$k'$ sinks and each leaf of~$X'$ is adjacent to a sink of~$U$ in~$N_B$. Hence, the $k'$-net $N_B|_{X'}$ has at least~$k'$ reticulations. Moreover,~$U$ is biconnected by construction. It follows that $N_B|_{X'}$ is simple and not level-$(k'-1)$. Let~$X''$ be a set of~$k'$ leaves of~$N$ containing, for each leaf~$x\in X'$, a leaf~$x'$ of~$N$ that is reachable by a path from~$x$ that is disjoint from~$B$. Then $N|_{X''}$ is a $k'$-net of~$N$ that is not level-$(k'-1)$.

    Now suppose $U=V(B)$ and $|X'|<k'$. Then~$U$ contains exactly~$k$ reticulations. Suppose $N_B|_{X'}$ is strictly level~$\ell < k' \leq k$ and note that $\ell \geq |X'|$ since each leaf in $X'$ is adjacent to a sink. Then~$B$ contains $g \leq k-\ell$ good cycles that each have only two vertices with degree higher than~$2$ (so that these cycles are contracted when constructing $N_B|_{X'}$). Let~$X''$ contain all leaves of~$X'$ and, for $\max \{ g, k' - \ell \}$ good cycles, a leaf incident to that cycle in~$N_B$. Then $N_B|_{X''}$ is a $k''$-net of~$N_B$, with $k'' \leq k'$, that is not level-$(k'-1)$. It then follows, by the same argument as the final argument in the previous case, that there exists a $k''$-net of~$N$ that is not level-$(k'-1)$.
\end{proof}

\section{Encoding level-3 networks by quarnets and quinnets}
\label{sec:encoding_level3}

In this section we prove that all level-3, semi-directed networks are encoded by their quinnets, and almost all of them by their quarnets (see \cref{thm:main-encoding}). Recall that \cite{huber2025encoding} already showed that level-2, semi-directed networks are encoded by their quarnets.

\begin{proposition}[Thm.\,6.2~\cite{huber2025encoding}]\label{lem:level2_encoding}
    Let $\N_1$ and $\N_2$ be two level-2, semi-directed networks on $\X$. If $\Sub_{4} (\N_1 ) \simeq \Sub_{4} (\N_2)$, then $\N_1$ and $\N_2$ are isomorphic.
\end{proposition}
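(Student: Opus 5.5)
This result is Theorem~6.2 of~\cite{huber2025encoding}, and we could simply invoke it. Still, we sketch how it follows from the feature-encoding results of \cref{sec:encoding_features}, since the same scheme is what we will use for level~3.

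\textbf{Reduction to simple networks.} By \cref{prop:tree_of_blobs_encoding}, $\N_1$ and $\N_2$ have isomorphic trees-of-blobs $\T$. Fix an internal vertex $v$ of $\T$ of degree $m\geq 3$. Its incident edges split $\X$ into $m$ parts. Choosing one representative leaf from each part gives a set $\Y_v$ such that $\N_i|_{\Y_v}$ is, for both $i$, the simple network obtained by collapsing every pendant subtree of the blob $B_i(v)$ to a single leaf. Any $4$-subset of $\Y_v$ is a $4$-subset of $\X$, so $\Sub_4(\N_1|_{\Y_v})\simeq\Sub_4(\N_2|_{\Y_v})$. The choice of representatives is irrelevant: replacing a representative by another leaf of the same part does not change the isomorphism type of the collapsed blob. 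Hence it suffices to prove the statement for simple networks. One then glues the blobs back along $\T$, checking that the gluing is determined because each blob is attached to $\T$ through the leaf labels. If $v$ is a tree vertex, the blob is a single vertex and there is nothing to do.

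\textbf{The simple case.} Let $\N_1,\N_2$ be simple and of level at most~$2$. By \cref{obs:level_cruciality} both are $2$-crucial, and the level is detected by the quarnets (\cref{lem:level_quarnet}, or directly via \cref{prop:crucial_leaves}). We then apply the feature results with $k=2$, noting that $k+2=4$.
\begin{enumerate}
\item \cref{cor:generator_encoding} gives a common generator $G$.
\item \cref{lem:equivalence_class_encoding} gives leaf-side consistency.
\item \cref{lem:share_sides}(a) shows that the partition of leaves into sets sharing a side is the same in both networks.
\item \cref{lem:ordering}(b) fixes the $uv$-oriented ordering of the leaves on every oriented edge side, and \cref{lem:ordering}(a) fixes the unoriented ordering on every unoriented side.
\end{enumerate}
If $G$ is the level-1 generator, a single cycle, these data already determine the network up to isomorphism. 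Otherwise $G$ is one of the two level-2 generators of \cref{fig:level2_generators}d. For the generator with parallel edges, \cref{lem:backbone} and \cref{lem:parallel_contraction} reduce the problem to the parallel contraction. That contraction is level~1, and is therefore handled by the previous case.

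\textbf{Main obstacle.} The hard step is the remaining generator: the level-2 generator without parallel edges. Leaf-side consistency only places each leaf in a side-\emph{equivalence class}, not on a specific side. Moreover, the orderings on unoriented sides must be chosen compatibly with one \emph{single} reticulation-preserving automorphism of $G$, rather than independently side by side. To settle this, I would list the reticulation-preserving automorphism group of that generator. I expect it to be $\mathbb{Z}_2$, swapping the two symmetric halves, or trivial. For each nontrivial symmetry, I would exhibit a quarnet that consists of two crucial leaves together with two leaves on distinct sides of the same class. Such a quarnet records the relative placement of those two leaves, and this forces the same choice of automorphism in $\N_1$ and $\N_2$. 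Together with the side-sharing and ordering data, this yields $\N_1\cong\N_2$.
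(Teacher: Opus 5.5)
The paper gives no proof of this statement; it is imported from \cite{huber2025encoding}. Your sketch instead rebuilds it from the results of Section~3, following the template the paper uses for level~3. Most of it goes through: the reduction to simple networks, the level-1 case, and the parallel-edge generator. For the last of these you use \cref{lem:backbone,lem:parallel_contraction}, applied with subnetwork size~4; this is legitimate, since cruciality~2 implies 4-crucial. The gap is in the step you flag yourself, the level-2 generator without parallel edges, and the remedy you propose does not work.

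\textbf{The generator and its symmetry.} Write the generator as vertices $a,b,r_1,r_2$ with edges $a\to r_i$, $b\to r_i$ and an undirected edge $ab$. Its reticulation-preserving automorphisms are the identity and the swap $\tau$ of $a$ and $b$. The side-equivalence classes are $\{ar_1,br_1\}$, $\{ar_2,br_2\}$, $\{ab\}$, $\{r_1\}$, $\{r_2\}$, and $ab$ is the only unoriented side.

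\textbf{Why your quarnet carries no information.} Take the reticulation leaves $z_1,z_2$ together with two leaves $x,x'$ on distinct sides of one class. This quarnet is $\tau$-invariant: $\tau$ carries ``$x$ on $ar_1$, $x'$ on $br_1$'' to the swapped placement. So it records only what \cref{lem:share_sides}(a) already gives, and it cannot pin down the automorphism.

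\textbf{What is actually left.} After your steps 1--4 there remain, modulo $\tau$, two relative binary choices, and both need cross-class information.
\begin{itemize}
\item[(i)] Whether a nonempty side into $r_1$ and a nonempty side into $r_2$ have the same tail. This is read off from the quarnet on $\{z_1,z_2,x,w\}$, with $x$ on a side into $r_1$ and $w$ on a side into $r_2$.
\item[(ii)] The orientation of the ordering on $ab$ relative to that tail. No quarnet containing both $z_1$ and $z_2$ can detect this: it has room for only two further leaves, whereas detecting the orientation needs two leaves on $ab$ plus a reference leaf.
\end{itemize}
For (ii) you must drop a reticulation leaf. One option is the level-1 quarnet on $\{z_1,x,y,y'\}$ with $y,y'$ on $ab$. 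More cleanly, apply the level-1 encoding to $\N_i|_{\X\setminus\{z_2\}}$. That network keeps the cycle through $a$, $ab$ and $b$ with the $r_2$-sides hanging off $a$ and $b$, so it fixes (i) and (ii) together. This is the same ``delete a reticulation leaf and use the lower level'' step the paper performs in Cases 2--4 of \cref{lem:generator1-17}.

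Two degenerate cases remain. If no leaves sit on sides into $r_1$, swap the roles of $z_1$ and $z_2$. If neither class carries leaves, $\tau$ absorbs the only remaining choice.
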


The proof of \cref{thm:main-encoding}
is based on a case analysis made on the semi-directed level-3 generators and, as we shall see, 
the results in the previous section will permit us to greatly restrict 
the number of cases that we will need to consider. For the case analysis we first need the full list of semi-directed level-3 generators, shown in \cref{fig:generators}. This list was obtained by Nipius~\cite{nipius2022encoding}, who semi-directed the $65$ rooted level-3 generators listed on \cite{gambetteGenerators} that 
had been previously computed by 
Gambette, Berry, and Paul~\cite{gambette2009structure}.\footnote{By developing new computer code, made available as part of the Python package \textsf{PhyloZoo}~\cite{holtgrefe2026phylozoo}, we 
independently checked the validity of both Nipius' and Gambette's computations.} That these mixed multi-graphs constitute \emph{all} semi-directed level-3 generators was assumed, but not proven, in~\cite{nipius2022encoding}. In light of 
\cref{prop:semidirecting_rooted_generators} it follows that this assumption is justified.

\begin{lemma}\label{lem:17level3_generators}
    There are 17 semi-directed level-3 generators and they are all depicted in \cref{fig:generators}.
\end{lemma}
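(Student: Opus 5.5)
The plan is to obtain the list from the known enumeration of rooted level-3 generators and apply \cref{prop:semidirecting_rooted_generators}. That proposition (with $k=3$) says a mixed multi-graph is a semi-directed level-3 generator exactly when it arises by semi-directing some rooted level-3 generator. So the set of semi-directed level-3 generators, up to isomorphism, is the image of the set of rooted level-3 generators under semi-directing. The task therefore reduces to (i) having a complete list of rooted level-3 generators, (ii) semi-directing each of them, and (iii) sorting the results into isomorphism classes.

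For (i), I will rely on Gambette, Berry and Paul~\cite{gambette2009structure}. They show that every rooted level-$(k+1)$ generator is obtained from a rooted level-$k$ generator by a bounded set of edge-and-vertex insertion rules, followed by discarding non-biconnected results and isomorphic duplicates. Starting from the two rooted level-2 generators of \cref{fig:level2_generators}b, this yields exactly the $65$ rooted level-3 generators listed in~\cite{gambetteGenerators}. Completeness of that list is their theorem; the computation itself has been independently re-verified, as the footnote notes.

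For (ii), I will semi-direct each of the 65 rooted generators: undirect every edge not entering a reticulation, then suppress the root, which has degree 2 after undirecting. By \cref{cor:generator_size_bound}, each result has at most $7$ vertices and $9$ edges, so comparing them is a finite check. For (iii), I will test the outputs pairwise for isomorphism of mixed multi-graphs, respecting edge type, orientation and multiplicity. I will use cheap invariants first (number of reticulation sides, number of pairs of parallel edges, degree sequences, number of undirected edges) to separate most classes, and then give explicit isomorphisms within each remaining class. This should leave exactly the 17 classes drawn in \cref{fig:generators}, which I will also confirm are pairwise non-isomorphic.

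The main obstacle is step (iii). Many different rooted generators collapse to the same semi-directed one, because the rooting position within the source component is forgotten. So the 65-to-17 reduction has to be shown carefully in both directions: every one of the 65 maps onto one of the 17, and every one of the 17 is hit. I would organize this by grouping the 65 rooted generators according to their semi-directed image. For each of the 17 pictured generators, I would list the rooted generators producing it, namely the admissible root placements on its edges, checking acyclicity and the LSA condition. This is a computer-assisted or tabulated finite verification, not a conceptual argument, and I would present it as such, citing \textsf{PhyloZoo}~\cite{holtgrefe2026phylozoo} and Nipius~\cite{nipius2022encoding}. The only conceptual ingredient is \cref{prop:semidirecting_rooted_generators}, which ensures that nothing outside the semi-directed images of the 65 rooted generators is a semi-directed level-3 generator.
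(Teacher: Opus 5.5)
Your plan is essentially the paper's own justification. You semi-direct the 65 rooted level-3 generators of Gambette, Berry and Paul (as Nipius did, re-checked in \textsf{PhyloZoo}), reduce modulo isomorphism to the 17 of \cref{fig:generators}, and invoke \cref{prop:semidirecting_rooted_generators} to rule out any other semi-directed level-3 generator.

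One factual slip needs fixing. There are four rooted level-2 generators, not two: \cref{fig:level2_generators}b shows only two examples, and each of the two semi-directed level-2 generators has two rooted preimages. The enumeration of the 65 therefore starts from all four; for instance, level-3 generators with two pairs of parallel edges cannot arise by adding one reticulation to a level-2 generator without parallel edges. Since you cite the completeness of the 65-list rather than recompute it, your argument is unaffected, but the sentence should be corrected.
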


\begin{figure}[htb]
    \centering
    \input{fig6}
    \caption{The 17 semi-directed level-3 generators. 
    The side~$S'$ of the generator of type~14 is referred to as its \emph{top side}.}
    \label{fig:generators}
\end{figure}

We refer to the number assigned to each of the 17 semi-directed level-3 generators 
in \cref{fig:generators}  as the \emph{type} of the generator.
Observe that generators 1 and 2 have cruciality~1, 
the generators 3--8 have cruciality~2, and the generators 9--17 have cruciality~3.
In the proof of \cref{thm:main-encoding} we  will
only need to be concerned with the generators 
that do not contain parallel edge sides, that is, 
the ones of type 1, 2, 5, 6, 7, 8, 14, 15, 16, and 17. 
Using these observations,  we now prove that
if a simple, strictly level-3 network has a generator with no parallel edge sides
that is not of type 14, then it is encoded by its quarnets. The generator of type~14 will be treated separately in \cref{lem:generator14}.

\begin{lemma}\label{lem:generator1-17}
    Let $\N_1$ and $\N_2$ be two simple, strictly level-3, semi-directed networks on $\X$ having the same generator~$G$ of type 1, 2, 5, 6, 7, 8, 15, 16, or 17. If $\Sub_{4} (\N_1 ) \simeq \Sub_{4} (\N_2)$, then $\N_1$ and $\N_2$ are isomorphic.
\end{lemma}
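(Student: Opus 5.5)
The plan is to show that, for these generator types, quarnets pin down the three features of \cref{sec:encoding_features}: which side each leaf sits on, which leaves share a side, and the order of leaves along each side. Together these determine the network up to isomorphism.

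A simple network with generator $G$ is determined by three pieces of data, taken up to reticulation-preserving automorphisms of $G$:
\begin{enumerate}[label=(\roman*)]
\item the side of $G$ to which each leaf is attached;
\item for each edge side $uv$, the $uv$-oriented ordering of its leaves.
\end{enumerate}
So it suffices to show that $\Sub_4$ fixes this data.

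For generators of type 1 and 2, the cruciality is $1$, so $k+2=3\le 4$ and $k+1\le 4$. Then \cref{lem:equivalence_class_encoding} gives leaf-side consistency and \cref{lem:share_sides}(a) fixes the partition of leaves into sides. \cref{lem:ordering}(b) fixes the oriented orderings on oriented sides, and \cref{lem:ordering}(a) fixes the unoriented orderings on the rest.

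For types 5--8, the cruciality is $2$, so $k+2=4$ and the same three results apply verbatim.

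The real work is for types 15, 16, 17, which have cruciality $3$. Here \cref{lem:equivalence_class_encoding} still applies, since $k+1=4$, so the networks are leaf-side consistent. However, \cref{lem:share_sides}(a) and \cref{lem:ordering}(b) would need quinnets.

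For these three generators I would inspect the side-equivalence classes in \cref{fig:generators} and argue case by case. Leaf-side consistency fixes the reticulation sides and the class of every leaf. If a class consists of a single side, the leaves on it are fixed. For classes with several sides (coming from automorphisms of the generator), I would show that knowing which side carries which leaves is determined by quarnets. The approach is to take two leaves $x,y$ in the same class together with one or two reticulation-side leaves, chosen via \cref{lem:excellent_cycle} so that the induced quarnet is simple or has an informative blob structure. That quarnet should distinguish whether $x,y$ share a side, or which of two symmetric sides each lies on relative to a fixed reticulation.

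Orientation on oriented sides is handled the same way. For a pair $x,y$ on side $uv$, I would find a quarnet on $\{x,y\}$ plus two well-chosen crucial-type leaves in which the oriented side survives with its orientation still visible. For example, one endpoint $u$ might be adjacent to a reticulation whose leaf is included, making $u$ identifiable in the quarnet. Since \cref{lem:ordering}(a) already gives the unoriented order, it is enough to fix the order of one pair per side. Moreover, sides with at most one leaf need nothing.

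Once all three features agree, I would build the isomorphism $N_1\to N_2$ by applying the identity on $G$ (after the normalization from leaf-side consistency) and matching subdivision vertices along each side.

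The main obstacle is the cruciality-3 case: showing, for each of generators 15--17 individually, that a three-leaf crucial set can be traded for a cleverly chosen two-leaf substructure. That substructure must keep enough of the generator rigid to reveal side-sharing and orientation within four leaves. This is exactly what fails for type 14, so the argument must use concrete features of 15--17. I expect this to be a finite but delicate inspection of the figures, checking for each ambiguous pair of symmetric sides that some reticulation-side leaf breaks the symmetry within a quarnet.
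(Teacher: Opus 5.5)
Your treatment of types 1, 5, 6 and 7 is correct and coincides with the paper's first case. There every side is its own side-equivalence class and every edge side is oriented. So leaf-side consistency (\cref{lem:equivalence_class_encoding}) fixes the leaf set of each side, and \cref{lem:ordering}(b), with $k+2\le 4$, fixes each oriented ordering.

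The gap is your claim that the same results apply ``verbatim'' to types 2 and 8. These generators do not satisfy the hypotheses of that easy case: they admit nontrivial reticulation-preserving automorphisms, which is why the paper treats them separately. Since $G$ has no parallel edges, such an automorphism must move some side, so some side-equivalence class contains at least two sides. The invariants you collect are the class of each leaf, the partition of the leaves into sides (\cref{lem:share_sides}(a)), and an ordering per side. Each of these is determined only separately up to automorphism, whereas an isomorphism $\N_1\to\N_2$ needs all of them aligned by a single automorphism of $G$. For example, suppose an automorphism swaps two sides $S',S''$ and also moves another side carrying leaves. Exchanging the blocks on $S'$ and $S''$ while keeping everything else in place preserves all your invariants but in general changes the network. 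Similarly, the unoriented order on an unoriented side is not enough when the automorphism reversing it also moves other leaves; as type 14 shows, relative orientation is exactly what quarnets can miss. Note also that \cref{lem:ordering} assumes $\Y$ lies on the same side of $G$ in both networks, which you have not established inside multi-side classes. So your final step, ``apply the identity on $G$'', is unjustified for these types.

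For types 15--17 you correctly note that \cref{lem:share_sides}(a) and \cref{lem:ordering}(b) are out of reach with quarnets, but you then give a search plan rather than an argument. The missing idea is \cref{lem:level2_encoding}. Deleting one leaf $z$ on a reticulation side gives a level-2 network $\N_i|_{\X\setminus\{z\}}$ whose quarnets are quarnets of $\N_i$, so $\N_1|_{\X\setminus\{z\}}\cong\N_2|_{\X\setminus\{z\}}$. This single isomorphism aligns everything that survives the deletion at once, which your feature-by-feature approach cannot do; it also repairs types 2 and 8. What remains is the position of $z$ and the oriented orderings on the two edge sides entering its reticulation.
\begin{itemize}
\item For types 2 and 8 ($k\le 2$), the paper recovers these from leaf-side consistency, \cref{lem:share_sides}(a) and \cref{lem:ordering}(b).
\item For types 16 and 17, it chooses, for each edge side $S$, a reticulation-side leaf $z$ whose deletion leaves $S$ as an oriented side of the level-2 subnetwork. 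The level-2 isomorphism then fixes the orientation on $S$.
\item For type 15, it combines the deletions of the leaves on two different reticulation sides.
\end{itemize}
Your idea of finding a quarnet on $x,y$ plus two reticulation-side leaves in which the orientation of $S$ stays visible is what this argument supplies implicitly. Done by hand, it must cope with non-simple restrictions and with the alignment problem above, and you have not carried it out for any generator.
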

\begin{proof}    
    Let $k$ be the cruciality of $G$ and note that $k \leq 3$ by \cref{obs:level_cruciality}. By \cref{lem:equivalence_class_encoding}, $\N_1$ and $\N_2$ are leaf-side consistent. We consider four cases, depending on the type of~$G$. Recall that in all cases, $G$ has no parallel edge sides.
    
    \textbf{Case 1:} $G$ is of type 1, 5, 6, or 7. Observe that the following hold for each of these generator types: (i) $k \leq 2$; (ii) each side of~$G$ forms its own equivalence class; (iii) each edge side of~$G$ is oriented. Since $\N_1$ and $\N_2$ are leaf-side consistent, we may assume by (ii) that each side of~$G$ has the same set of leaves attached to it in both $\N_1$ and $\N_2$. Combining \cref{lem:ordering}b with (i) and (iii), it follows that each edge side~$S=uv$ of~$G$ induces the same $uv$-oriented ordering of the set of leaves attached to it in both $\N_1$ and $\N_2$. Thus, $\N_1 \cong \N_2$.

    \textbf{Case 2:} $G$ is of type 2 or 8. Observe that $k \leq 2$ (in particular, $k=1$ if $G$ is of type 2, and $k=2$ if $G$ is of type 8). Let $S$ be the topmost reticulation side of~$G$ according to the planar representation in \cref{fig:generators} (for type~2, this is the sole reticulation side). Since $\N_1$ and $\N_2$ are leaf-side consistent, they both have the same leaf~$x\in \X$ attached to~$S$ (Insight~2.1). Denote by $S'=ur$ and $S'' =v r$ the two oriented edge sides of $G$ incident to the reticulation side~$S$. Let $\Y'$ and $\Y''$ be the sets of leaves that are attached to $S'$ and $S''$, respectively, in $\N_1$. By \cref{lem:share_sides}, $\Y'$ and $\Y''$ are also the sets of leaves attached to $S'$ and $S''$, respectively, in $\N_2$ (Insight~2.2). By \cref{lem:ordering}b, the $ur$-oriented ordering of $\Y'$ induced by $S'$ (respectively, the $vr$-oriented ordering of $\Y''$ induced by $S''$) is the same in $\N_1$ and $\N_2$ (Insight~2.3). Lastly, for $i=1,2$, put $\N_i'\coloneq \N_i|_{X\setminus \{x\}}$. Note that $\N_i'$ is a level-2, semi-directed network. By \cref{lem:level2_encoding}, $\N'_1 \cong \N'_2$ (Insight~2.4). Together, Insights~2.1--2.4 show that $\N_1 \cong \N_2$.

    \textbf{Case 3:} $G$ is of type 15. Observe that $k=3$. Let $S_1$, $S_2$ and $S_3$ denote the reticulation sides of~$G$, ordered from top to bottom and left to right according to the planar representation in \cref{fig:generators}. Let $x, y, z \in \X$ be the leaves attached to $S_1$, $S_2$, and $S_3$, respectively, in~$\N_1$. Since $\N_1$ and $\N_2$ are leaf-side consistent, $x, y, z$ are also attached to $S_1, S_2, S_3$, respectively, in $\N_2$ (Insight~3.1). For $i \in \{1,2\}$, put $\N'_i \coloneq \N_i|_{\X \setminus \{x\}}$. Note that $\N'_i$ is a level-2, semi-directed network. By \cref{lem:level2_encoding}, $\N'_1 \cong \N'_2$ (Insight~3.2). Let $S' = ur$ and $S'' = vr$ be the two oriented edge sides of~$G$ incident to $S_1$. By Insights~3.1 and~3.2, any potential non-isomorphism between $\N_1$ and $\N_2$ can only arise from a difference in the $ur$-oriented or $vr$-oriented orderings induced by $S'$ or $S''$, respectively, of the leaves attached to them. Now put $\N''_i \coloneq \N_i|_{\X \setminus \{z\}}$, for $i \in \{1, 2\}$. Note that the oriented edge sides~$S', S''$ correspond to oriented edge sides in $\N''_i$. By \cref{lem:level2_encoding} and since $\N''_1$ and $\N''_2$ are level-2, semi-directed networks, $\N''_1 \cong \N''_2$. So, the oriented orderings induced by these two sides of the leaves attached to them coincide in $\N''_1$ and $\N''_2$, and hence also in $\N_1$ and $\N_2$. It follows that $\N_1 \cong \N_2$.

    \textbf{Case 4:} $G$ is of type 16 or 17. Observe that (i) $k = 3$; (ii) each side of $G$ forms its own equivalence class; (iii) each edge side is oriented. Since the two networks are leaf-side consistent, each side~$S$ has the same set of leaves~$\Y_S$ attached to it in both $\N_1$ and $\N_2$. So, by (ii), any potential non-isomorphism between $\N_1$ and $\N_2$ can only arise from some edge side~$S$ (which is oriented by (iii)) having a different induced oriented ordering of $\Y_S$ in $\N_1$ and $\N_2$. Note that for each such edge side~$S$, there is one leaf~$z$ attached to a reticulation side in both $\N_1$ and $\N_2$, such that the leaves in $\Y_S$ are still attached to an oriented side~$S'$ in $\N'_1 \coloneq \N_1|_{\X \setminus \{z\}}$ and $\N'_2 \coloneq \N_2|_{\X \setminus \{z\}}$. Since $\N'_1$ and $\N'_2$ are level-2, semi-directed networks, we get $\N'_1 \cong \N'_2$ by \cref{lem:level2_encoding}. Thus, the oriented orderings of $\Y_S$ induced by $S'$ coincide in $\N'_1$ and $\N'_2$. Hence, the oriented orderings of $\Y_S$ induced by~$S$ also coincide in $\N_1$ and $\N_2$. Therefore, $\N_1 \cong \N_2$.
\end{proof}

We now focus on the generator of type~14.
We refer to the side of generator 14 labeled by $S'$ in \cref{fig:generators}
as its \emph{top side}. By \emph{reversing the top side} of a simple network~$\N$ with a generator of type~14, we mean reversing the order in which the leaves are attached to its top side. In addition,
we say that two simple networks that both have a type~14 generator 
are \emph{isomorphic modulo reversing the top side} if they 
are isomorphic or they become isomorphic 
after reversing the top side for one of the networks.

The type~14 generator is the one problematic case. \Cref{fig:counterexample} exhibits, for every $m \geq 2$, a pair of non-isomorphic, simple, level-3 networks with a type~14 generator that induce exactly the same set of quarnets; they are isomorphic modulo reversing the top side, but not isomorphic. The special case $m=2$ is Figure~2 of~\cite{huber2025encoding}. The following lemma shows that this is the only obstruction: quarnets determine two simple type~14 networks up to reversing the top side, while quinnets determine them completely.

\begin{figure}[htb]
    \centering
    \input{fig7}
    \caption{(a): Two level-3, semi-directed networks~$\N_1$ and~$\N_2$ on leaf set $\{a_1, \ldots, a_m, b_1, b_2, b_3\}$ ($m\geq 2$), both having the same generator of type~14. $\N_1$ and $\N_2$ are not isomorphic but are isomorphic modulo reversing the top side. (b): The types of quarnets induced by both $\N_1$ and $\N_2$, showing that $\N_1$ and $\N_2$ have the same sets of induced quarnets. The bottom left quarnet type does not appear when $m = 2$ and the bottom right quarnet type does not appear when $m \in \{2, 3\}$. The special case of this example where $m=2$ appeared as Figure~2 in~\cite{huber2025encoding}. The indices satisfy $i, j, k, \ell \in \{1, \ldots, m\}$ with $i < j < k < \ell $ and $p,q \in \{1,2,3\}$ with $p < q$.}
    \label{fig:counterexample}
\end{figure}

\begin{lemma}\label{lem:generator14}
    Let $\N_1$ and $\N_2$ be two simple, strictly level-3, semi-directed networks on $\X$ having the same generator~$G$ of type~14. 
    \begin{enumerate}[label=(\alph*)]
        \item If $\Sub_{4} (\N_1 ) \simeq \Sub_{4} (\N_2)$, then $\N_1$ and $\N_2$ are isomorphic modulo reversing the top side.
        \item If $\Sub_{5} (\N_1 ) \simeq \Sub_{5} (\N_2)$, then $\N_1$ and $\N_2$ are isomorphic.
    \end{enumerate}
\end{lemma}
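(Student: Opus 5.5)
The plan is to reduce both parts to the feature-encoding results of \cref{sec:encoding_features}, used together with \cref{lem:level2_encoding} in the same way as in Cases~2--4 of the proof of \cref{lem:generator1-17}. A single side of generator~14, the top side~$S'$, resists this reduction. It is the only place where quarnets and quinnets behave differently. Throughout, $k=3$ is the cruciality of~$G$ and there are no parallel edge sides. Hence, by \cref{lem:equivalence_class_encoding} (which needs only $(k+1)$-nets, i.e.\ quarnets), $\N_1$ and $\N_2$ are leaf-side consistent under either hypothesis. Since every quarnet is a subnetwork of some quinnet, $\Sub_5(\N_1)\simeq\Sub_5(\N_2)$ implies $\Sub_4(\N_1)\simeq\Sub_4(\N_2)$. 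Consequently, everything established for~(a) is also available in~(b).

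For~(a), I first read off from \cref{fig:generators} the side-equivalence classes of the type~14 generator and which edge sides are oriented. Each reticulation side is its own class, so the three leaves $x,y,z$ on the reticulation sides are the same in $\N_1$ and $\N_2$. For every edge side $S\neq S'$, I then pick a reticulation leaf $w\in\{x,y,z\}$ with the following property. In $\N_i|_{\X\setminus\{w\}}$ the leaves on~$S$ still lie together on a single oriented side, and leaves on distinct sides of the same class in~$G$ remain on distinguishable sides. The restricted networks $\N_i|_{\X\setminus\{w\}}$ are level-2, so \cref{lem:level2_encoding} gives $\N_1|_{\X\setminus\{w\}}\cong\N_2|_{\X\setminus\{w\}}$. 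This fixes, for each such~$S$, both the set of leaves attached to it and their oriented ordering, consistently in $\N_1$ and $\N_2$. For the top side~$S'$, the same argument (or leaf-side consistency plus the complement of the other sides) fixes the set $\Y$ of leaves on~$S'$. \Cref{lem:ordering}a then fixes the unoriented ordering of~$\Y$. Combining these facts, $\N_1$ and $\N_2$ agree everywhere except possibly in the orientation of the ordering along~$S'$. This is exactly the statement that they are isomorphic modulo reversing the top side.

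For~(b), it remains only to fix the orientation along~$S'$. From the figure, $S'$ is an oriented edge side; this is consistent with \cref{fig:counterexample}, where reversing it changes the isomorphism class. So \cref{lem:ordering}b applies with $k+2=5$: quinnets determine the $uv$-oriented ordering of~$\Y$ along~$S'$. Together with part~(a), this gives $\N_1\cong\N_2$.

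The main obstacle is the side-by-side verification in~(a). For each non-top edge side one must exhibit a reticulation leaf whose deletion leaves that side oriented, and in particular does not merge it with a symmetric side. The generator has symmetries, so sides in a common equivalence class must be separated. Here I would also use \cref{lem:share_sides}b where it applies, and otherwise argue directly from quarnets containing two reticulation leaves. I would also confirm that for~$S'$ every such deletion destroys its orientation, which explains why~(a) cannot be strengthened.
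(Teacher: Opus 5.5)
Your proposal is correct and follows essentially the same route as the paper. Both arguments use leaf-side consistency via \cref{lem:equivalence_class_encoding}, the unoriented top-side ordering via \cref{lem:ordering}a, and the oriented orderings on all other edge sides by deleting a suitable reticulation leaf and applying \cref{lem:level2_encoding}; they then finish with \cref{lem:ordering}b at $k+2=5$ for the top side. Your extra care about side-equivalence classes containing several sides is unnecessary here, because every side of the type~14 generator forms its own class and every edge side is oriented; likewise \cref{lem:share_sides}b never applies, since there are no parallel edge sides.
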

\begin{proof}
    As observed at the beginning of this section, the cruciality of~$G$ is 3. Moreover, each of its sides forms its own side-equivalence class, and each edge side is oriented.
    
    (a) Suppose that $\Sub_{4} (\N_1 ) \simeq \Sub_{4} (\N_2)$. By \cref{lem:equivalence_class_encoding}, $\N_1$ and $\N_2$ are leaf-side consistent---and thus each side of~$G$ has the same leaves attached to it in both $\N_1$ and $\N_2$. By \cref{lem:ordering}a, the unoriented orderings induced by each edge side of $G$ of the leaves attached to it are the same in $\N_1$ and $\N_2$. In particular, this means that the unoriented ordering induced by the top side of~$G$ of the leaves attached to it is the same in $\N_1$ and $\N_2$. Let~$S$ be an edge side that is not the top side of~$G$ and let $\Y_S$ be the set of leaves attached to $S$ in $\N_1$ and $\N_2$. Observe that $S$ is oriented and that there is a leaf~$z \in \X$ attached to a reticulation side of~$G$ such that the 
    leaves in $\Y_S$ are still attached to an oriented side~$S'$ in $\N'_1 \coloneq \N_1|_{\X \setminus \{z\}}$ 
    and $\N'_2 \coloneq \N_2|_{\X \setminus \{z\}}$. Since $\N'_1$ and $\N'_2$ are level-2, semi-directed networks, we have $\N'_1 \cong \N'_2$ by \cref{lem:level2_encoding}. Thus, the oriented orderings of $\Y_S$ induced by $S'$ coincide in $\N'_1$ and $\N'_2$. Hence, the oriented orderings of $\Y_S$ induced by $S$ also coincide in $\N_1$ and $\N_2$. It follows that $\N_1$ and $\N_2$ are isomorphic modulo reversing the top side.

    (b) Suppose that $\Sub_{5} (\N_1 ) \simeq \Sub_{5} (\N_2)$. This implies that $\Sub_{4} (\N_1 ) \simeq \Sub_{4} (\N_2)$, so $\N_1$ and $\N_2$ are isomorphic modulo reversing the top side. By \cref{lem:ordering}b, the induced oriented orderings of the leaves attached to the top side of~$G$ are also the same for $\N_1$ and $\N_2$. Hence, $\N_1 \cong \N_2$.
\end{proof}

Putting everything together,
we now prove the main result (\cref{thm:main-encoding}), which concerns networks that need not be simple. 
To this end, we first naturally extend the notion of reversing the top side of a type~14 generator to the blobs of a non-simple network. 

Let $B$ be an $m$-blob ($m \geq 3$) of a level-3, semi-directed network~$\N$ on~$\X$. Removing $B$ partitions $\X$ into the leaf sets $X_1, \ldots, X_m$ of the remaining components; picking one leaf $x_i \in X_i$ from each yields a simple network $\N_B \coloneq \N|_{\{x_1, \ldots, x_m\}}$, whose generator~$G_B$ does not depend on this choice. We call $B$ a \emph{type~14 blob} if $G_B$ is the generator of type~14. Along the edges of~$B$ corresponding to the top side of~$G_B$ hang pendant subgraphs of $\N$ in a linear order, one for each leaf attached to the top side in the network~$\N_B$. By \emph{reversing the top side} of a type~14 blob we mean reversing the order of these pendant subgraphs, leaving the rest of $\N$ unchanged. Note that this is a slight generalization of the operation illustrated in \cref{fig:counterexample}, where each leaf on the top side may now stand for a larger pendant part of the network. Finally, two level-3 networks are \emph{isomorphic modulo reversing the top sides of type~14 blobs} if one becomes isomorphic to the other after reversing the top side of some (possibly empty) subset of its type~14 blobs.

\begin{theorem}\label{thm:main-encoding}
Let $\mathcal{L}_3$ be the class of binary, level-3, semi-directed networks. Then,
\begin{enumerate}[label=(\alph*)]
    \item $\mathcal{L}_3$ is encoded by quarnets, modulo reversing the top side of the type~14 blobs; and
    \item $\mathcal{L}_3$ is encoded by quinnets.
\end{enumerate}
\end{theorem}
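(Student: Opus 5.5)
The plan is to reduce from arbitrary level-3 networks to simple ones, and then combine the simple-case lemmas of this section. Let $\N_1\in\mathcal{L}_3$ and let $\N_2$ be any binary semi-directed network with $\Sub_4(\N_1)\simeq\Sub_4(\N_2)$ (for (b) we assume $\Sub_5$, which implies $\Sub_4$).

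\textbf{Step 1: $\N_2$ is level-3 and the blob structure matches.}
First we show that $\N_2$ is also level-3. If $\N_2$ were strictly level-$k$ with $k\ge4$, \cref{lem:level_quarnet} with $k'=4$ would give a $k''$-net with $k''\le 4$ that is not level-3. Padding it to a quarnet (or using the convention on $\Sub_m$ for small $m$) yields a quarnet of $\N_2$ that is not level-3. But every quarnet of $\N_1$ is level-3, since taking subnetworks never raises the level. This is a contradiction.

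Next, \cref{prop:tree_of_blobs_encoding} gives $\T_1\cong\T_2$, identified via the leaf labels. So the blobs of the two networks correspond, and each $m$-blob $B$, $m\ge3$, induces the same partition $X_1,\dots,X_m$ of $\X$. Choosing representatives $x_i\in X_i$ gives simple networks $\N_{1,B}=\N_1|_{\{x_i\}}$ and $\N_{2,B}$. Every $k$-net of $\N_{j,B}$ is a $k$-net of $\N_j$. Hence $\Sub_4(\N_{1,B})\simeq\Sub_4(\N_{2,B})$, and likewise for $\Sub_5$ in part (b).

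\textbf{Step 2: the simple case.}
Here $\N_{1,B}$ and $\N_{2,B}$ are simple and of level at most 3.
\begin{itemize}
\item If either has level at most 2, a level comparison via \cref{lem:level_quarnet} or \cref{prop:crucial_leaves} shows that both do, and \cref{lem:level2_encoding} applies.
\item Otherwise both are strictly level-3 and hence $3$-crucial by \cref{obs:level_cruciality}. \Cref{cor:generator_encoding} with $k=3$ would need $\Sub_3$; we instead use $\Sub_4\supseteq$ information, since $3$-nets are subnetworks of quarnets. This gives a common generator $G$.
\end{itemize}
Once $G$ is fixed:
\begin{itemize}
\item If $G$ has no parallel edge sides, apply \cref{lem:generator1-17}, or \cref{lem:generator14} for type~14.
\item If $G$ has parallel edge sides, \cref{lem:equivalence_class_encoding} and \cref{lem:backbone} (which need $\Sub_{k+1}$ with $k\le3$, so $\Sub_4$) give leaf-side consistency and isomorphic backbones. \Cref{lem:parallel_contraction} then transfers quarnet (or quinnet) equality to the parallel contractions $\N^c_{j,B}$. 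These have smaller level and no parallel edge sides, and are handled by the previous cases or by level-2 encoding. Reinserting the reduced leaves is determined by the isomorphic backbones and \cref{lem:share_sides}b, \cref{lem:ordering}c.
\end{itemize}
So each $\N_{1,B}\cong\N_{2,B}$, modulo reversing the top side when $G$ is of type~14 and only quarnets are assumed.

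\textbf{Step 3: gluing.}
Finally we glue the blobs together by induction on the number of blobs. Cut $\T_1$ at an internal edge and replace each side by a single representative leaf. The two resulting smaller networks have quarnet sets contained in those of $\N_j$, so induction applies to each. We then recombine them along the cut edge.

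The main obstacle is this gluing. A blob's structure in $\N$ is its structure in $\N_B$ with each leaf $x_i$ replaced by the pendant part on $X_i$. We must check that the isomorphisms obtained for different blobs agree on shared pendant parts, and that the choice of representative does not matter. The concern is that automorphisms of $G_B$ might be chosen inconsistently across blobs. The point to exploit is that leaf-side consistency fixes sides by leaf labels, and the paper's convention fixes the identification of generators. This should make the per-blob isomorphisms canonical on labelled leaves, so the isomorphisms compose, with the type~14 reversals applied blobwise.
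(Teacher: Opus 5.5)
Your proposal is correct and follows the paper's proof. It uses \cref{lem:level_quarnet} to stay inside $\mathcal{L}_3$ and \cref{prop:tree_of_blobs_encoding} to reduce to the simple per-blob networks; it then applies \cref{cor:generator_encoding}, \cref{lem:equivalence_class_encoding}, \cref{lem:backbone} with \cref{lem:parallel_contraction}, \cref{lem:level2_encoding}, and finally \cref{lem:generator1-17,lem:generator14}. Your Step~3 is more explicit than the paper, which just says that treating an arbitrary blob suffices, and the inconsistency you worry about cannot arise. All isomorphisms are the identity on $\X$, so each piece's isomorphism must send the attachment vertex of a representative leaf to the corresponding attachment vertex; the pieces therefore glue along the undirected cut-edges with no choices to reconcile.
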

\begin{proof}
    For~(a), first observe that reversing the top side of a type~14 blob leaves the quarnets unchanged. This is easily verified from \cref{fig:counterexample} for a simple network with a single type~14 blob, and the general case follows.

    Also observe that, by \cref{lem:level_quarnet}, a network not in $\mathcal{L}_3$ cannot have the same quarnets (or quinnets) as a network in $\mathcal{L}_3$. Hence, to show the claimed encoding results, it suffices to consider two 
    semi-directed networks $\N_1$ and $\N_2$ on $\X$ from the class $\mathcal{L}_3$.
    
    (a) Suppose that $\Sub_{4} (\N_1 ) \simeq \Sub_{4} (\N_2)$. Let $B_1$ be an arbitrary blob of $\N_1$. As above, observe that the removal of~$B_1$ induces a partition $\{X_1, \ldots, X_m\}$ of the leaf set $\X$ (with $m \geq 3$, $\X=\bigcup_{1 \leq i \leq m} X_i$, $X_i \subseteq \X$ and $X_i \cap X_j = \emptyset$, $1 \le  i \neq j \le m$). By \cref{prop:tree_of_blobs_encoding}, $\N_1$ and $\N_2$ have isomorphic trees-of-blobs, so there also exists a corresponding blob~$B_2$ of $\N_2$ that induces the same partition of $\X$. Pick a set of leaves~$\{x_1, \ldots, x_m \}$ such that $x_i \in X_i$ for each $1 \leq i \leq m$. Put $\N'_1 \coloneq \N_1|_{\{x_1, \ldots, x_m \}}$ and $\N'_2 \coloneq \N_2|_{\{x_1, \ldots, x_m \}}$. Since we chose an arbitrary blob~$B_1$ of $\N_1$, it now suffices to show that $\N'_1$ and $\N'_2$ are isomorphic modulo reversing the top side of the type~14 blobs.

    Observe that $\N'_1$ and $\N'_2$ are simple and that their cruciality is at most~3 (\cref{obs:level_cruciality}). By \cref{cor:generator_encoding}, they have the same generator~$G$. By \cref{lem:equivalence_class_encoding}, $\N'_1$ and $\N'_2$ are leaf-side consistent. By combining \cref{lem:backbone,lem:parallel_contraction}, we may assume that $\N'_1$ and~$\N'_2$ are their own parallel contraction or, equivalently, that~$G$ contains no parallel edge sides. If $\N'_1$ and $\N'_2$ are level-2, the result follows from \cref{lem:level2_encoding}, so we may assume that the two networks are strictly level-3. If $G$ is of type 1, 2, 5, 6, 7, 8, 15, 16, or~17, $\N'_1$ and $\N'_2$ are isomorphic by \cref{lem:generator1-17}. If~$G$ is of type~14, the two networks are isomorphic modulo reversing the top side by \cref{lem:generator14}a. This proves Part~(a).

    (b) Suppose that $\Sub_{5} (\N_1 ) \simeq \Sub_{5} (\N_2)$, which implies $\Sub_{4} (\N_1 ) \simeq \Sub_{4} (\N_2)$. The proof is analogous to the proof of Part~(a) but using \cref{lem:generator14}b instead of \cref{lem:generator14}a when~$G$ is of type~14.
\end{proof}

\begin{remark}
If a network has fewer than four (respectively, five) leaves, then it possesses no quarnets (respectively, quinnets). Recall from the definition of encoding in \cref{sec:prelims} that, whenever a network has fewer than $k$ leaves, encoding by $k$-nets is understood to mean encoding by its $n$-nets, where $n$ is the number of leaves. Thus, in \cref{thm:main-encoding}, a network in $\mathcal{L}_3$ with fewer than five leaves is understood to be encoded by its unique subnetwork on all of its leaves.
\end{remark}

\section{Discussion}\label{sec:discussion}

We have shown that the class of binary, level-3, semi-directed networks is encoded by quinnets, and that its quarnets determine such a network exactly up to reversing the top side of its type~14 generators. In some definitions of semi-directed networks and their subnetworks, parallel edges and/or 2-blobs are kept intact (see, e.g., \cite{Allman2025a}). Without formalizing this setting, we note that our approach extends naturally to this framework and can be used to prove similar results in that framework too.
For the encoding and reconstruction problems we consider, maintaining parallel edges and 2-blobs actually simplifies the task of determining the generator of a blob. In particular, a leaf is no longer required to be attached to a side consisting of parallel edges in order to preserve those edges.
Similarly, a single subnetwork can encode information about many of the blobs in the remainder of the network, since the 2-blobs are not suppressed. Consequently, subnetworks carry strictly more information in this scenario.

It is also of interest to further formalize the connection to the rooted setting. Rooted level-2 networks are encoded by their trinets (3-nets)~\cite{vanIersel2014} and rooted level-3 networks by their quarnets~\cite{nipius2020rooted}, whereas level-2 and level-3 semi-directed networks need quarnets~\cite{huber2025encoding} and quinnets~(\cref{thm:main-encoding}), respectively. Hence, it appears that semi-directing costs one leaf when considering subnetwork encodings. A natural attempt at a formal reduction is via an \emph{outgroup} taxon: a rooted network on~$\X$ can be represented by a semi-directed network on $\X \cup \{o\}$, with the additional leaf~$o$ attached at the root. However, neither direction transfers immediately, since the $k$-nets of such a semi-directed network include subnetworks without~$o$, which carry no information about the root location, while conversely a semi-directed network need not have a leaf attached to its `source component'~\cite{holtgrefe2026characterizing} that could play the role of~$o$ in a reduction. Whether some reduction can nonetheless be made to work, or whether the two hierarchies are genuinely different would be interesting to
better understand.

Another direction is more algorithmic. For level-1, semi-directed networks, \cite{frohn2025reconstructing} showed that an $n$-leaved network can be reconstructed using only $\Oh(n \log n)$ quarnet queries, far fewer than the $\Theta(n^4)$ quarnets available. Nothing comparable is known beyond level-1, so it would already be interesting to know whether a level-2 network can be built from its quarnets in polynomial time, and, in view of \cref{thm:main-encoding}, whether the same can be done for a level-3 network from its quinnets, and how many queries are needed in either case.

Perhaps the most natural open question is whether encoding results of the kind proved here extend beyond level-3. The results of \cref{sec:encoding_features} hold for networks of arbitrary level, and it is only the final step---for which we rely on a case analysis over the level-3 generators---that is specific to level-3. Since the number of level-$k$ generators grows rapidly with~$k$~\cite{gambette2009structure,gambetteGenerators}, an argument along these lines is unlikely to settle the question in general, and a different approach appears to be needed.

Finally, we note that encoding is not the only obstacle to inferring a level-3 network in practice. Whether the $k$-nets themselves can be distinguished from one another using data is a separate question of statistical identifiability, an active area of research. This topic is quite well understood for level-1 networks but far less so for higher-level networks. The results available beyond level-1 are
restricted to particular classes of
networks~\cite{englander2025identifiability,Holtgrefe2025a,Allman2025a}, none of which cover level-3 networks in general. Almost all of this work, moreover, relies on
four-taxon summaries, with the corresponding theory for five-taxon summaries only beginning
to be developed~\cite{cummings2026identifiability}.

\paragraph{Data availability statement} 
The independent verification of the level-3 generator computations of~\cite{gambette2009structure} and~\cite{nipius2022encoding} in \cref{sec:encoding_level3} has been implemented in the open-source Python package \textsf{PhyloZoo}~\cite{holtgrefe2026phylozoo}.

\paragraph{Funding} NH and LvI were supported by grant OCENW.M.21.306 of the Dutch Research Council~(NWO). KTH and VM thank the TU Delft
Discrete Mathematics and Optimization group for hosting them during
visits where some of the work was carried out, and the aforementioned
NWO grant for helping to support their visits.

\printbibliography[title = References]

\end{document}